\pgfplotsset{compat=newest}
\numberwithin{equation}{section}
\title{Generalized bulk-interface correspondence for non-quantized spin transport}
\author{Jiayu Qiu\thanks{Department of Mathematics, 
 HKUST,  Clear Water Bay, Kowloon, Hong Kong SAR, China, jqiuaj@connect.ust.hk.}\,\,\,\, Hai Zhang\thanks{Department of Mathematics, HKUST, Clear Water Bay, Kowloon, Hong Kong SAR, China, haizhang@ust.hk. HZ was partially supported by Hong Kong RGC grant GRF 16307024 and NSFC grant 12371425.}}
\date{\today}
\newtheorem{theorem}{Theorem}[section]
\newtheorem{lemma}[theorem]{Lemma}
\newtheorem{definition}[theorem]{Definition}
\newtheorem{corollary}[theorem]{Corollary}
\newtheorem{proposition}[theorem]{Proposition}
\newtheorem{assumption}[theorem]{Assumption}
\begin{document}

\maketitle

\begin{abstract}
This paper establishes a rigorous mathematical framework for a generalized bulk-interface correspondence (BIC) in electronic systems with possibly nonconserved spin charge, where the Hamiltonian and spin operator do not commute. We first introduce the bulk spin conductance as a character of the bulk medium, which is defined as a potential-current correlation function and is not quantized if the spin charge is nonconserved. Then we establish the principle of BIC, which states that the difference of bulk spin conductances across an interface equals the sum of two quantities associated with the spin transport along the interface: the spin-drift conductance, which captures spin transport carried by interface modes, and the spin-torque conductance, which accounts for spin generation near the interface due to the non-conservation of spin. Furthermore, when the spin charge is conserved, our result recovers the existing BIC based on the spin Chern number or Fu-Kane-Mele $\mathbb{Z}_2$ index. Our findings demonstrate that the principle of BIC is not restricted to systems with quantized characters and provides new insights into spin transport phenomena.
\end{abstract}

\section{Introduction and main results}
\subsection{Introduction}
Starting from the discovery of the celebrated quantum Hall effect, the study of topological matter and topological systems has garnered significant interest due to their exotic properties and unique physical phenomena \cite{Klitzing80qhe,vonKlitzing2020forty_years,stone1992quantum}. For instance, systems with broken time-reversal (TR) symmetry are often characterized by an integer-valued topological invariant known as the Chern number \cite{Vanderbilt2019Berry_phase}. One of the most remarkable phenomena associated with this characterization is the emergence of edge states at the interface between two systems with different Chern numbers, known as the bulk-interface correspondence (BIC) principle. The principle of BIC states that the net number of edge states equals precisely the difference of Chern numbers of the two bulk systems. The rigorous mathematical theory on the principle of BIC based on the Chern number is classified into two groups: the first group studies electronic systems in discrete structures (the so-called tight-binding models) \cite{avila2013shortrange+transfer,elgart2005shortrange+functional,graf2018shortrange+transfer,graf2013shortrange+scattering,Kellendonk02landau+ktheory,ludewig2020shortrange+coarse,EG2002bec_discrete_functional,drouot2024bec_curvedinterfaces,de2016spectral_flow,Bra2019bec_discrete_K,BKR2017bec_discrete_K,AMZ2020bec_discrete_K,kubota2017bec_discrete_K}, and the second focuses on continuous systems which are governed by differential operators, including Schrödinger operators \cite{taarabt2014landau+functional,cornean2021landau+functional,CG2005bec_schrodinger_functional,DGR2011bec_schrodinger_functional,Kellendonk2004landau+ktheory,BR2018bec_continuous_K,gontier2023edge}, Dirac operators \cite{bal2019dirac+functional,QB2024bec_dirac_functional,bal2022bec_dirac_functional_symbol}, and those arising in other physical contexts such as photonics and phononics \cite{qiu2025bulk,lin2022transfer,thiang2023transfer,drouot2021microlocal}. When the systems possess an unbroken TR symmetry, the topological characterization based on the Chern number fails. In that case, a $\mathbb{Z}_2$ topological character is available for certain systems with an
internal degree of freedom (DOF) such as spin, particle-hole, or sublattice structure \cite{de2016spectral_flow}, and the principle of BIC based on the $\mathbb{Z}_2$ index has been mathematically studied in \cite{avila2013shortrange+transfer,BKR2017bec_discrete_K,AMZ2020bec_discrete_K,kubota2017bec_discrete_K}. Beyond the intriguing mathematics associated with the quantized index, the principle of BIC has attracted attention for its practical application: determined by the quantized index, the edge states are robust under perturbation applied to the system, which leads to novel devices used in energy and information transportation \cite{bernivig13topo_insulator}.

Nonetheless, the BIC based on the Chern number or $\mathbb{Z}_2$ index does not apply when a system lacks such quantized characters. For example, consider an electronic system that preserves TR symmetry and lacks internal DOF. The first condition ensures that the Chern number vanishes, while the second prevents the $\mathbb{Z}_2$ index from serving as a well-defined bulk topological invariant \cite{de2016spectral_flow}.


\begin{figure}
\centering
\begin{tikzpicture}[scale=0.35]
\foreach \x in {-8,-6,-4} { 
        \foreach \y in {-7,-5,-3,-1,1,3,5,7} {
            \draw[fill=black,opacity=1] (\x,\y) ellipse(0.5 and 0.5);
            \draw[dashed] ({\x-1},{\y+1}) rectangle ({\x+1},{\y-1});
        }
    }
\node[above,scale=0.8] at (-6,9) {Left bulk};

\foreach \x in {4,7} { 
        \foreach \y in {7,2,-3} {
            \draw[fill=black,opacity=1] (\x,\y) ellipse(0.3 and 0.8);
            \draw[fill=black,opacity=1] ({\x+1},{\y-3}) ellipse(0.8 and 0.3);
            \draw[dashed] ({\x-1},{\y+1}) rectangle ({\x+2},{\y-4});
        }
    }
\node[above,scale=0.8] at (6,9) {Right bulk};

\draw[dashed] (0,-8.5)--(0,8.5);
\draw[fill=black,opacity=1] (-2,6) ellipse(0.5 and 0.5);
\draw[fill=black,opacity=1] (1,3) ellipse(0.5 and 0.5);
\draw[fill=black,opacity=1] (-1,-3) ellipse(0.5 and 0.5);
\draw[fill=black,opacity=1] (1,-4) ellipse(0.5 and 0.5);
\draw[fill=black,opacity=1] (0,2) ellipse(0.3 and 0.8);
\draw[fill=black,opacity=1] (2,1) ellipse(0.3 and 0.8);
\draw[fill=black,opacity=1] (-2,-2) ellipse(0.3 and 0.8);
\draw[fill=black,opacity=1] (-1,-7) ellipse(0.3 and 0.8);
\draw[fill=black,opacity=1] (-1,5) ellipse(0.8 and 0.3);
\draw[fill=black,opacity=1] (-2,2) ellipse(0.8 and 0.3);
\draw[fill=black,opacity=1] (0,-5) ellipse(0.8 and 0.3);
\draw[fill=black,opacity=1] (1,-6) ellipse(0.8 and 0.3);
\draw[fill=black,opacity=1] ({-1-0.3},{4+0.3}) rectangle ({-1+0.3},{4-0.3});
\draw[fill=black,opacity=1] ({1-0.3},{6+0.3}) rectangle ({1+0.3},{6-0.3});
\draw[fill=black,opacity=1] ({0-0.3},{-1+0.3}) rectangle ({0+0.3},{-1-0.3});
\draw[fill=black,opacity=1] ({2-0.3},{-3+0.3}) rectangle ({2+0.3},{-3-0.3});

\node[right,scale=0.7] at (0,0) {$(0,0)$};
\node[above,scale=0.8] at (0,9) {Interface region};
\end{tikzpicture}
\caption{An interface model: two bulk mediums are joined along an interface. The bulk media are assumed to have rectangular lattice structures, and impurities are allowed near the interface.}
\label{fig_interface_model}
\end{figure}
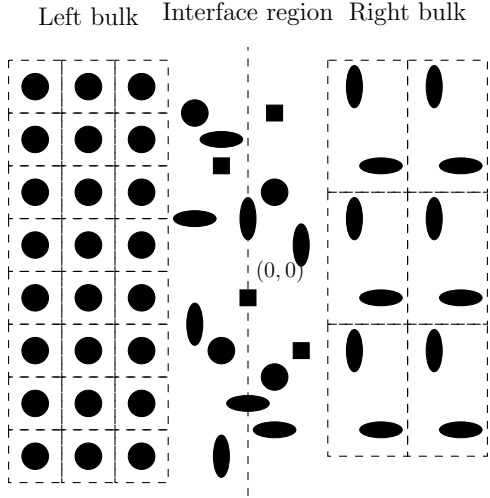

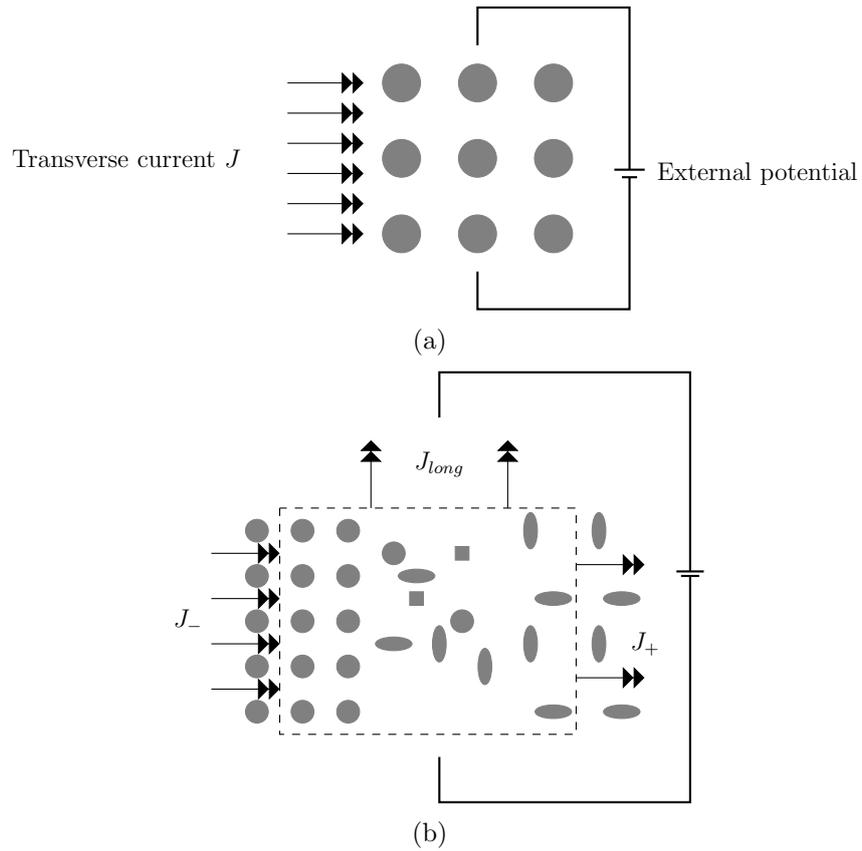
\begin{figure}
\centering
\subfigure[]{ 
\label{fig_transverse current}
\begin{tikzpicture}[scale=0.5]
\foreach \x in {-2,0,2} { 
        \foreach \y in {0,2,4} {
            \draw[fill=black,opacity=0.5] (\x,\y) ellipse(0.5 and 0.5);
        }
    }
\draw[thick] (0,-1)--(0,-2)--(4,-2)--(4,1.5);
\draw[thick] (3.8,1.5)--(4.2,1.5);
\draw[thick] (3.6,1.7)--(4.4,1.7);
\draw[thick] (4,1.7)--(4,6)--(0,6)--(0,5);
\node[right,scale=0.8] at (4.5, 1.6) {External potential};

\foreach \y in {0,0.8,1.6,2.4,3.2,4} { 
        \draw[>=triangle 90, ->>] (-5,\y)--(-3,\y);
    }
\node[left,scale=0.8] at (-6,2) {Transverse current $J$};
\end{tikzpicture}
}
\subfigure[]{ 
\label{fig_thought experiment}
\begin{tikzpicture}[scale=0.3]
\foreach \x in {-8,-6,-4} { 
        \foreach \y in {-1,1,3,5,7} {
            \draw[fill=black,opacity=0.5] (\x,\y) ellipse(0.5 and 0.5);
        }
    }
    
\foreach \x in {4,7} { 
        \foreach \y in {7,2} {
            \draw[fill=black,opacity=0.5] (\x,\y) ellipse(0.3 and 0.8);
            \draw[fill=black,opacity=0.5] ({\x+1},{\y-3}) ellipse(0.8 and 0.3);
        }
    }

\draw[fill=black,opacity=0.5] (-2,6) ellipse(0.5 and 0.5);
\draw[fill=black,opacity=0.5] (1,3) ellipse(0.5 and 0.5);
\draw[fill=black,opacity=0.5] (0,2) ellipse(0.3 and 0.8);
\draw[fill=black,opacity=0.5] (2,1) ellipse(0.3 and 0.8);
\draw[fill=black,opacity=0.5] (-1,5) ellipse(0.8 and 0.3);
\draw[fill=black,opacity=0.5] (-2,2) ellipse(0.8 and 0.3);
\draw[fill=black,opacity=0.5] ({-1-0.3},{4+0.3}) rectangle ({-1+0.3},{4-0.3});
\draw[fill=black,opacity=0.5] ({1-0.3},{6+0.3}) rectangle ({1+0.3},{6-0.3});

\draw[dashed] (-7,8) rectangle (6,-2);
\foreach \y in {0,2,4,6} { 
        \draw[>=triangle 90, ->>] (-10,\y)--(-7,\y);
    }
\node[left,scale=0.8] at (-10,3) {$J_{-}$};
\foreach \y in {0.5,5.5} { 
        \draw[>=triangle 90, ->>] (6,\y)--(9,\y);
    }
\node[right,scale=0.8] at (8,2) {$J_{+}$};
\foreach \x in {-3,3} { 
        \draw[>=triangle 90, ->>] (\x,8)--(\x,11);
    }
\node[above,scale=0.8] at (0,9) {$J_{long}$};

\draw[thick] (0,-3)--(0,-5)--(11,-5)--(11,5);
\draw[thick] (10.6,5)--(11.4,5);
\draw[thick] (10.4,5.2)--(11.6,5.2);
\draw[thick] (11,5.2)--(11,14)--(0,14)--(0,12);
\end{tikzpicture}
}
\caption{(a) Transverse (Hall) Current in a periodic medium induced by an external potential along the longitudinal direction. (b) A thought experiment: an external potential is applied to the interface model, and the charge transport is analyzed within an imaginary box containing the interface.}
\label{fig_collapsed}
\end{figure}

A natural question is whether it is possible to establish the principle of BIC for systems lacking a quantized characterization. To address this question, we need to revisit the Chern-based BIC in electronic systems to extract the fundamental idea and the underlying physics. The first step is to recall the physical meaning of the Chern number. Suppose the system is periodic, i.e., governed by a periodic Hamiltonian $\mathcal{H}$, and assume the Hamiltonian has a gap $\Delta$ in its spectrum. In this case, the system is said to be insulating at energies within $\Delta$ and its ground state is given by the spectral projection $P=\mathbbm{1}_{(-\infty,\inf  \Delta ]}(\mathcal{H})$. By the celebrated Kubo formula, the Chern number $\mathcal{C}$ associated with the projection $P$ corresponds to the transverse electric current measured in the ground state when the system is subjected to an external potential inducing a unit voltage drop, as illustrated in Figure \ref{fig_transverse current} \cite{tong2016lecturesquantumhalleffect}. Hence the Chern number characterizes the \textit{bulk conductance} of the system. The bulk conductance has been studied in various systems and admits the following double-commutator expression \cite{Avron1994charge,elgart2005shortrange+functional,drouot2024bec_curvedinterfaces}:
\begin{equation} \label{eq_intro_1}
\sigma=-ei\text{Tr}\Big(P\big[[P,\Lambda_1],[P,\Lambda_2]\big] \Big) \quad (e:\text{the electric charge})
\end{equation}
Here $\Lambda_i$ are the switch functions as defined in Definition \ref{def_switch functions}. The second key insight is that the emergence of interface modes is a consequence of the conservation law. Consider the thought experiment depicted in Figure \ref{fig_thought experiment}, where two media with distinct bulk conductance (denoted as $\sigma_+$ and $\sigma_-$) are joined along an interface (see Figure \ref{fig_interface_model}) and an external potential with a unit voltage drop is applied to this interface model. Within a large box containing the interface, a transverse electric current of magnitude $J_{+}=\sigma_+ E$ will flow outside the box from the right, while an electric current of magnitude $J_{-}=\sigma_- E$ flows inside from the left. Since there is no accumulation of electric charge in the steady state \cite{Vanderbilt2019Berry_phase}, there must be longitudinal electric current fluxing outside the box to balance the difference of transverse currents $J_{-}-J_{+}=(\sigma_- -\sigma_+)E$ (recall that $\sigma_+\neq \sigma_-$). Since there are no transport channels in the bulk (recall that the bulk medium is insulating), this longitudinal current must be carried by modes localized near the interface. This reasoning illustrates the emergence of interface modes. Informally, the principle of BIC can be summarized as the following formula
\begin{equation} \label{eq_intro_2}
\text{Interface modes}
\overset{\text{conservation law}}{\Longleftrightarrow}
\text{Bulk conductance $\sigma_{-}-\sigma_{+}$}
\overset{\text{Kubo formula}}{=}
\text{Chern number $\mathcal{C}_{-}-\mathcal{C}_{+}$}
\end{equation}

In conclusion, the mechanism underlying the principle of BIC is the conservation law, which does not presume a nontrivial topological characterization of the system. The quantization of the bulk conductance (the Chern number) arises because it describes the quantum transport of a special quantity: the electric charge. However, other quantities, such as spin, can also be transported in electronic systems along with the motion of electrons. For example, the spin conductance, which measures the transverse transport of spin induced by an external electric field, is well-defined and contains a Kubo-like term \cite{monaco2020spin,marcelli2021new,Marcelli2019spin_conductivity,Giovanna22charge_to_spin}
\begin{equation} \label{eq_intro_3}
\sigma^{S,Kubo}= i\text{Tr}\Big(P\big[[P,S\Lambda_1],[P,\Lambda_2]\big] \Big) \quad (S:\text{the spin operator})
\end{equation}
We note that: 1) the formula \eqref{eq_intro_3} reduces to \eqref{eq_intro_1} by replacing $S$ with $-e\mathbbm{1}$ (highlighting the special role of electric charge as it corresponds to the identity operator in quantum mechanics), and 2) the Kubo-like expression \eqref{eq_intro_3} contributes only partially to the bulk spin conductance; see the discussion in Section 1.2 and the references \cite{monaco2020spin,marcelli2021new,Giovanna22charge_to_spin}. In the special case that the spin operator commutes with the Hamiltonian, i.e. $[\mathcal{H},S]=0$ (which physically implies the total spin charge is conserved), the spin conductance \eqref{eq_intro_3} is quantized, and its parity (even or odd) corresponds to the well-known Fu-Kane-Mele $\mathbb{Z}_2$ index \cite{Marcelli2019spin_conductivity}; see Proposition \ref{prop_kubo_formula}. However, in the general case $[\mathcal{H},S]\neq 0$, as is typical for models in the study of quantum spin Hall effect such as the Bernevig-Hughes-Zhang (BHZ) model and Kane-Mele model with nonzero spin-orbit coupling \cite{BHZ06,KaneMele05Z_2}, the spin conductance is not quantized \cite{monaco2020spin}. In this case, a rigorous mathematical framework of BIC explaining the quantum transport based on the bulk property remains elusive. Nevertheless, from the physical perspective, we infer that the principle of BIC can indeed be established in these systems lacking a quantized characterization due to the following reasons:
\begin{itemize}
    \item[i)] The physical quantity (spin) flows with the interface modes, and its associated bulk conductance is well defined.
    \item [ii)] The conservation law is universally valid for the transport of physical quantities.
\end{itemize}

In this paper, we establish a generalized BIC for electron spin systems with TR symmetry and possibly nonconserved spin charge. Such spin systems lack the conventional topological characterization (e.g., Chern number or $\mathbb{Z}_2$ index) due to the TR symmetry and non-commutation between the Hamiltonian and the spin operator. 
We first define the bulk spin conductance as a potential-current correlation function; see Definition \ref{def_spin_conducatance} and the discussion below it. Remarkably, our definition of spin conductance recovers the well-known Kubo formula when the total spin is conserved; see Proposition \ref{prop_kubo_formula}. Next, we introduce the interface spin-drifting conductance (Definition \ref{def_interface_spin_drift}), which is well understood and describes the spin transport carried by possible interface modes, and more importantly, the interface spin-torque conductance (Definition \ref{def_interface_torque}). This spin-torque conductance describes the spin generation near the interface driven by the external potential, whose existence is a natural consequence of the non-conservation of spin. Finally, we prove the main result of this paper, the principle of BIC, which states that the difference of bulk spin conductances at two sides of an interface model equals precisely the interface spin-drift conductance plus the spin-torque conductance; see Theorem \ref{thm_bic}. Our results demonstrate that the interface between two bulk media with different spin conductances remains physically nontrivial, even in the absence of a quantized bulk characterization. Specifically, for any Fermi energy within the band gap, either interface transport channels must emerge or spin torque will be observed; see the discussion under Theorem \ref{thm_bic}. This idea is discussed in the physics literature \cite{jo2024spintronics} and is rigorously justified by our theorem. When the spin is conserved, Theorem \ref{thm_bic} recovers the previous result on the principle of BIC based on the spin Chern number or Fu-Kane-Mele $\mathbb{Z}_2$ index; see \cite[Theorem 3]{avila2013shortrange+transfer}. We expect that our result can be extended to other lattice structures as well as continuum models; see Section 1.3 and the reference therein.

\subsection{Setting and Main results}
Throughout the paper, we consider non-interacting electrons moving on a discrete set on the plane, i.e. $\mathcal{D}\subset \mathbb{R}^2$. The wave functions of electrons form the Hilbert space $H_{\mathcal{D}}=\ell^2(\mathcal{D})\otimes \mathbb{C}^2$, where $\mathbb{C}^2$ represents the spin DOF. The motion of the electron is governed by the Hamiltonian $\mathcal{H}$, which is a bounded self-adjoint operator on $H_{\mathcal{D}}$. All the Hamiltonians considered in this paper are assumed to be tight-binding, or equivalently, near-sighted or exponentially short-ranged \cite{Marcelli2019spin_conductivity,drouot2024bec_curvedinterfaces,Kohn96density}:
\begin{definition}[tight-binding operator] \label{def_tight_bind}
A bounded operator $A$ on $H_{\mathcal{D}}$ is called tight-binding if there exists $\lambda,C_{\lambda}>0$ such that the matrix elements of $A$ satisfy $|A(\bm{x},\bm{y})|\leq C_{\lambda}e^{-\lambda\|\bm{x}-\bm{y}\|_{1}}$ for any $\bm{x},\bm{y}\in\mathcal{D}$, where $\|\cdot\|_1$ denotes the $l^1$-norm of $\mathbb{R}^2$. 
\end{definition}
In particular, we are interested in the motion of elections in an interface model, where two different periodic media (referred to as the two bulk materials) are separated by an interface along the axis $\{x_1=0\}$, as shown in Figure \ref{fig_interface_model}. For simplicity, we assume both two bulks are periodic to rectangular lattices, denoted as
\begin{equation*}
\mathcal{D}_{\pm}=\mathbb{Z}(a^{\pm}_{1}\bm{e}_1)\oplus \mathbb{Z}(a^{\pm}_{2}\bm{e}_2) \quad (a^{\pm}_{1},a^{\pm}_{2}\in\mathbb{N})
\end{equation*}
where the subscripts $+/-$ represent the right/left bulk, respectively. The interface structure is denoted as $\mathcal{D}_{e}$. As illustrated in Figure \ref{fig_interface_model}, we assume
\begin{equation} \label{eq_interface_structure}
\mathcal{D}_{e}\cap \{\pm x_1>L\}=\mathcal{D}_{\pm}\cap \{\pm x_1>L\} \quad \text{for some } L>0 .
\end{equation}
Note that this setup allows for the presence of defects near the interface, as shown in Figure \ref{fig_interface_model}. The Hamiltonian modeling the electron's motion in the interface model is denoted as $\mathcal{H}_{e}$, while the bulk Hamiltonians are denoted as $\mathcal{H}_{\pm}$, respectively. We assume that the interface Hamiltonian differs from the bulk Hamiltonians only by a tight-binding term $H_{\delta}$ near the interface as follows (see also \cite{elgart2005shortrange+functional,drouot2024bec_curvedinterfaces}): 
\begin{equation} \label{eq_interface_hamiltonian}
\mathcal{H}_{e}=\mathbbm{1}_{\{x_1>L\}}\mathcal{H}_{+}\mathbbm{1}_{\{x_1>L\}}+\mathbbm{1}_{\{x_1<-L\}}\mathcal{H}_{+}\mathbbm{1}_{\{x_1<-L\}}+\mathcal{H}_{\delta}.
\end{equation}
Here $\mathcal{H}_{\delta}$ is tight-binding in the sense of Definition \ref{def_tight_bind} and 
\begin{equation} \label{eq_interface_error}
|\mathcal{H}_{\delta}(\bm{x},\bm{y})|\leq C_{\nu}e^{-\nu(|x_1|+|y_1|)}
\end{equation}
for some $\nu,C_{\nu}>0$ and all $\bm{x},\bm{y}\in \mathcal{D}_{e}$. Furthermore, we assume the two bulks are insulating at a common energy level, i.e., $\mathcal{H}_{+}$ and $\mathcal{H}_{-}$ possess a common spectral gap. Specifically, we summarize these assumptions as follows.
\begin{assumption} \label{assum_bulk_hamiltonian}
\begin{itemize}
    \item[i)] The bulk Hamiltonians $\mathcal{H}_{\pm}$ defined on the lattice $\mathcal{D}_{\pm}$ are tight-binding. Without loss of generality, we assume that $\mathcal{H}_{*}(\bm{x},\bm{y})$ ($*\in\{+,-,e,\delta\}$) satisfy the estimate in Definition \ref{def_tight_bind} with the same decay length $\lambda>0$, and that $\nu=\lambda$ in \eqref{eq_interface_error}.
    \item[ii)] $\mathcal{H}_{\pm}$ are periodic to the lattice $\mathcal{D}_{\pm}$, respectively. That is $\mathcal{H}_{\pm}(\bm{x}+\bm{n},\bm{y}+\bm{n})=\mathcal{H}_{\pm}(\bm{x},\bm{y})$ for all $\bm{n}\in \mathcal{D}_{\pm}$.
    \item[iii)] $\mathcal{H}_{\pm}$ possess a common spectral gap, i.e. there exists an open interval $\Delta\subset \mathbb{R}$ such that $\Delta\cap \text{Spec}(\mathcal{H}_{+})=\Delta\cap \text{Spec}(\mathcal{H}_{-})=\emptyset$.
\end{itemize}
\end{assumption}
This paper aims to relate the spin transport of electrons near the interface to the properties of the bulk Hamiltonians, as captured by the bulk spin conductance \cite{shi2006proper,monaco2020spin,Giovanna22charge_to_spin,marcelli2021new,xiao2021conserved}. Physically, the bulk spin conductance characterizes the transverse spin current observed in the ground state when the system is subjected to an external potential with a unit voltage drop, as illustrated in Figure \ref{fig_transverse current}. The mathematical formulation of bulk spin conductance has been studied in \cite{monaco2020spin,Giovanna22charge_to_spin,marcelli2021new}. Here, we propose a new definition for spin conductance that is particularly convenient for studying BIC.

To begin, we introduce some notions. The first is the spin operator on the Hilbert space $H_{\mathcal{D}}=\ell^2(\mathcal{D})\otimes \mathbb{C}^2$, defined as $S=\text{Id}\otimes \frac{1}{2}\sigma_{z}$ ($\sigma_{z}$ is the third Pauli matrix). As defined, the spin operator $S$ only acts on the spin DOF. The second notion is the density function of the ground state: let $\rho\in C^{\infty}_{c}(\mathbb{R})$ be a smooth function that satisfies
\begin{equation} \label{eq_density_function}
\rho(x)=1 \quad \text{for }x\in (b,\inf \Delta),\quad \rho(x)=0 \quad \text{for }x\in(\sup \Delta,+\infty)
\end{equation}
where $b=\inf\big\{ \cup_{*\in\{+,-,e\}}\text{Spec}(\mathcal{H}_{*})\big\}$ denotes the infimum of the spectra of all Hamiltonians under consideration. Hence $\rho(\mathcal{H}_{\pm})$ represents the ground state of $\mathcal{H}_{\pm}$ when the Fermi level lies in the gap $\Delta$; in fact, $\rho$ is the mollified version of the Heaviside function with jump at the Fermi level. The third notion is the principal-value trace along the $x_1$-direction:
\begin{definition}[principal-value trace along $x_1$] \label{def_pv_trace}
Let $A$ be an operator acting on $H_{\mathcal{D}}$ such that $\mathbbm{1}_{\Omega_{n,a}\cap \mathcal{D}}A$ is trace-class on $H_{\mathcal{D}}$ for any $n\in\mathbb{N}$ and some $a>0$ ($\mathbbm{1}_{\Omega_{n,a}\cap \mathcal{D}}$ is the indicator function of $\Omega_{n,a}:=(n a,(n+1) a)\times \mathbb{R}$, a strip with width equal to $a$). The principal-value trace of $A$ along $x_1$ (in domain $\mathcal{D}$ with step size $a$) is defined as follows whenever the limit exists:
\begin{equation*}
Tr_{a,\mathcal{D}}^{pv,1}(A)
:=\lim_{N\to\infty}\sum_{n=-N}^{N-1}Tr_{\mathcal{D}}(\mathbbm{1}_{\Omega_{n,a}\cap \mathcal{D}}A)
\end{equation*}
where $Tr_{\mathcal{D}}$ is the conventional trace associated with trace-class operators on $H_{\mathcal{D}}$.
\end{definition}
The principal-value trace is studied in detail in \cite{Marcelli2019spin_conductivity,marcelli2021new} as well as in \cite{Avron1994charge}, which is introduced as the weak version of trace for certain operators that are not trace-class. It serves as an important tool for our study of BIC. We briefly review its basic properties in Section 2, with an emphasis on its link to the conventional trace and its action on periodic operators. The last notion is the switch function:
\begin{definition}[switch functions] \label{def_switch functions}
Fix $j\in\{1,2\}$. A switch function in the $x_j$-direction is a measurable function $\Lambda_{j}:\mathbb{R}^2\to [0,1]$ that depends only on the variable $x_j$ and satisfies
\begin{equation*}
\Lambda_{j}(\bm{x})=1\quad \text{for }x_j>b_1,\quad
\Lambda_{j}(\bm{x})=0\quad \text{for }x_j<b_2,
\end{equation*}
for some real numbers $b_1>b_2$.
\end{definition}
With these notions, we define the spin conductance of the bulk Hamiltonians as follows.
\begin{definition}[Bulk spin conductance] \label{def_spin_conducatance}
Let $\mathcal{H}_{\pm}$ be the bulk Hamiltonians satisfying Assumption \ref{assum_bulk_hamiltonian}, and $\rho$ be the density function satisfying \eqref{eq_density_function}. The spin conductance corresponding to the ground state $\rho(\mathcal{H}_{\pm})$ and the switch function $\Lambda_2$ is defined as
\begin{equation} \label{eq_spin_conduc_1}
\sigma_{\pm}^{\Lambda_2}:=\frac{1}{2\pi}Tr_{ka_1^{\pm},\mathcal{D}_{\pm}}^{pv,1}(\Sigma_{\pm}^{\Lambda_2,\Lambda_1 S})
\quad (k\in\mathbb{N})
\end{equation}
with
\begin{equation} \label{eq_spin_conduc_2}
\begin{aligned}
\Sigma_{\pm}^{\Lambda_2,\Lambda_1 S}
:=\int_{\mathbb{C}}dm(z)\frac{\partial \tilde{\rho}}{\partial \overline{z}}\Big[
&R_{\pm}(z) \big[\mathcal{H}_{\pm},\Lambda_2\big] R_{\pm}(z) \big[\mathcal{H}_{\pm},\Lambda_1 S\big] R_{\pm}(z) \\
&- R_{\pm}(z) \big[\mathcal{H}_{\pm},\Lambda_1 S\big] R_{\pm}(z) \big[\mathcal{H}_{\pm},\Lambda_2\big] R_{\pm}(z) \Big],
\end{aligned}
\end{equation}
where $R_{\pm}(z)=(\mathcal{H}_{\pm}-z)^{-1}$ is the resolvent of the bulk Hamiltonian, $\tilde{\rho}$ is the almost-analytic extension of $\rho$ (introduced in Section 2.2), $dm(z)$ is the Lebesgue measure on the complex plane.
\end{definition}
Before showing $\sigma_{\pm}^{\Lambda_2}$ is well-defined, we briefly explain this definition and in particular, remark on the physics idea underlying this expression. We first note that the two commutators in \eqref{eq_spin_conduc_2}, i.e. $[\mathcal{H}_{\pm},\Lambda_2]$ and $[\mathcal{H}_{\pm},\Lambda_1 S]$ have distinct physics meanings. $[\mathcal{H}_{\pm},\Lambda_2]=-i\partial_{t}\big|_{t=0}e^{it\mathcal{H}_{\pm}}\Lambda_2 e^{-it\mathcal{H}_{\pm}}$ generates the time evolution when the system $\mathcal{H}_{\pm}$ is subjected to an external electric potential that induces a unit voltage drop along the $x_2$-direction as in Figure \ref{fig_transverse current}. It's the driving force of all observed physical phenomena inside the system. On the other hand, $[\mathcal{H}_{\pm},\Lambda_1 S]$ corresponds exactly to the observable of our interest: the spin current. Loosely speaking, $\Lambda_1 S$ measures the total spin (or the so-called `spin charge' \cite{ezawa2014symmetry}) inside the support of $\Lambda_1$ (i.e. a half plane), and its time-derivative $[\mathcal{H}_{\pm},\Lambda_1 S]$ characterizes the spin transport between the two half planes $\{x_1>0\}$ and $\{x_1<0\}$. 
In other words, Definition \ref{def_spin_conducatance} characterizes the \textit{correlation between external potential and the spin current measured in the ground state of the bulk Hamiltonian}, or \textit{the spin-current response driven by the external potential}. The correlation form \eqref{eq_spin_conduc_2} also appeared in the study of charge transportation \cite{elgart2005shortrange+functional}, although it was not explored there in detail. We find this formulation particularly convenient for analyzing BIC in the context of spin transport.

Mathematically, one must utilize the principal-value trace rather than the conventional trace to define the bulk spin conductance, since the operator $\Sigma_{\pm}^{\Lambda_2,\Lambda_1 S}$ in \eqref{eq_spin_conduc_2} is not trace-class. Intuitively, this is because $\Sigma_{\pm}^{\Lambda_2,\Lambda_1 S}$ lacks localization in the $x_1$-direction, a consequence of the possible non-commutation $[\mathcal{H}_{\pm},S]\neq 0$. In fact, by the Leibniz rule
\begin{equation} \label{eq_leibniz}
[\mathcal{H}_{\pm},\Lambda_1 S]=[\mathcal{H}_{\pm},\Lambda_1]S+\Lambda_1 [\mathcal{H}_{\pm},S].
\end{equation}
The second term on the right does not decay as $x_1\to \infty$, which prevents $\Sigma_{\pm}^{\Lambda_2,\Lambda_1 S}$ to be trace-class. This contrasts with the well-understood case of electric charge conductance \cite{Avron1994charge,elgart2005shortrange+functional,drouot2024bec_curvedinterfaces}. The reason that the principal-value trace of $\Sigma_{\pm}^{\Lambda_2,\Lambda_1 S}$ is well-defined is due to the physical observation that the spin torque (corresponding to the second term of \eqref{eq_leibniz}) vanishes mesoscopically in the bulk; see the discussion under Proposition \ref{prop_spin_conduct_justified}.

We also note that spin conductance has been extensively studied from the perspective of linear response theory in \cite{marcelli2021new,Giovanna22charge_to_spin,monaco2020spin,Marcelli2019spin_conductivity}, employing the concept of non-equilibrium almost-stationary states (NEASS). Although their definition of spin conductance differs from ours, the two approaches share certain similarities; in particular, both definitions coincide with the well-known Kubo formula when total spin is conserved (see Proposition \ref{prop_kubo_formula}). We anticipate a deeper connection between these definitions, the exploration of which we leave to interested readers.

The spin conductance defined in Definition \ref{def_spin_conducatance} is justified as follows. The proof is left to Section 4.
\begin{proposition} \label{prop_spin_conduct_justified}
For each switch function $\Lambda_2$, $\sigma_{\pm}^{\Lambda_2}$ in Definition \ref{def_spin_conducatance} is well-defined as the the limit \eqref{eq_spin_conduc_1} exists and is independent of the choice of switch function $\Lambda_1$, density function $\rho$ and the step size $k\in\mathbb{N}$.
\end{proposition}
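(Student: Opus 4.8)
The plan is to establish three independent claims: (a) for fixed switch functions $\Lambda_1,\Lambda_2$ and density $\rho$, the limit defining $Tr^{pv,1}_{ka_1^\pm,\mathcal{D}_\pm}$ in \eqref{eq_spin_conduc_1} exists; (b) the value is independent of $\Lambda_1$; (c) the value is independent of $\rho$ and of $k\in\mathbb{N}$. I would work throughout with one bulk Hamiltonian $\mathcal{H}_\pm$ and suppress the subscript. The central structural fact, already flagged in the text via \eqref{eq_leibniz}, is that $[\mathcal{H},\Lambda_1 S]=[\mathcal{H},\Lambda_1]S+\Lambda_1[\mathcal{H},S]$, where the first term is tight-binding and localized near $\{x_1=b_2\}$--$\{x_1=b_1\}$ (so it behaves like the charge-conductance case), while the second term, $\Lambda_1[\mathcal{H},S]$, is the obstruction to trace-class. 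So I would split $\Sigma^{\Lambda_2,\Lambda_1 S}=\Sigma^{\mathrm{loc}}+\Sigma^{\mathrm{tor}}$ accordingly. For $\Sigma^{\mathrm{loc}}$ one invokes the standard machinery: by the tight-binding decay and the Helffer--Sjöstrand formula (with $\tilde\rho$ the almost-analytic extension, $\partial\tilde\rho/\partial\bar z$ supported near the real axis and vanishing to infinite order on $\mathrm{Spec}(\mathcal H)$, hence the $z$-integral converges), $\Sigma^{\mathrm{loc}}$ is genuinely trace-class, its conventional trace equals its principal-value trace, and independence of the data follows from the usual commutator/cyclicity arguments. The real work is $\Sigma^{\mathrm{tor}}$.

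For $\Sigma^{\mathrm{tor}}$ I would exploit periodicity of $\mathcal{H}$: since $[\mathcal{H},S]$ and the resolvents are periodic with respect to $\mathcal{D}_\pm$, the strip-restricted traces $Tr_{\mathcal D}(\mathbbm 1_{\Omega_{n,ka_1}\cap\mathcal D}\,\Sigma^{\mathrm{tor}})$ are, up to translation, almost independent of $n$ once $\Lambda_1\equiv 1$ on the strip (which holds for all but finitely many $n$, since $\Lambda_1$ is a switch function). Concretely, for strips deep in $\{x_1>b_1\}$ we have $\Lambda_1\equiv 1$, so there $\Sigma^{\mathrm{tor}}$ is exactly periodic; for strips deep in $\{x_1<b_2\}$ we have $\Lambda_1\equiv 0$, so $\Sigma^{\mathrm{tor}}$ vanishes; only finitely many strips straddle the transition region of $\Lambda_1$. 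Hence in the symmetric sum $\sum_{n=-N}^{N-1}$ the terms come in three groups: a fixed finite set of ``transition'' strips (bounded contribution), a growing block of ``$\Lambda_1=0$'' strips contributing zero, and a growing block of ``$\Lambda_1=1$'' strips each contributing the \emph{same} periodic number $\tau$. Convergence of the symmetric limit therefore forces $\tau=0$: this is the rigorous version of the physical statement that ``spin torque vanishes mesoscopically in the bulk.'' To prove $\tau=0$ I would use the antisymmetry of the bracket in \eqref{eq_spin_conduc_2} together with the trace-per-unit-cell identity for periodic operators from Section 2 — writing the per-strip trace as a Bloch-fibered integral $\int_{\mathrm{BZ}} \mathrm{tr}\,(\cdots)\,dk$ and observing that the integrand is a commutator-type expression whose fibered trace vanishes, or alternatively noting that $[\mathcal H,S]$ restricted to a periodic strip has vanishing trace per unit cell because it is a commutator $[\mathcal H, \Lambda_1 S]-[\mathcal H,\Lambda_1]S$ with the first term trace-class on the infinite strip up to boundary terms. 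This is the step I expect to be the main obstacle — making ``trace per unit cell of the periodic torque operator is zero'' precise requires care about which operators are actually trace-class on a strip versus only having well-defined trace-per-cell, and about commuting the $z$-integral with the (unbounded) sum over strips.

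Once convergence is in hand, independence of $\Lambda_1$ follows by the same decomposition: two choices $\Lambda_1,\Lambda_1'$ differ by a compactly-$x_1$-supported function, so $\Sigma^{\Lambda_2,\Lambda_1 S}-\Sigma^{\Lambda_2,\Lambda_1' S}$ is trace-class (the difference kills the non-decaying term) with zero trace by the cyclicity/telescoping argument used for $\Sigma^{\mathrm{loc}}$; and since for trace-class operators the principal-value trace equals the conventional trace, the two spin conductances agree. Independence of $\rho$: two admissible densities $\rho,\rho'$ agree on a neighbourhood of $\mathrm{Spec}(\mathcal H)\cap(-\infty,\sup\Delta]$ and outside $\mathrm{Spec}(\mathcal H)$; the difference $\tilde\rho-\tilde\rho'$ has $\partial_{\bar z}$-data supported away from $\mathrm{Spec}(\mathcal H)$, where the resolvents are bounded analytic, so the $z$-integral of the bracket is itself a (norm-convergent) integral of a trace-class-valued analytic function whose contour can be collapsed — giving zero after the strip-sum by the same vanishing-per-cell mechanism. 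Independence of $k$: replacing step size $ka_1^\pm$ by $k'a_1^\pm$ merely regroups the same per-cell contributions into coarser strips, and since each individual per-unit-cell trace of $\Sigma^{\mathrm{tor}}$ vanishes and $\Sigma^{\mathrm{loc}}$ is trace-class (so its strip-sum is insensitive to grouping), the limit is unchanged. I would present (a)--(c) as three short lemmas feeding into Section 4, with the periodic trace-per-cell lemma from Section 2 doing the heavy lifting.
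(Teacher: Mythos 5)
Your overall architecture matches the paper's: isolate the non-trace-class ``torque'' part of $\Sigma_{\pm}^{\Lambda_2,\Lambda_1 S}$, reduce existence of the principal-value trace to the vanishing of the per-unit-cell trace of the bulk potential-torque correlation $\Sigma_{\pm}^{\Lambda_2,S}$, then obtain independence of $\Lambda_1$, $\rho$ and $k$ by cyclicity, a Stokes/contour argument, and regrouping; these latter parts are essentially the paper's Steps 3--4. One technical remark on the reduction: the paper first fixes the exact Heaviside function $\Lambda_1^{\star}(\bm{x})=H(x_1)$ and commutes it \emph{outside} the resolvent product (collecting four trace-class correction terms), so that $\mathbbm{1}_{\Omega_{n,ka_1^{\pm}}}\Lambda_1^{\star}$ is exactly $\mathbbm{1}_{\Omega_{n,ka_1^{\pm}}}$ or $0$ on every strip. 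Your version keeps $\Lambda_1$ sandwiched inside the resolvents, so the deep-strip contributions are only exponentially close to the periodic value $\tau$; that is repairable, but it requires a tail estimate you do not state, and it is cleaner to reduce to a special $\Lambda_1$ first.

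The genuine gap is the proof that $\tau=0$, i.e.\ that $Tr_{\mathcal{D}_{\pm}}(\mathbbm{1}_{\Omega_{n,ka_1^{\pm}}\cap\mathcal{D}_{\pm}}\Sigma_{\pm}^{\Lambda_2,S})=0$ --- the step you yourself flag as the main obstacle. Neither of your two suggestions works as stated. Antisymmetry of the bracket in \eqref{eq_spin_conduc_2} cannot by itself force the per-cell trace to vanish: the paper itself points out (Section 1.3) that the same antisymmetrized correlation with $S$ replaced by the orbital angular momentum operator may have nonvanishing mesoscopic average. And the observation that the bare torque operator $[\mathcal{H},S]$ has vanishing trace per cell (being a commutator of periodic operators) addresses the wrong object: what must vanish is the per-cell trace of the triple-resolvent correlation $\Sigma_{\pm}^{\Lambda_2,S}$, not of $[\mathcal{H},S]$. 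The actual mechanism, which your sketch never identifies, is that $S$ acts only on the internal degree of freedom and therefore commutes with every position multiplication operator, so $[S,\Lambda_i]=0$ and $R_{\pm}(z)[\mathcal{H}_{\pm},S]R_{\pm}(z)=-[R_{\pm}(z),S]$. Combining this with the off-diagonal structure of $\int\frac{\partial\tilde\rho}{\partial\overline{z}}R_{\pm}(z)XR_{\pm}(z)\,dm(z)$ relative to $P_{\pm}$ (Proposition \ref{prop_off_diag}) and conditional cyclicity (Proposition \ref{prop_cond_cyc_periodic}), the per-cell trace reduces to the Kubo-type quantity $Tr_{\mathcal{D}_{\pm}}(\mathbbm{1}_{\Omega_{n,a_1^{\pm}}\cap\mathcal{D}_{\pm}}P_{\pm}[[P_{\pm},S],[P_{\pm},\Lambda_2]])$, whose vanishing is Proposition \ref{prop_vanish_bulk_torque_Kubo} (imported from the Marcelli--Panati spin-conductivity analysis, and again resting on $[S,\Lambda_i]=0$). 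Without this chain --- the content of the paper's Section 3 --- your convergence argument, and hence the well-definedness of $\sigma_{\pm}^{\Lambda_2}$, is not established.
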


We remark that $\sigma_{\pm}^{\Lambda_2}$ generally depends on the choice of switch function $\Lambda_2$ because the total spin is not necessarily conserved, i.e. $[\mathcal{H}_{\pm},S]\neq 0$. In that case, a local potential defect may induce spin torque in the bulk, and hence nonzero spin current by \eqref{eq_leibniz}. For the special case of $[\mathcal{H},S]=0$, we show that the spin conductance $\sigma_{\pm}^{\Lambda_2}$ recovers the familiar Kubo formula \cite{marcelli2021new,Marcelli2019spin_conductivity} (Proposition \ref{prop_kubo_formula}) and it is independent of $\Lambda_2$.

We briefly sketch the proof of Proposition \ref{prop_spin_conduct_justified}. First, we select a special switch function $\Lambda^{\star}_1$ that is constant on each vertical strip $\Omega_{n,ka_1^{\pm}}\cap \mathcal{D}_{\pm}$ of the lattice $\mathcal{D}_{\pm}$ (for example, the Heaviside function in the first variable, i.e., $\Lambda^{\star}_1(\bm{x})=H(x_1)$), then to show that the limit \eqref{eq_spin_conduc_1} exists for $\Lambda_1=\Lambda^{\star}_1$. To this end, we move $\Lambda_1^{\star}$ in \eqref{eq_spin_conduc_2} cyclically to the right, obtaining
\begin{equation} \label{eq_main_idea_proof_bulk_conduct_3}
\begin{aligned}
R_{\pm}(z) \big[\mathcal{H}_{\pm},\Lambda_2\big] R_{\pm}(z) \big[\mathcal{H}_{\pm},S\Lambda^{\star}_1 \big] R_{\pm}(z)
&=R_{\pm}(z) \big[\mathcal{H}_{\pm},\Lambda_2\big] R_{\pm}(z) \Big(\big[\mathcal{H}_{\pm}, S\big]\Lambda^{\star}_1+S\big[\mathcal{H}_{\pm},\Lambda^{\star}_1 \big]\Big) R_{\pm}(z) \\
&=R_{\pm}(z) \big[\mathcal{H}_{\pm},\Lambda_2\big] R_{\pm}(z) \big[\mathcal{H}_{\pm}, S\big] R_{\pm}(z)\Lambda^{\star}_1
+\text{I}+\text{II}
\end{aligned}
\end{equation}
with
\begin{equation*}
\begin{aligned}
&\text{I}=R_{\pm}(z) \big[\mathcal{H}_{\pm},\Lambda_2\big] R_{\pm}(z) \big[\mathcal{H}_{\pm}, S\big]\Big[\Lambda^{\star}_1, R_{\pm}(z)\Big], \\
&\text{II}=R_{\pm}(z) \big[\mathcal{H}_{\pm},\Lambda_2\big] R_{\pm}(z)S\big[\mathcal{H}_{\pm},\Lambda^{\star}_1 \big] R_{\pm}(z).
\end{aligned}
\end{equation*}
Both the operators $\text{I}$ and $\text{II}$ are trace-class since the commutators with $\Lambda^{\star}_1$ provide localization in the $x_1$-direction, and $\big[\mathcal{H}_{\pm},\Lambda_2\big]$ is localized in the $x_2$-direction. Hence we can write the operator $\Sigma_{\pm}^{\Lambda_2,\Lambda^{\star}_1 S}$ as
\begin{equation*}
\Sigma_{\pm}^{\Lambda_2,\Lambda^{\star}_1 S}
=\Sigma_{\pm}^{\Lambda_2, S}\Lambda^{\star}_1+(\text{trace-class operators})
\end{equation*}
where $\Sigma_{\pm}^{\Lambda_2, S}$ is defined similarly as in \eqref{eq_spin_conduc_2} by replacing $\Lambda_1 S$ with $S$:
\begin{equation} \label{eq_torque_correlation}
\begin{aligned}
\Sigma_{\pm}^{\Lambda_2,S}
:=\int_{\mathbb{C}}dm(z)\frac{\partial \tilde{\rho}}{\partial \overline{z}}\Big[
&R_{\pm}(z) \big[\mathcal{H}_{\pm},\Lambda_2\big] R_{\pm}(z) \big[\mathcal{H}_{\pm}, S\big] R_{\pm}(z) \\
&- R_{\pm}(z) \big[\mathcal{H}_{\pm},S\big] R_{\pm}(z) \big[\mathcal{H}_{\pm},\Lambda_2\big] R_{\pm}(z)
\Big].
\end{aligned}
\end{equation}
Note that the operator $[\mathcal{H}_{\pm},S]$ is the spin torque operator \cite{xiao2021conserved,sun2024nonconserved}, which measures the non-conservation of spin; hence the potential-torque correlation operator $\Sigma_{\pm}^{\Lambda_2, S}$ characterizes the spin torque response in the bulk induced by the external potential. By Proposition \ref{prop_pv_usual_trace} of Section 2, the principal-value trace of trace-class operators equals its conventional trace, which exists and is finite. Since $\Lambda^{\star}_1$ is constant on each vertical strip $\Omega_{n,ka_1^{\pm}}\cap \mathcal{D}_{\pm}$, we have
\begin{equation} \label{eq_main_idea_proof_bulk_conduct_1}
\begin{aligned}
Tr_{ka_1^{\pm},\mathcal{D}_{\pm}}^{pv,1}(\Sigma_{\pm}^{\Lambda_2,\Lambda_1 S})
&=Tr_{ka_1^{\pm},\mathcal{D}_{\pm}}^{pv,1}(\Sigma_{\pm}^{\Lambda_2, S}\Lambda^{\star}_1) +(\text{a finite quantity}) \\
&=\lim_{N\to\infty} \sum_{n=-N}^{N-1}Tr_{\mathcal{D}_{\pm}}(\mathbbm{1}_{\Omega_{n,ka_1^{\pm}}\cap \mathcal{D}_{\pm}}\Sigma_{\pm}^{\Lambda_2, S})\times (\text{value of $\Lambda^{\star}_1$ on $\Omega_{n,ka_1^{\pm}}$}) \\
&\quad +(\text{a finite quantity}).
\end{aligned}
\end{equation}
An important physical insight is that the mesoscopic average (i.e., trace over a unit cell) of the spin torque response in the bulk induced by the external potential equals zero \cite{Giovanna22charge_to_spin,Marcelli2019spin_conductivity}; see Proposition \ref{prop_vanish_bulk_torque_Kubo} and \ref{prop_vanish_bulk_torque_correlation} in Section 3. More precisely, 
\begin{equation} \label{eq_main_idea_proof_bulk_conduct_2}
Tr_{\mathcal{D}_{\pm}}(\mathbbm{1}_{\Omega_{n,ka_1^{\pm}}\cap \mathcal{D}_{\pm}}\Sigma_{\pm}^{\Lambda_2, S})=0 \quad (\forall n\in\mathbb{N}). 
\end{equation}
Therefore, the principal-value trace at the left of \eqref{eq_main_idea_proof_bulk_conduct_1} exists. This concludes the first step of the proof. Next, we show that the value of \eqref{eq_spin_conduc_1} is independent of $\Lambda_1$, completing the major part of the proof. This step is based on standard manipulation of traces for commutators. We leave all the details to Section 4.

For completeness, we demonstrate that the spin conductance defined in Definition \ref{def_spin_conducatance} is nontrivial by showing that it recovers the well-known Kubo formula when the total spin is conserved. This property is also shared by the spin conductance defined in \cite{marcelli2021new,Giovanna22charge_to_spin,monaco2020spin,Marcelli2019spin_conductivity}, which differs from ours, using the tool of non-equilibrium almost-stationary state (NEASS). It's shown clearly in their paper that the spin conductance contains terms contributed by the non-conservation of spin, which all vanish when $[\mathcal{H}_{\pm},S]=0$.
\begin{proposition} \label{prop_kubo_formula}
When $[\mathcal{H}_{\pm},S]=0$, $\sigma_{\pm}^{\Lambda_2}$ is expressed by the Kubo formula of spin conductance:
\begin{equation} \label{eq_kubo_formula}
\sigma_{\pm}^{\Lambda_2}=iTr_{\mathcal{D}_{\pm}}\Big(P_{\pm}\Big[ [P_{\pm},S\Lambda_1],[P_{\pm},\Lambda_2] \Big] \Big)
=\frac{1}{2}(\mathcal{C}(P_{\pm}^{\uparrow})-\mathcal{C}(P_{\pm}^{\downarrow}))
\end{equation}
where $P_{\pm}=\rho(\mathcal{H}_{\pm})$ is the spectral projection to the ground state of $\mathcal{H}_{\pm}$, $\mathcal{C}(P_{\pm}^{\uparrow/\downarrow})$ is the Chern number associated with the spin-up/spin-down ground state (the component of $P_{\pm}$ lying in the $\frac{1}{2}/\frac{-1}{2}$ eigenspace of the spin operator $S=\text{Id}\otimes \frac{1}{2}\sigma_z$). In that case, $\sigma_{\pm}=\sigma_{\pm}^{\Lambda_2}$ is independent of the switch function $\Lambda_2$. Moreover, when the Hamiltonian $\mathcal{H}_{\pm}$ is TR symmetric, $\sigma_{\pm} \text{ mod }2$ gives the Fu-Kane-Mele $\mathbb{Z}_2$ index.
\end{proposition}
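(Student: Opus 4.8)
In the conserved case $[\mathcal H_\pm,S]=0$ the strategy is to first upgrade the principal-value trace in Definition \ref{def_spin_conducatance} to an honest trace, then to split the computation into the two spin sectors and recognise each as an ordinary Hall conductance. I would begin by observing that $S$ now commutes with $R_\pm(z)=(\mathcal H_\pm-z)^{-1}$ and with $P_\pm=\rho(\mathcal H_\pm)$, and that by the Leibniz rule \eqref{eq_leibniz} the second commutator becomes $[\mathcal H_\pm,\Lambda_1 S]=[\mathcal H_\pm,\Lambda_1]\,S$, which is localised in the $x_1$-direction. As in the discussion following \eqref{eq_leibniz}, the operator $\Sigma_\pm^{\Lambda_2,\Lambda_1 S}$ then inherits genuine localisation in both $x_1$ (from $[\mathcal H_\pm,\Lambda_1]$) and $x_2$ (from $[\mathcal H_\pm,\Lambda_2]$), hence is trace-class, and by Proposition \ref{prop_pv_usual_trace} its principal-value trace equals $Tr_{\mathcal D_\pm}$. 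Writing $S=\tfrac12(\Pi_\pm^{\uparrow}-\Pi_\pm^{\downarrow})$ with $\Pi_\pm^{\uparrow/\downarrow}$ the spin projections (which, acting only on the $\mathbb C^2$ factor, commute with $\Lambda_1$, $\Lambda_2$, and, since $[\mathcal H_\pm,S]=0$, also with $\mathcal H_\pm$ and $R_\pm(z)$), the whole computation block-diagonalises and yields $\sigma_\pm^{\Lambda_2}=\tfrac12\big(\sigma_\pm^{\uparrow,\Lambda_2}-\sigma_\pm^{\downarrow,\Lambda_2}\big)$, where $\sigma_\pm^{s,\Lambda_2}$ is the same Helffer--Sjöstrand integral \eqref{eq_spin_conduc_2} with $S$ removed and all operators restricted to the spin-$s$ sector.

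For each sector I would then carry out the classical reduction of the three-resolvent correlation integral to the double-commutator Kubo expression, as in the charge case treated in \cite{Avron1994charge,elgart2005shortrange+functional}. The ingredients are: the resolvent identity $R_\pm(z)[\mathcal H_\pm,A]R_\pm(z)=-[R_\pm(z),A]$; repeated integration by parts in $z$ against $\partial\tilde\rho/\partial\bar z$, using that $\partial_z$ commutes with $\partial/\partial\bar z$; the Helffer--Sjöstrand representation $\tfrac1\pi\int_{\mathbb C}(\partial\tilde\rho/\partial\bar z)\,(z-\mathcal H_\pm)^{-1}\,dm(z)=P_\pm$; the vanishing of the $\rho'(\mathcal H_\pm)$-terms produced by the integrations by parts (one may take $\rho$ with $\rho'$ supported away from $\mathrm{Spec}(\mathcal H_\pm)$, legitimate by the $\rho$-independence in Proposition \ref{prop_spin_conduct_justified}); and cyclicity of the trace, valid at each step since every intermediate operator is trace-class. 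After using $P_\pm^2=P_\pm$ this collapses to $\sigma_\pm^{s,\Lambda_2}=i\,Tr_{\mathcal D_\pm}\big(P_\pm^{s}\big[[P_\pm^{s},\Lambda_1],[P_\pm^{s},\Lambda_2]\big]\big)$ with $P_\pm^{s}=\Pi_\pm^{s}P_\pm$ (the normalisation $\tfrac1{2\pi}$ in \eqref{eq_spin_conduc_1} being chosen precisely so that the overall prefactor is $i$); recombining the two sectors gives the first equality of \eqref{eq_kubo_formula}. Since $\mathcal H_\pm$, hence its spin-$s$ block, is periodic with a spectral gap, the switch-function expression $i\,Tr_{\mathcal D_\pm}\big(P_\pm^{s}[[P_\pm^{s},\Lambda_1],[P_\pm^{s},\Lambda_2]]\big)$ equals the (integer) Chern number $\mathcal C(P_\pm^{s})$ of the corresponding Bloch bundle, a classical identity (see \cite{Avron1994charge} and references therein); this gives the second equality and, being topological, shows $\sigma_\pm:=\sigma_\pm^{\Lambda_2}$ is independent of $\Lambda_2$ (independence of $\Lambda_1$, $\rho$, $k$ being already Proposition \ref{prop_spin_conduct_justified}).

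Finally, if $\mathcal H_\pm$ is time-reversal symmetric, I would invoke that the TR operator $\Theta=(\mathrm{Id}\otimes i\sigma_y)\mathcal K$ satisfies $\Theta\mathcal H_\pm\Theta^{-1}=\mathcal H_\pm$ and $\Theta S\Theta^{-1}=-S$, so $\Theta$ conjugates $P_\pm^{\uparrow}$ onto $P_\pm^{\downarrow}$; since conjugation by an antiunitary reverses the sign of the Chern number, $\mathcal C(P_\pm^{\downarrow})=-\mathcal C(P_\pm^{\uparrow})$, whence $\sigma_\pm=\mathcal C(P_\pm^{\uparrow})\in\mathbb Z$ is the spin Chern number. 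The assertion that $\sigma_\pm \bmod 2$ equals the Fu--Kane--Mele $\mathbb Z_2$ index is then the known coincidence between the parity of the spin Chern number and the FKM invariant for time-reversal-invariant insulators with conserved $S_z$ (cf. \cite{Marcelli2019spin_conductivity,avila2013shortrange+transfer} and references therein). I expect the main obstacle to be the reduction in the middle paragraph: certifying that every operator appearing in the successive rearrangements is trace-class (so that cyclicity and the $z$-integration by parts are legitimate) and bookkeeping the numerical constants (the $\tfrac1{2\pi}$ of \eqref{eq_spin_conduc_1}, the $\tfrac1\pi$ of Helffer--Sjöstrand, and the factors generated by the integrations by parts) so that the prefactor collapses exactly to $i$; the remaining steps are either standard or a direct appeal to the cited topological identities.
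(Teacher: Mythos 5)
Your proposal is correct and follows the same overall skeleton as the paper's proof (trace-class upgrade of the principal-value trace via the Leibniz rule, reduction to the double-commutator Kubo expression, spin decomposition into Chern numbers, and the TR/FKM conclusion), but the middle of the argument is organized differently in two respects. First, you block-diagonalize by spin \emph{before} doing the reduction, treating each sector as a pure charge-conductance computation; the paper keeps $\Lambda_1 S$ intact, derives $\frac{1}{2\pi}Tr_{\mathcal{D}_{\pm}}(\Sigma_{\pm}^{\Lambda_2,\Lambda_1 S})=i\,Tr_{\mathcal{D}_{\pm}}\big(P_{\pm}\big[[P_{\pm},\Lambda_1 S],[P_{\pm},\Lambda_2]\big]\big)$ directly, and only splits into $P_{\pm}^{\uparrow/\downarrow}$ at the very end. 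Second, your reduction to the double commutator goes through repeated $z$-integration by parts and the vanishing of the $\rho'(\mathcal{H}_{\pm})$ boundary terms (which indeed vanish automatically, since \eqref{eq_density_function} already forces $\mathrm{supp}\,\rho'\subset\Delta$, so no appeal to $\rho$-independence is needed), whereas the paper instead inserts $P_{\pm}^2+(P_{\pm}^{\perp})^2$ by conditional cyclicity and invokes the off-diagonal property of Proposition \ref{prop_off_diag} to kill half the terms, then applies Helffer--Sj\"ostrand once; no integration by parts in $z$ appears in Section 5. Both reductions are standard and valid; the paper's version has the advantage of reusing machinery (Proposition \ref{prop_off_diag} and the Step-3 commutator algebra of Proposition \ref{prop_vanish_bulk_torque_correlation}) already set up for the non-conserved case, while yours is closer to the classical charge-transport literature and makes the appearance of the two spin Chern numbers more immediate. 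Your TR argument via the explicit antiunitary $\Theta$ is also more self-contained than the paper's, which simply cites \cite{avila2013shortrange+transfer} for $\mathcal{C}(P_{\pm}^{\uparrow})=-\mathcal{C}(P_{\pm}^{\downarrow})$.
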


The proof is left to Section 5.

We now introduce the quantities characterizing the spin transport near the interface. The first quantity is the interface spin-drifting conductance, which describes the spin transport carried by possible interface modes:
\begin{definition}[Interface spin-drifting conductance] \label{def_interface_spin_drift}
The interface spin-drifting conductance of the interface Hamiltonian $\mathcal{H}_{e}$ for energies in the bulk spectral gap $\Delta$ is defined as
\begin{equation} \label{eq_interface_spin_drift}
\sigma_{e}^{drift,\Lambda_2,\rho}=-iTr_{\mathcal{D}_{e}}\big(\{[\mathcal{H}_{e},\Lambda_2],S\}\rho^{\prime}(\mathcal{H}_{e}) \big),
\end{equation}
where the bracket $\{A,B\}=AB+BA$ denotes the anti-commutator.
\end{definition}
This definition is illustrated physically as follows. First, the commutator $[\mathcal{H}_{e},\Lambda_2]$ measures the number of electrons moving from the lower half plane $\{\Lambda_2(\bm{x})=0\}$ to $\{\Lambda_2(\bm{x})=1\}$ in unit time \cite{elgart2005shortrange+functional,Avron1994charge,drouot2021microlocal}. Hence the symmetrized product $\{[\mathcal{H}_{e},\Lambda_2],S\}=[\mathcal{H}_{e},\Lambda_2]S+S[\mathcal{H}_{e},\Lambda_2]$ measures the spin transport contributed by the center-of-mass drift of electrons \cite{marcelli2021new}. On the other hand, $\rho^{\prime}(\mathcal{H}_{e})$ is a density of state for possible modes in the bulk spectral gap (recall that $\rho$ is the mollified version of the Heaviside function as in \eqref{eq_density_function}; hence $\rho^{\prime}$ approximates the Dirac-delta function centering at the energy inside the gap $\Delta$). In conclusion, $\sigma_{e}^{drift,\Lambda_2,\rho}$ characterizes the spin transport due to electron drift carried by possible in-gap modes. Mathematically, it is straightforward to prove that the operator in \eqref{eq_interface_spin_drift} is trace-class following the proof of \cite[Proposition 3]{drouot2024bec_curvedinterfaces}. Intuitively, it's because $\rho^{\prime}(\mathcal{H}_{e})$ is localized in the the $x_1$-direction (since $\rho^{\prime}(\mathcal{H}_{\pm})=0$ as we assume $\rho^{\prime}$ is supported in the spectral gap) and $[\mathcal{H}_{e},\Lambda_2]$ is localized in the $x_2$-direction (because it's a commutator with $\Lambda_2$); both facts are proved in \cite{drouot2024bec_curvedinterfaces}. Since the spin operator $S=\text{Id}\otimes \frac{1}{2}\sigma_z$ does not act on the spatial DOF, it's clear that the operator $\{[\mathcal{H}_{e},\Lambda_2],S\}\rho^{\prime}(\mathcal{H}_{e})$ is localized in both directions and hence is trace-class.

However, a critical observation is that the center-of-mass drift captured by $\sigma_{e}^{drift,\Lambda_2,\rho}$ is not the only contribution to the spin transport if the spin torque operator $[\mathcal{H}_e,S]$ is nonzero. Intuitively, this is because the spin axis is not fixed but rotates along the electron's motion. We characterize this contribution as the interface spin-torque response. It is defined in the form of a correlation function as in Definition \ref{def_spin_conducatance} as follows: 
\begin{definition}[Interface spin-torque conductance]
\label{def_interface_torque}
The spin-torque conductance of the interface Hamiltonian $\mathcal{H}_{e}$ is defined as
\begin{equation} \label{eq_interface_torque_1}
\sigma_{e}^{torque,\Lambda_2,\rho}:=\frac{1}{2\pi}Tr_{a_1^{com},\mathcal{D}_{e}}^{pv,1}(\Sigma_{e}^{\Lambda_2,S})
\quad (k\in\mathbb{N})
\end{equation}
with
\begin{equation} \label{eq_interface_torque_2}
\begin{aligned}
\Sigma_{e}^{\Lambda_2, S}
:=\int_{\mathbb{C}}dm(z)\frac{\partial \tilde{\rho}}{\partial \overline{z}}\Big[
&R_{e}(z) \big[\mathcal{H}_{e},\Lambda_2\big] R_{e}(z) \big[\mathcal{H}_{e}, S\big] R_{e}(z) \\
&- R_{e}(z) \big[\mathcal{H}_{e}, S\big] R_{e}(z) \big[\mathcal{H}_{e},\Lambda_2\big] R_{e}(z)
\Big]
\end{aligned}
\end{equation}
and $a_1^{com}:=\text{lcm}(a_1^{+},a_1^{-})$ being the least common multiple of $a_1^{+}$ and $a_1^{-}$.
\end{definition}
We show that $\sigma_{e}^{torque,\Lambda_2,\rho}$ indeed characterizes the interface property:
\begin{proposition} \label{prop_interface_spin_torque_justified}
For each switch function $\Lambda_2$ and density function $\rho$, the integral kernel of $\Sigma_{e}^{\Lambda_2, S}$ decays to zero away from the interface. Moreover, $\sigma_{e}^{torque,\Lambda_2,\rho}$ is well-defined as the the limit \eqref{eq_interface_torque_1} exists.
\end{proposition}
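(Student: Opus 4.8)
The plan is to compare $\Sigma_{e}^{\Lambda_2,S}$ with the bulk correlation operators $\Sigma_{\pm}^{\Lambda_2,S}$ of \eqref{eq_torque_correlation}, exploiting that $\mathcal{H}_{e}$ coincides with $\mathcal{H}_{\pm}$ away from the interface up to the term $\mathcal{H}_{\delta}$ localized near $\{x_1=0\}$ by \eqref{eq_interface_hamiltonian}--\eqref{eq_interface_error}. First I would record the elementary mapping properties of $\Sigma_{e}^{\Lambda_2,S}$ and of $\Sigma_{\pm}^{\Lambda_2,S}$. From the Helffer--Sj\"ostrand representation \eqref{eq_interface_torque_2}, the bound $\|R_{e}(z)\|\le |\mathrm{Im}\,z|^{-1}$, the infinite-order vanishing of $\partial\tilde{\rho}/\partial\overline{z}$ on $\mathbb{R}$, and Combes--Thomas estimates combined with the tight-binding hypothesis (Definition \ref{def_tight_bind}), the $z$-integral converges to a bounded operator whose kernel $\Sigma_{e}^{\Lambda_2,S}(\bm x,\bm y)$ is rapidly decaying in $\|\bm x-\bm y\|_1$; moreover, since $([\mathcal{H}_{e},\Lambda_2])(\bm x,\bm y)=\mathcal{H}_{e}(\bm x,\bm y)\big(\Lambda_2(y_2)-\Lambda_2(x_2)\big)$ forces $x_2,y_2$ into the switch region of $\Lambda_2$, the kernel is also localized in the $x_2$-variable. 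Hence $\mathbbm{1}_{\Omega_{n,a}\cap\mathcal{D}_{e}}\Sigma_{e}^{\Lambda_2,S}$ is trace-class for every $n$ and $a>0$, so the partial sums in \eqref{eq_interface_torque_1} are finite; the same holds for $\Sigma_{\pm}^{\Lambda_2,S}$ on $\mathcal{D}_{\pm}$, which in addition are $a_1^{\pm}$-periodic in $x_1$.

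Next---the heart of the argument---I would show that $Tr_{\mathcal{D}_{e}}\big(\mathbbm{1}_{\Omega_{n,a_1^{com}}\cap\mathcal{D}_{e}}\Sigma_{e}^{\Lambda_2,S}\big)$ approaches $Tr_{\mathcal{D}_{\pm}}\big(\mathbbm{1}_{\Omega_{n,a_1^{com}}}\Sigma_{\pm}^{\Lambda_2,S}\big)$ exponentially fast as $\pm n\to+\infty$; this is the precise sense in which the kernel of $\Sigma_{e}^{\Lambda_2,S}$, after subtracting the two bulk contributions, decays to zero away from the interface. For $n$ large and positive the strip $\Omega_{n,a_1^{com}}$ lies in $\{x_1>L\}$, where $\mathcal{D}_{e}=\mathcal{D}_{+}$; I would insert the geometric resolvent identity $R_{e}(z)=R_{+}(z)+R_{+}(z)(\mathcal{H}_{+}-\mathcal{H}_{e})R_{e}(z)$ into each of the five resolvent factors of \eqref{eq_interface_torque_2} and expand, where $\mathcal{H}_{+}-\mathcal{H}_{e}$ is (up to exponentially decaying tails from $\mathcal{H}_{\delta}$, by \eqref{eq_interface_error}) pinned near the interface. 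Every term in the resulting expansion of $\Sigma_{e}^{\Lambda_2,S}-\mathbbm{1}_{\{x_1>0\}}\Sigma_{+}^{\Lambda_2,S}\mathbbm{1}_{\{x_1>0\}}-\mathbbm{1}_{\{x_1<0\}}\Sigma_{-}^{\Lambda_2,S}\mathbbm{1}_{\{x_1<0\}}$ then carries at least one factor $(\mathcal{H}_{\pm}-\mathcal{H}_{e})$ near the interface together with a bulk resolvent $R_{\pm}(z)$ linking the far strip to that region, whose Combes--Thomas decay supplies a factor $e^{-c|\mathrm{Im}\,z|\,n a_1^{com}}$; the remaining, uncontrolled factors $R_{e}(z)$ enter only through the norm bound $|\mathrm{Im}\,z|^{-1}$, and the accumulated negative powers of $|\mathrm{Im}\,z|$ are absorbed by $\partial\tilde{\rho}/\partial\overline{z}$ upon integrating in $z$. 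This yields $\big|Tr_{\mathcal{D}_{e}}(\mathbbm{1}_{\Omega_{n,a_1^{com}}}\Sigma_{e}^{\Lambda_2,S})-Tr_{\mathcal{D}_{\pm}}(\mathbbm{1}_{\Omega_{n,a_1^{com}}}\Sigma_{\pm}^{\Lambda_2,S})\big|\le Ce^{-c|n|}$ for $\pm n$ large.

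To conclude, I would invoke the bulk mesoscopic-vanishing identities (Propositions \ref{prop_vanish_bulk_torque_Kubo}/\ref{prop_vanish_bulk_torque_correlation}, as already used in \eqref{eq_main_idea_proof_bulk_conduct_2}): since $a_1^{com}=\text{lcm}(a_1^{+},a_1^{-})$ is a multiple of each $a_1^{\pm}$ and $\Sigma_{\pm}^{\Lambda_2,S}$ is $a_1^{\pm}$-periodic in $x_1$, each strip $\Omega_{n,a_1^{com}}$ contains an integer number of fundamental cells of $\mathcal{D}_{\pm}$, so $Tr_{\mathcal{D}_{\pm}}(\mathbbm{1}_{\Omega_{n,a_1^{com}}}\Sigma_{\pm}^{\Lambda_2,S})=0$. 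Combining with the estimate above, $\big|Tr_{\mathcal{D}_{e}}(\mathbbm{1}_{\Omega_{n,a_1^{com}}\cap\mathcal{D}_{e}}\Sigma_{e}^{\Lambda_2,S})\big|\le Ce^{-c|n|}$ for all large $|n|$, while the finitely many strips meeting $\{|x_1|\le L\}$ contribute finite trace-class terms by the first step. Therefore $\sum_{n=-N}^{N-1}Tr_{\mathcal{D}_{e}}(\mathbbm{1}_{\Omega_{n,a_1^{com}}\cap\mathcal{D}_{e}}\Sigma_{e}^{\Lambda_2,S})$ converges absolutely as $N\to\infty$, so $\sigma_{e}^{torque,\Lambda_2,\rho}$ in \eqref{eq_interface_torque_1} is well-defined.

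The main obstacle is the comparison estimate of the second step. Unlike in the purely bulk situation, $R_{e}(z)$ has no Combes--Thomas decay on the part of $\mathrm{Spec}(\mathcal{H}_{e})$ lying inside $\Delta$ (the interface modes), where only $\|R_{e}(z)\|\le|\mathrm{Im}\,z|^{-1}$ is available, so one cannot naively propagate spatial decay through $R_{e}(z)$; the point is to organize the resolvent expansion so that the decay from a far strip to the interface is always carried by a bulk resolvent $R_{\pm}(z)$, with $R_{e}(z)$ appearing only in norm-bounded positions, and to track that the resulting powers of $|\mathrm{Im}\,z|^{-1}$ remain within the range absorbed by the almost-analytic extension $\tilde{\rho}$. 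A subordinate issue is the bookkeeping of the identification between $H_{\mathcal{D}_{e}}$ and $H_{\mathcal{D}_{\pm}}$ needed to make sense of $\mathcal{H}_{\pm}-\mathcal{H}_{e}$ and of the glued operators $\mathbbm{1}_{\{\pm x_1>0\}}\Sigma_{\pm}^{\Lambda_2,S}\mathbbm{1}_{\{\pm x_1>0\}}$; this is routine given \eqref{eq_interface_structure} and the exponential localization \eqref{eq_interface_error}.
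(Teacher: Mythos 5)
Your proposal follows essentially the same route as the paper's proof: split the principal-value sum into finitely many near-interface strips (each trace-class by the joint $x_1$/$x_2$ localization) plus two tails, compare the tails with the bulk operators $\Sigma_{\pm}^{\Lambda_2,S}$ via the resolvent identity and the interface localization of $\mathcal{H}_e-\mathcal{H}_{\pm}$, and kill the bulk contribution with Proposition \ref{prop_vanish_bulk_torque_correlation}. Two minor remarks: the Combes--Thomas bound of Lemma \ref{lem_resolv_ground_tight}(a) applies to $R_e(z)$ as well (with rate proportional to $|\mathrm{Im}\,z|$), so your worry about propagating decay through $R_e(z)$ is moot and the paper simply uses the kernel bound for all resolvents; and after integrating the factor $e^{-Dn|\mathrm{Im}\,z|}$ against $\partial\tilde{\rho}/\partial\overline{z}$ one obtains super-polynomial rather than genuinely exponential decay in $n$ --- which is still amply sufficient for the absolute convergence you need.
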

The proof of Proposition \ref{prop_interface_spin_torque_justified} is left to Section 6. Its main idea, based on physics, is illustrated below. As is suggested by Definition \ref{def_interface_torque}, $\sigma_{e}^{torque,\Lambda_2,\rho}$ is the expectation of spin-torque response in the interface model driven by an external potential along the $x_2$-direction. Since the spin-torque response vanishes mesoscopically in the bulk by \eqref{eq_main_idea_proof_bulk_conduct_2}, the main contribution to the principal-value trace \eqref{eq_interface_torque_1} is from the near-interface region; hence \eqref{eq_interface_torque_1} is finite.

Finally, we state our main result concerning the generalized BIC. A detailed proof, along with an intuitive explanation based on the conservation law (as discussed in the Introduction), is provided in Section 7.

\begin{theorem} \label{thm_bic}
For any switch function $\Lambda_2$ and density function $\rho$ satisfying \eqref{eq_density_function}, it holds that
\begin{equation} \label{eq_bic}
\sigma_{e}^{drift,\Lambda_2,\rho}+\sigma_{e}^{torque,\Lambda_2,\rho}=\sigma_{+}^{\Lambda_2}-\sigma_{-}^{\Lambda_2}.
\end{equation}
\end{theorem}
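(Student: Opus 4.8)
The idea is to slide the switch function $\Lambda_1$ across the interface: for $\Lambda_1$ jumping far to the right it reads off the bulk conductance $\sigma_+^{\Lambda_2}$, far to the left it reads off $\sigma_-^{\Lambda_2}+\sigma_e^{torque,\Lambda_2,\rho}$, and the discrepancy between the two — which turns out to be an honest (trace-class) trace over the interface model — is exactly $\sigma_e^{drift,\Lambda_2,\rho}$. This establishes \eqref{eq_bic} in the equivalent form $\sigma_+^{\Lambda_2}-\sigma_-^{\Lambda_2}-\sigma_e^{torque,\Lambda_2,\rho}=\sigma_e^{drift,\Lambda_2,\rho}$.

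Concretely, fix the step size $a_1^{com}=\text{lcm}(a_1^{+},a_1^{-})$ in every principal-value trace (allowed by Proposition~\ref{prop_spin_conduct_justified}), and for $R\in a_1^{com}\mathbb{Z}$ let $\Lambda_1^{(R)}(\bm{x})=H(x_1-R)$, which is constant on each strip $\Omega_{n,a_1^{com}}$. Set $\Phi(R):=\frac{1}{2\pi}Tr^{pv,1}_{a_1^{com},\mathcal{D}_e}\big(\Sigma_e^{\Lambda_2,\Lambda_1^{(R)}S}\big)$, where $\Sigma_e^{\Lambda_2,\Lambda_1 S}$ is given by \eqref{eq_spin_conduc_2} with $\mathcal{H}_\pm$ replaced by $\mathcal{H}_e$. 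Running the Leibniz expansion and the cyclic rearrangement of $\Lambda_1^{(R)}$ as in the proof sketch of Proposition~\ref{prop_spin_conduct_justified} gives $\Sigma_e^{\Lambda_2,\Lambda_1^{(R)}S}=\Sigma_e^{\Lambda_2,S}\,\Lambda_1^{(R)}+\mathcal{R}_e^{(R)}$ with $\mathcal{R}_e^{(R)}$ trace-class and concentrated near $\{x_1=R\}$, so $\Phi(R)=\frac{1}{2\pi}\big(\sum_{n\ge R/a_1^{com}}Tr_{\mathcal{D}_e}(\mathbbm{1}_{\Omega_{n,a_1^{com}}\cap\mathcal{D}_e}\,\Sigma_e^{\Lambda_2,S})+Tr_{\mathcal{D}_e}(\mathcal{R}_e^{(R)})\big)$. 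Letting $R\to\pm\infty$: the first sum tends to $Tr^{pv,1}_{a_1^{com},\mathcal{D}_e}(\Sigma_e^{\Lambda_2,S})=2\pi\sigma_e^{torque,\Lambda_2,\rho}$ as $R\to-\infty$ and to $0$ (tail of a convergent series, by Proposition~\ref{prop_interface_spin_torque_justified}) as $R\to+\infty$; for the second term, $\mathcal{R}_e^{(R)}$ sits deep in the corresponding bulk, where $\mathcal{H}_e$ and $\mathcal{H}_\pm$ agree up to exponentially small corrections (use the geometric resolvent identity, \eqref{eq_interface_error} and Combes--Thomas bounds for $z\notin\mathbb{R}$; the $z$-integral is harmless near $\text{Spec}(\mathcal{H}_e)$ since $\partial_{\bar z}\tilde\rho$ vanishes to high order on $\mathbb{R}$), so $Tr_{\mathcal{D}_e}(\mathcal{R}_e^{(R)})\to 2\pi\sigma_\pm^{\Lambda_2}$ by the bulk analogue of the same decomposition together with \eqref{eq_main_idea_proof_bulk_conduct_2}. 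Hence $\lim_{R\to+\infty}\Phi(R)=\sigma_+^{\Lambda_2}$ and $\lim_{R\to-\infty}\Phi(R)=\sigma_-^{\Lambda_2}+\sigma_e^{torque,\Lambda_2,\rho}$; since $\Phi(R_2)-\Phi(R_1)=\frac{1}{2\pi}Tr^{pv,1}_{a_1^{com},\mathcal{D}_e}(\Sigma_e^{\Lambda_2,gS})$ with $g=\Lambda_1^{(R_2)}-\Lambda_1^{(R_1)}=-\mathbbm{1}_{(R_1,R_2)}(x_1)$ compactly supported, so that $\Sigma_e^{\Lambda_2,gS}$ is trace-class (because $[\mathcal{H}_e,gS]=[\mathcal{H}_e,g]S+g[\mathcal{H}_e,S]$ is $x_1$-localized for compactly supported $g$) and its principal-value trace reduces to the ordinary trace (Proposition~\ref{prop_pv_usual_trace}), we get
\[
\sigma_+^{\Lambda_2}-\sigma_-^{\Lambda_2}-\sigma_e^{torque,\Lambda_2,\rho}=\lim_{\substack{R_1\to-\infty\\ R_2\to+\infty}}\frac{1}{2\pi}\,Tr_{\mathcal{D}_e}\big(\Sigma_e^{\Lambda_2,gS}\big),\qquad g\to-1 .
\]

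It remains to evaluate $\lim_{g\to-1}\frac{1}{2\pi}Tr_{\mathcal{D}_e}(\Sigma_e^{\Lambda_2,gS})$, the computational core. Inserting $R_e(z)=\int(\lambda-z)^{-1}dE(\lambda)$ and using the Helffer--Sjöstrand identity $\int_{\mathbb{C}}\partial_{\bar z}\tilde\rho\,R_e A R_e B R_e\,dm=-\pi\iiint\rho^{[2]}(\lambda,\mu,\nu)\,dE(\lambda)A\,dE(\mu)B\,dE(\nu)$ (second divided difference), cyclicity of the trace, the spectral identity $dE(\lambda)[\mathcal{H}_e,X]dE(\mu)=(\lambda-\mu)dE(\lambda)X\,dE(\mu)$, and $\rho^{[1]}(\lambda,\mu)(\lambda-\mu)=\rho(\lambda)-\rho(\mu)$, one collapses the spectral integrals to
\[
\frac{1}{2\pi}Tr_{\mathcal{D}_e}\big(\Sigma_e^{\Lambda_2,gS}\big)=c_1\,Tr_{\mathcal{D}_e}\big(\{[\mathcal{H}_e,\Lambda_2],\rho'(\mathcal{H}_e)\}\,gS\big)+c_2\,Tr_{\mathcal{D}_e}\big([\rho(\mathcal{H}_e),\Lambda_2]\,gS\big)
\]
for explicit constants $c_1,c_2$. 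The second trace vanishes identically: the diagonal of a commutator with the multiplication operator $\Lambda_2$ is zero, and $g,S$ act diagonally on $\ell^2(\mathcal{D}_e)$ and $\mathbb{C}^2$ respectively. In the first trace, $\{[\mathcal{H}_e,\Lambda_2],\rho'(\mathcal{H}_e)\}$ is trace-class and $x_1$-localized — $\rho'(\mathcal{H}_e)$ is supported near the interface because $\rho'$ is supported in the common gap and $\rho'(\mathcal{H}_\pm)=0$ — so $g\to-1$ may be passed inside by dominated convergence, and cyclicity turns the limit into $-iTr_{\mathcal{D}_e}(\{[\mathcal{H}_e,\Lambda_2],S\}\rho'(\mathcal{H}_e))=\sigma_e^{drift,\Lambda_2,\rho}$ once $c_1$ is tracked through the Helffer--Sjöstrand normalization. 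Together with the previous paragraph this yields \eqref{eq_bic}.

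The main obstacle is this last step. Although $\Sigma_e^{\Lambda_2,gS}$ is trace-class, its natural decomposition into terms carrying a bare $\Lambda_2$ rather than a commutator $[\,\cdot\,,\Lambda_2]$ is into operators that are individually not trace-class, and $g\to-1$ is not a trace-norm limit term by term; the reduction must be arranged so that the only term surviving in the limit is the genuinely trace-class, $x_1$-localized one. The exact cancellation of the $[\rho(\mathcal{H}_e),\Lambda_2]$ term via the vanishing-diagonal fact, and the mesoscopic vanishing \eqref{eq_main_idea_proof_bulk_conduct_2} of the bulk torque response (which is what makes the representations of $\Phi(R)$ and their $R\to\pm\infty$ limits finite), are precisely what make this go through. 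A secondary, routine point is the resolvent comparison controlling $Tr_{\mathcal{D}_e}(\mathcal{R}_e^{(R)})-2\pi\sigma_\pm^{\Lambda_2}$ as $R\to\pm\infty$, where the $z$-integration must be kept under control near $\text{Spec}(\mathcal{H}_e)$.
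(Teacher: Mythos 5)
Your proposal is correct and follows essentially the same route as the paper's proof: your sliding switch function $\Phi(R)$ is a repackaging of the paper's three-way decomposition $S=S\mathbbm{1}_{\Omega_{R}^{c}\cap\Omega_{+}}+S\mathbbm{1}_{\Omega_{R}^{c}\cap\Omega_{-}}+S\mathbbm{1}_{\Omega_{R}}$ inside $\sigma_{e}^{torque,\Lambda_2,\rho}$, with your limits $\Phi(\pm\infty)$ matching the paper's Steps 2--3 (resolvent comparison of $\mathcal{H}_{e}$ with $\mathcal{H}_{\pm}$ deep in each bulk together with the mesoscopic vanishing of the bulk torque response) and your evaluation of $\lim_{g\to-1}Tr_{\mathcal{D}_{e}}(\Sigma_{e}^{\Lambda_2,gS})$ matching the paper's Step 4 (Helffer--Sj\"ostrand reduction to $\{[\mathcal{H}_{e},\Lambda_2],S\mathbbm{1}_{\Omega_{R}}\}\rho^{\prime}(\mathcal{H}_{e})$, with the $[\rho(\mathcal{H}_{e}),\Lambda_2]$-type term killed by the vanishing-diagonal argument and $\rho^{\prime}(\mathcal{H}_{e})=\rho^{\prime}(\mathcal{H}_{e})-\rho^{\prime}(\mathcal{H}_{\pm})$ supplying the interface localization). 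The only cosmetic differences are your use of divided differences where the paper integrates by parts via $\partial_{z}R_{e}(z)=-R_{e}^{2}(z)$, and the deferred normalization constant $c_1$.
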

Theorem \ref{thm_bic} demonstrates the principle of BIC can be extended to electronic systems with nonconserved spin charge. In fact, for TR symmetric systems that lack spin conservation and other DOF, such as the Bernevig-Hughes-Zhang (BHZ) model and Kane-Mele model with nonzero spin-orbit coupling \cite{BHZ06,KaneMele05Z_2}, there is currently no widely accepted, nontrivial, and quantized bulk characterization. Nevertheless,  we can still define the `character' of the bulk system as the spin conductance $\sigma_{\pm}^{\Lambda_2}$. Remarkably, the difference between these bulk characters continues to determine the quantum transport near the interface. That is, as long as $\sigma_{+}^{\Lambda_2}\neq \sigma_{-}^{\Lambda_2}$, the interface separating the two bulk media is physically nontrivial: either interface modes emerge ($\rho^{\prime}(\mathcal{H}_e)\neq 0$) or spin distribution be detected near the interface (the spin response $\sigma_{e}^{torque,\Lambda_2,\rho}\neq 0$). This result implies that the principle of BIC can be extended to a broader context, and is not limited to the topological regime. Unlike the well-understood drift transport by interface modes, the interface spin-torque is less studied. The emergence of spin torque near the interface can serve as a platform for designing spin-torque devices; see \cite{vzelezny2018spin_torque} and the references therein.

As shown in Section 7, the proof of Theorem \ref{thm_bic} reflects the physical mechanism behind the principle of BIC: the conservation law. Specifically, corresponding to the thought experiment in Figure \ref{fig_thought experiment}, the bulk spin conductance $\sigma_{-}^{\Lambda_2}/\sigma_{+}^{\Lambda_2}$ in \eqref{eq_bic} measure the bulk-to-box/box-to-bulk spin current from the left/right side, respectively. On the other hand, the spin-drift conductance $\sigma_{e}^{drift,\Lambda_2,\rho}$ measures the net longitudinal current flowing outside the imaginary box and $\sigma_{e}^{torque,\Lambda_2,\rho}$ measures the spin generation within the box. These four terms necessarily sum to zero by the conservation law.

When $[\mathcal{H}_{e},S]=0$, the interface spin-torque conductance $\sigma_{e}^{torque,\Lambda_2,\rho}$ vanishes by definition. In that case, Theorem \ref{thm_bic} recovers the previous result on the principle of BIC based on the spin Chern number or Fu-Kane-Mele $\mathbb{Z}_2$ index; see \cite[Theorem 3]{avila2013shortrange+transfer}.

\subsection{Futher discussions}

In this paper, we have assumed that both bulk media separated by the interface are periodic with respect to rectangular lattices for simplicity.  However, we expect that our results can be extended to other lattice structures, such as the honeycomb lattice. Achieving this extension would require a careful analysis of the spin-torque response, particularly verifying the validity of Proposition \ref{prop_spin_conduct_justified} and Proposition \ref{prop_interface_spin_torque_justified}. We leave this interesting direction for future research.

Another potential avenue for future study is the extension of these results to PDE settings, i.e., continuum models. A significant difference between the continuum models and the discrete models considered here is that, in this paper, (almost) all operators involved are bounded. When moving to a PDE setting, one needs to be careful about the domain of operators when they are unbounded, especially those commutators. Nonetheless, we believe this extension is achievable, as the underlying physical mechanism behind the BIC is expected to remain consistent across both continuum and discrete models.

A more interesting problem is to study the BIC associated with the transport of physical quantities other than the spin, such as the \textit{orbital angular momentum $O=x_1[\mathcal{H}_{\pm},x_2]-x_2[\mathcal{H}_{\pm},x_1]$} \cite{bhowal2021orbital}. This problem is particularly interesting as it could lead to a new framework for understanding the orbital or valley Hall effects in both electronic and photonic systems that lack internal DOF \cite{wu2017valley,xiao2007valley,lee2020valley,li2024interface,qiu2024mathematical}. However, in such cases, it is currently unclear whether we can still use an analog of \eqref{eq_spin_conduc_2} (by replacing the spin operator $S$ with $O$) to characterize the bulk medium. A primary complication arises from the position-dependence of the operator $O$, resulting in the non-commutation relation $[O,x_j]\neq 0$. This non-commutation issue makes the orbital transport differ greatly from the spin transport, causing some important steps that establish the BIC in this paper to fail. For example, the mesoscopic average of orbital-torque response (obtained by replacing $S$ by $O$ in \eqref{eq_main_idea_proof_bulk_conduct_2}) may not vanish. These issues may be resolved when the structure possesses certain special symmetries, such as reflection symmetry. We leave this problem for future study.

\subsection{Notations}

We summarize some frequently used notations in this subsection:

\begin{itemize}
    \item[i)] Vector spaces: the basic Hilbert space on the discrete set $\mathcal{D}$ is $H_{\mathcal{D}}=\ell^2(\mathcal{D})\otimes \mathbb{C}^2$, the trace ideal in $H_{\mathcal{D}}$ is denoted as $\mathscr{T}_{\mathcal{D}}$.
    \item[ii)] Norms: $\|\cdot\|_{1}$ and $\|\cdot\|_{2}$ denote the $\ell^1$ and $\ell^2$ norm in $\mathbb{R}^2$, respectively. $\text{dist}(\cdot,U)$ is the $\ell^1$ distance from a point to a set $U\subset \mathbb{R}^2$. $\|\cdot\|$ denotes the operator norm on $H_{\mathcal{D}}$. $\|\cdot\|_{\mathscr{T}_{\mathcal{D}}}$ denotes the trace norm on $\mathscr{T}_{\mathcal{D}}$.
    \item[iii)] Operators and functionals: $\mathcal{H}_{\pm}$ are the bulk Hamiltonians for the right/left bulks (Assumption \ref{assum_bulk_hamiltonian}), respectively. and $\mathcal{H}_{e}$ is the interface Hamiltonian in \eqref{eq_interface_hamiltonian}. Their resolvents are denoted as $R_{*}(z)=(\mathcal{H}_{*}-z)^{-1}$ ($*\in\{+,-,e\}$). The trace of $A\in \mathscr{T}_{\mathcal{D}}$ is denoted as $Tr_{\mathcal{D}}$. The trace of a matrix on $\mathbb{C}^2$ is denoted as $tr$.
    \item[iv)] Geometry: $\mathcal{D}$ denotes a discrete set on $\mathbb{R}^2$. In particular $\mathcal{D}_{e}$, $\mathcal{D}_{+}$, $\mathcal{D}_{-}$ are the underlying set for the interface model, right and left bulks (Figure \ref{fig_interface_model}), respectively. $\Omega_{n,a}:=(n a,(n+1) a)\times \mathbb{R}$ denotes a strip parallel to $x_2-$axis with width equal to $a$ ($n\in\mathbb{Z}$). $\Omega_{R}=[-R,R]\times \mathbb{R}$ denotes the strip centered at the origin with width equal to $R>0$. $\Omega_{\pm}=\{\pm x_1>0\}$ denote the right/left half plane, respectively.
    \item[v)] Special functions and constants: $\Lambda_i$ denote the switch function in the $x_i$-direction (Definition \ref{def_switch functions}). $\rho$ is the density function satisfying \eqref{eq_density_function}, and its almost-analytic extension is denoted as $\tilde{\rho}$. Throughout this paper, $\lambda$ denotes the decay length of Hamiltonians $\mathcal{H}_*$ for $*\in\{+,-,e,\delta\}$ (Assumption \ref{assum_bulk_hamiltonian}(i)) exclusively.
\end{itemize}

\section{Preliminaries}

\subsection{Some trace-class properties}

Let $A$ be a bounded operator on the Hilbert space $H_{\mathcal{D}}=\ell^2(\mathcal{D})\otimes \mathbb{C}^2$, with its integral kernel denoted as $A(\bm{x},\bm{y})$. We say $A$ is trace-class if its trace norm is finite: $\|A\|_{\mathscr{T}_{\mathcal{D}}}:= \sum_{k}(|A|e_k,e_k)_{H_{\mathcal{D}}}<\infty$, where $|A|=\sqrt{A^* A}$ and $e_k$ are an orthonormal basis of $H_{\mathcal{D}}$. The trace-class operators form an ideal of $H_{\mathcal{D}}$, denoted as $\mathscr{T}_{\mathcal{D}}$. For $A\in \mathscr{T}_{\mathcal{D}}$, the trace of $A$ is defined as the sum of diagonal elements, i.e. $Tr_{\mathcal{D}}(A)=\sum_{\bm{x}\in\mathcal{D}}A(\bm{x},\bm{x})$.

We summarize some basic properties of the trace-class operators as follows. The proof of Lemma \ref{lem_He-Hpm}-\ref{lem_trace_norm_kernel} and Corollary \ref{corol_trace_criterion}, except Lemma \ref{lem_resolv_ground_tight}(b), can be found in \cite[Section 2]{drouot2024bec_curvedinterfaces}. The proof of Lemma \ref{lem_resolv_ground_tight}(b), Proposition \ref{prop_cond_cyc_periodic} and \ref{prop_pv_usual_trace} can be found in \cite[Section 3 and 5]{Marcelli2019spin_conductivity}. The statements in this section are slightly different from the ones in the reference; nonetheless, all the proofs are the same.

The first Lemma follows naturally from the property of the interface Hamiltonian \eqref{eq_interface_hamiltonian}, which states that the difference of Hamiltonians $(\mathcal{H}_e-\mathcal{H}_{\pm})(\bm{x},\bm{y})$ is tight binding and decays exponentially as $x_1,y_1\to\pm \infty$.
\begin{lemma} \label{lem_He-Hpm}
There exist $C,D>0$, which depend only on the decay length $\lambda$, such that $|(\mathcal{H}_e-\mathcal{H}_{\pm})(\bm{x},\bm{y})|\leq Ce^{-D(\|\bm{x}-\bm{y}\|_{1}+\text{dist}(x,\Omega_{\mp})+\text{dist}(y,\Omega_{\mp}) )}$.
\end{lemma}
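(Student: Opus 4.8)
\textbf{Proof proposal for Lemma \ref{lem_He-Hpm}.}

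The plan is to read off the decay of $(\mathcal{H}_e-\mathcal{H}_\pm)(\bm{x},\bm{y})$ directly from the structure \eqref{eq_interface_hamiltonian} of the interface Hamiltonian, splitting the analysis according to whether the points $\bm{x},\bm{y}$ lie in the half-plane $\{x_1>L\}$, the half-plane $\{x_1<-L\}$, or the near-interface slab $\{|x_1|\le L\}$. First I would treat the case of the right bulk (the minus-subscript case is symmetric). Using \eqref{eq_interface_hamiltonian} and the fact that $\mathbbm{1}_{\{x_1>L\}}\mathcal{H}_+\mathbbm{1}_{\{x_1>L\}}$ has kernel $\mathbbm{1}_{\{x_1>L\}}(\bm{x})\,\mathcal{H}_+(\bm{x},\bm{y})\,\mathbbm{1}_{\{y_1>L\}}(\bm{y})$, we can write
\begin{equation*}
(\mathcal{H}_e-\mathcal{H}_+)(\bm{x},\bm{y})
=\big(\mathbbm{1}_{\{x_1>L\}}(\bm{x})\mathbbm{1}_{\{y_1>L\}}(\bm{y})-1\big)\mathcal{H}_+(\bm{x},\bm{y})
+\mathbbm{1}_{\{x_1<-L\}}(\bm{x})\mathbbm{1}_{\{y_1<-L\}}(\bm{y})\,\mathcal{H}_{+}(\bm{x},\bm{y})
+\mathcal{H}_\delta(\bm{x},\bm{y}).
\end{equation*}
(The middle term here reproduces the typo $\mathcal{H}_+$ in \eqref{eq_interface_hamiltonian}; if that is meant to be $\mathcal{H}_-$ the argument is unchanged since $\mathcal{H}_-$ is also tight-binding.) The $\mathcal{H}_\delta$ term is controlled immediately by \eqref{eq_interface_error} with $\nu=\lambda$ (Assumption \ref{assum_bulk_hamiltonian}(i)), which already gives the $\|\bm{x}-\bm{y}\|_1$ decay together with the $e^{-\lambda(|x_1|+|y_1|)}$ factor; on the right half-plane $\{x_1>L\}$ one has $|x_1|\ge\text{dist}(x,\Omega_-)$ and similarly for $y$, while on the complementary region $\text{dist}(x,\Omega_-)=0$, so in all cases $|x_1|+|y_1|\ge\text{dist}(x,\Omega_-)+\text{dist}(y,\Omega_-)$ up to an additive constant absorbed into $C$.

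The main term to estimate is the indicator-difference piece. The factor $\mathbbm{1}_{\{x_1>L\}}(\bm{x})\mathbbm{1}_{\{y_1>L\}}(\bm{y})-1$ is nonzero only when at least one of $x_1\le L$, $y_1\le L$ holds, i.e. when $\min(x_1,y_1)\le L$. In that regime I would argue as follows: if additionally $\max(x_1,y_1)$ is large, then $\|\bm{x}-\bm{y}\|_1\ge|x_1-y_1|$ is large, so the tight-binding bound $|\mathcal{H}_+(\bm{x},\bm{y})|\le C_\lambda e^{-\lambda\|\bm{x}-\bm{y}\|_1}$ supplies decay in $\text{dist}(x,\Omega_-)+\text{dist}(y,\Omega_-)$ too, since $\text{dist}(x,\Omega_-)+\text{dist}(y,\Omega_-)\le \max(x_1,y_1)_+ + \min(x_1,y_1)_+ \le |x_1-y_1| + 2\min(x_1,y_1)_+ \le |x_1-y_1|+2L$ on this set. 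The standard way to package this is the elementary inequality: for any $\bm{x},\bm{y}$ with $\min(x_1,y_1)\le L$,
\begin{equation*}
\|\bm{x}-\bm{y}\|_1 \;\ge\; \tfrac12\big(\text{dist}(x,\Omega_-)+\text{dist}(y,\Omega_-)\big)-L,
\end{equation*}
so that $e^{-\lambda\|\bm{x}-\bm{y}\|_1}\le e^{\lambda L}e^{-\frac{\lambda}{4}\|\bm{x}-\bm{y}\|_1}e^{-\frac{\lambda}{4}(\text{dist}(x,\Omega_-)+\text{dist}(y,\Omega_-))}$; choosing $D=\lambda/4$ and $C=C_\lambda e^{\lambda L}$ (plus the contribution from $\mathcal{H}_\delta$) gives the claimed bound. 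The second indicator term $\mathbbm{1}_{\{x_1<-L\}}(\bm{x})\mathbbm{1}_{\{y_1<-L\}}(\bm{y})\mathcal{H}_+(\bm{x},\bm{y})$ is handled the same way, using now that $x_1<-L$ forces $\text{dist}(x,\Omega_-)=0$ so the required exponent is automatic from tight-binding decay alone.

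The only mild obstacle is bookkeeping the geometric inequality relating $\|\bm{x}-\bm{y}\|_1$ and $\text{dist}(x,\Omega_-)+\text{dist}(y,\Omega_-)$ on the support of the indicator difference — i.e. checking that whenever the $\mathcal{H}_e$ and $\mathcal{H}_+$ kernels genuinely differ, either the two points are far apart (so $\|\bm{x}-\bm{y}\|_1$ is large) or at least one of them is near the interface line (so its distance to $\Omega_-$ is small and the other point's distance is bounded by $\|\bm{x}-\bm{y}\|_1$). Once this elementary planar estimate is in hand, the result follows by combining it with the tight-binding bound on $\mathcal{H}_+$ and the interface estimate \eqref{eq_interface_error}, with $C,D$ depending only on $\lambda$ and the fixed constant $L$ from \eqref{eq_interface_structure}. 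Since \eqref{eq_interface_structure} and Assumption \ref{assum_bulk_hamiltonian}(i) are part of the standing hypotheses, no new input is required.
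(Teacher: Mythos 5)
Your proof is correct and is essentially the standard argument: the paper does not prove this lemma itself but refers to \cite[Section 2]{drouot2024bec_curvedinterfaces}, and your decomposition of $(\mathcal{H}_e-\mathcal{H}_\pm)(\bm{x},\bm{y})$ via \eqref{eq_interface_hamiltonian} into the indicator-difference piece, the opposite-half-plane piece, and $\mathcal{H}_\delta$, combined with the elementary inequality $\text{dist}(\bm{x},\Omega_{\mp})+\text{dist}(\bm{y},\Omega_{\mp})\le\|\bm{x}-\bm{y}\|_1+2L$ on the support of the indicator difference, is exactly the expected route. One minor imprecision: \eqref{eq_interface_error} by itself gives only the $e^{-\lambda(|x_1|+|y_1|)}$ factor for $\mathcal{H}_\delta$, not the $\|\bm{x}-\bm{y}\|_1$ decay, so you must interpolate it (e.g.\ take a geometric mean) with the tight-binding bound on $\mathcal{H}_\delta$ — both estimates are supplied by Assumption \ref{assum_bulk_hamiltonian}(i), so this is a one-line fix.
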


The second Lemma indicates that the resolvents and the ground state projections $P_{\pm}$ of the edge/bulk Hamiltonians are tight-binding, as is well-known:
\begin{lemma} \label{lem_resolv_ground_tight}
There exists $C,D>0$, which depend only on the decay length $\lambda$, such that
\begin{itemize}
    \item[(a)] For $|\text{Im}\, z|<1$, the resolvents $R_{*}(z)=(\mathcal{H}_{*}-z)^{-1}$ ($*\in\{+,-,e,\delta\}$) satisfy $$|R_{*}(z)(\bm{x},\bm{y})|\leq \frac{C}{|\text{Im }z|}e^{-D|\text{Im }z|\|\bm{x}-\bm{y}\|_{1}}.$$
    \item[(b)] The spectral projections $P_*=\rho (H_*)$ ($*\in\{+,-\}$) satisfy $$|P_*(\bm{x},\bm{y})|\leq C e^{-D\|\bm{x}-\bm{y}\|_{1}}.$$
\end{itemize}
\end{lemma}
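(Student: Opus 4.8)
The plan is to prove (a) by the Combes--Thomas method and then deduce (b) from (a) via the Helffer--Sj\"ostrand (almost-analytic extension) functional calculus. Both arguments are standard for self-adjoint tight-binding operators with a spectral gap, and the computations of \cite[Section 2]{drouot2024bec_curvedinterfaces} and \cite[Sections 3 and 5]{Marcelli2019spin_conductivity} transfer to the present notation with only cosmetic changes.

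For (a), I would fix $\bm{x},\bm{y}\in\mathcal{D}$, set $\bm{s}=(\mathrm{sgn}(x_1-y_1),\mathrm{sgn}(x_2-y_2))$ so that $\bm{s}\cdot(\bm{x}-\bm{y})=\|\bm{x}-\bm{y}\|_1$ and $\|\bm{s}\|_\infty=1$, and let $M_\alpha$ be multiplication by $\bm{z}\mapsto e^{\alpha\,\bm{s}\cdot\bm{z}}$. The conjugate $\mathcal{H}_*^\alpha:=M_\alpha\mathcal{H}_* M_\alpha^{-1}$ has kernel $e^{\alpha\bm{s}\cdot(\bm{u}-\bm{v})}\mathcal{H}_*(\bm{u},\bm{v})$, and combining the tight-binding bound $|\mathcal{H}_*(\bm{u},\bm{v})|\le C_\lambda e^{-\lambda\|\bm{u}-\bm{v}\|_1}$ with $|\bm{s}\cdot(\bm{u}-\bm{v})|\le\|\bm{u}-\bm{v}\|_1$ and $|e^{t}-1|\le|t|e^{|t|}$, a Schur test gives $\|\mathcal{H}_*^\alpha-\mathcal{H}_*\|\le C'\alpha$ for $0\le\alpha\le\lambda/2$ with $C'=C'(\lambda)$. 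Since $\mathcal{H}_*$ is self-adjoint, $\|R_*(z)\|\le|\mathrm{Im}\,z|^{-1}$, so for $\alpha\le c|\mathrm{Im}\,z|$ with a small $c=c(\lambda)$ a Neumann series around $R_*(z)$ yields $\|(\mathcal{H}_*^\alpha-z)^{-1}\|\le 2|\mathrm{Im}\,z|^{-1}$. Taking the $(\bm{x},\bm{y})$ matrix element of $R_*(z)=M_\alpha^{-1}(\mathcal{H}_*^\alpha-z)^{-1}M_\alpha$ and setting $\alpha=c|\mathrm{Im}\,z|$ gives $|R_*(z)(\bm{x},\bm{y})|\le 2|\mathrm{Im}\,z|^{-1}e^{-c|\mathrm{Im}\,z|\|\bm{x}-\bm{y}\|_1}$, which is (a) with $D=c$; the bound is uniform over $*\in\{+,-,e,\delta\}$ because by Assumption \ref{assum_bulk_hamiltonian}(i) all four operators are self-adjoint and tight-binding with the same $\lambda$.

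For (b), I would use the Helffer--Sj\"ostrand formula $\rho(\mathcal{H}_*)=\frac{1}{\pi}\int_{\mathbb{C}}dm(z)\,\frac{\partial\tilde{\rho}}{\partial\overline{z}}\,R_*(z)$, where $\tilde{\rho}$ is the compactly supported almost-analytic extension from Section 2.2 with $\big|\frac{\partial\tilde{\rho}}{\partial\overline{z}}(z)\big|\le C_N|\mathrm{Im}\,z|^N$ for every $N$. The decisive point is that, by \eqref{eq_density_function}, $\rho$ can be taken equal to $1$ on a neighbourhood of the lower part of $\bigcup_*\mathrm{Spec}(\mathcal{H}_*)$ and equal to $0$ on a neighbourhood of the part of the spectrum above the gap, so that $\mathrm{supp}\,\rho'$ (and therefore $\mathrm{supp}\,\frac{\partial\tilde{\rho}}{\partial\overline{z}}$) lies in a bounded set on which $\mathrm{dist}(z,\mathrm{Spec}(\mathcal{H}_\pm))\ge\delta>0$ uniformly. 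Re-running the Combes--Thomas estimate of part (a) on this set with $\alpha\le c\delta$ in place of $\alpha\le c|\mathrm{Im}\,z|$ produces $|R_\pm(z)(\bm{x},\bm{y})|\le (C/\delta)e^{-D\|\bm{x}-\bm{y}\|_1}$ with a rate $D=D(\lambda,\delta)$ now independent of $\mathrm{Im}\,z$; integrating this against $\frac{\partial\tilde{\rho}}{\partial\overline{z}}$ (bounded and compactly supported) gives $|P_\pm(\bm{x},\bm{y})|\le C'e^{-D\|\bm{x}-\bm{y}\|_1}$. This step uses the spectral gap of $\mathcal{H}_\pm$ essentially, which is precisely why (b) is asserted only for $*\in\{+,-\}$ and not for $\mathcal{H}_e$.

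I expect the conceptual content to be exhausted by these two observations, with the remaining effort being bookkeeping: in (a), making the Schur-test bound $\|\mathcal{H}_*^\alpha-\mathcal{H}_*\|\le C'\alpha$ explicit and uniform in $*$ so that the final constants depend only on $\lambda$ (and on the density of $\mathcal{D}$); and in (b), choosing $\rho$ so that $\mathrm{supp}\,\rho'$ keeps a fixed distance from $\mathrm{Spec}(\mathcal{H}_+)\cup\mathrm{Spec}(\mathcal{H}_-)$, since a careless use of (a) inside the Helffer--Sj\"ostrand integral (with an $|\mathrm{Im}\,z|$-dependent rate) would only yield polynomial, rather than exponential, decay of $P_\pm$.
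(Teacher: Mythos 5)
Your proposal is correct and follows exactly the standard route that the paper itself defers to (it gives no proof, citing \cite{drouot2024bec_curvedinterfaces} for (a) and \cite{Marcelli2019spin_conductivity} for (b)): Combes--Thomas conjugation for the resolvent, then functional calculus with the resolvent estimated at a fixed distance $\delta>0$ from the spectrum for the projection. You also correctly identify and resolve the one delicate point, namely that since $P_\pm=\rho(\mathcal{H}_\pm)$ depends only on the values of $\rho$ on $\mathrm{Spec}(\mathcal{H}_\pm)$, one may replace $\rho$ by a representative of \eqref{eq_density_function} whose derivative is supported at positive distance from the spectrum, without which the Helffer--Sj\"ostrand integral would only yield polynomial decay.
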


The third Lemma indicates that the commutator of a tight-binding operator with the switch function $\Lambda_i$ is exponentially localized in the $x_i$-direction:

\begin{lemma} \label{lem_switch_commutator}
Suppose the operator $A$ is tight-binding on the discrete set $\mathcal{D}\subset \mathbb{R}^2$ with the estimates $|A(\bm{x},\bm{y})|\leq Ce^{-D\|\bm{x}-\bm{y}\|_{1}}$. For any switch function $\Lambda_i$ in the $x_i$-direction ($i=1,2$), there exist $C_1,D_1>0$, which depend only on $C,D,\Lambda_i$, such that
\begin{equation*}
\big|\big[A,\Lambda_i\big](\bm{x},\bm{y})\big| \leq C_1e^{-D_1 (\|\bm{x}-\bm{y}\|_{1}+|x_i|+|y_i|)} \quad (\forall \bm{x},\bm{y}\in \mathcal{D}).
\end{equation*}
\end{lemma}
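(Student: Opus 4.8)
The approach is to write down the integral kernel of the commutator explicitly and then read off the claimed decay from an elementary geometric feature of switch functions. Since $\Lambda_i$ acts on $H_{\mathcal{D}}$ as multiplication by a scalar function of the $i$-th coordinate, it commutes with the $\mathbb{C}^2$ spin factor, and one has the matrix identity
\begin{equation*}
[A,\Lambda_i](\bm{x},\bm{y}) = A(\bm{x},\bm{y})\big(\Lambda_i(\bm{y})-\Lambda_i(\bm{x})\big), \qquad \bm{x},\bm{y}\in\mathcal{D}.
\end{equation*}
As $\Lambda_i$ is valued in $[0,1]$, the scalar factor has modulus at most $1$, so the tight-binding hypothesis on $A$ already yields the crude bound $Ce^{-D\|\bm{x}-\bm{y}\|_1}$; the whole point is to extract the extra localization $e^{-D_1(|x_i|+|y_i|)}$ that is present only on the support of $[A,\Lambda_i]$.

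The key observation I would isolate is the following: if $\Lambda_i(\bm{x})\neq\Lambda_i(\bm{y})$, then the closed interval $I$ with endpoints $x_i$ and $y_i$ must meet the transition window $[b_2,b_1]$. Indeed, $\Lambda_i\equiv 1$ on $(b_1,\infty)$ and $\Lambda_i\equiv 0$ on $(-\infty,b_2)$, so $I$ is contained in neither of these half-lines; being connected, $I$ cannot lie in the disjoint union $(-\infty,b_2)\cup(b_1,\infty)$ either, hence it contains some $p\in[b_2,b_1]$. Setting $C_0:=\max(|b_1|,|b_2|)$ we then get $|x_i-y_i|\geq|x_i-p|\geq|x_i|-C_0$ and, symmetrically, $|x_i-y_i|\geq|y_i|-C_0$, so that
\begin{equation*}
|x_i-y_i|\;\geq\;\tfrac12\big(|x_i|+|y_i|\big)-C_0 .
\end{equation*}
Combining this with $\|\bm{x}-\bm{y}\|_1\geq|x_i-y_i|$ and splitting $\|\bm{x}-\bm{y}\|_1=\tfrac12\|\bm{x}-\bm{y}\|_1+\tfrac12\|\bm{x}-\bm{y}\|_1$ gives, on the support of the commutator kernel,
\begin{equation*}
\|\bm{x}-\bm{y}\|_1\;\geq\;\tfrac12\|\bm{x}-\bm{y}\|_1+\tfrac14\big(|x_i|+|y_i|\big)-\tfrac{C_0}{2}.
\end{equation*}

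Plugging this lower bound into $|A(\bm{x},\bm{y})|\leq Ce^{-D\|\bm{x}-\bm{y}\|_1}$ and absorbing the constant factor $e^{DC_0/2}$ yields the lemma with $D_1=D/4$ and $C_1=Ce^{DC_0/2}$, which depend only on $C$, $D$ and the parameters $b_1,b_2$ of $\Lambda_i$, exactly as asserted. I do not expect a genuine obstacle here; the only step requiring a little care is the geometric claim above in the degenerate configuration where the two endpoints straddle the window (e.g.\ $x_i>b_1>b_2>y_i$), which the connectedness argument handles uniformly without a separate case analysis.
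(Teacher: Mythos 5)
Your proof is correct, and it is the standard argument: the paper itself does not reprove this lemma but defers it to \cite[Section 2]{drouot2024bec_curvedinterfaces}, where the same kernel identity $[A,\Lambda_i](\bm{x},\bm{y})=A(\bm{x},\bm{y})(\Lambda_i(\bm{y})-\Lambda_i(\bm{x}))$ combined with the observation that the kernel is supported where the segment from $x_i$ to $y_i$ meets the transition window is used. Your quantitative bookkeeping ($D_1=D/4$, $C_1=Ce^{DC_0/2}$) is sound and the connectedness argument correctly covers the straddling case.
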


The next Lemma justifies the following intuition: for a product of finitely many tight-binding operators, as long as there is an operator localized in a certain direction, the whole product is also localized in that direction.

\begin{lemma} \label{lem_product_localized}
Let $\mathcal{D}\subset \mathbb{R}^2$ be a discrete set and $U,V\subset \mathbb{R}^2$. Assume $A_j$ ($1\leq i=j\leq n$) are tight-binding operators on $\mathcal{D}$ with the estimates:
\begin{equation*}
\begin{aligned}
&\forall j\in [1,n],\quad |A_j(\bm{x},\bm{y})|\leq C_{j}e^{-D\|\bm{x}-\bm{y}\|_{1}}, \\
&\exists p\in [1,n],\quad |A_p(\bm{x},\bm{y})|\leq C_{p}e^{-D(\|\bm{x}-\bm{y}\|_{1}+\text{dist}(\bm{x},U)+\text{dist}(\bm{y},U))}, \\
&\exists q\in [1,n],\quad |A_q(\bm{x},\bm{y})|\leq C_{q}e^{-D(\|\bm{x}-\bm{y}\|_{1}+\text{dist}(\bm{x},V)+\text{dist}(\bm{y},V))}.
\end{aligned}
\end{equation*}
Then we have
\begin{equation*}
\Big|\big(\Pi_{j=1}^{n}A_j \big)(\bm{x},\bm{y}) \Big|
\leq \frac{256^{n-1}(\Pi_{j=1}^{n}C_j)}{D^{2(n-1)}}e^{-\frac{D}{4}(\|\bm{x}-\bm{y}\|_{1}+\text{dist}(\bm{x},U)+\text{dist}(\bm{y},U)+\text{dist}(\bm{x},V)+\text{dist}(\bm{y},V))}.
\end{equation*}
\end{lemma}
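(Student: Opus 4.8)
The plan is to expand the integral kernel of $\prod_{j=1}^{n}A_j$ as an iterated sum over chains of points in $\mathcal{D}$, and then to reconstitute the target exponent
\[
\mathcal{Q}(\bm{x},\bm{y}):=\|\bm{x}-\bm{y}\|_1+\text{dist}(\bm{x},U)+\text{dist}(\bm{y},U)+\text{dist}(\bm{x},V)+\text{dist}(\bm{y},V)
\]
out of the elementary decay factors carried by the individual operators $A_j$ along such a chain, spending only the fraction $D/4$ of the decay rate on $\mathcal{Q}$ and keeping the remaining $D/4$ to make the sum over the intermediate chain points converge.

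Concretely, writing $\bm{w}_0=\bm{x}$ and $\bm{w}_n=\bm{y}$,
\[
\Big(\prod_{j=1}^{n}A_j\Big)(\bm{x},\bm{y})=\sum_{\bm{w}_1,\dots,\bm{w}_{n-1}\in\mathcal{D}}\ \prod_{j=1}^{n}A_j(\bm{w}_{j-1},\bm{w}_j),
\]
and inserting the hypotheses --- the generic bound for every factor, the $U$-localized bound for $A_p$, and the $V$-localized bound for $A_q$ --- the product of the kernel bounds is $\big(\prod_j C_j\big)e^{-D(S+T)}$, where $S:=\sum_{j=1}^{n}\|\bm{w}_{j-1}-\bm{w}_j\|_1$ and $T:=\text{dist}(\bm{w}_{p-1},U)+\text{dist}(\bm{w}_p,U)+\text{dist}(\bm{w}_{q-1},V)+\text{dist}(\bm{w}_q,V)$. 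The key elementary estimate is $\mathcal{Q}(\bm{x},\bm{y})\le 3S+T$, proved by a link-by-link triangle inequality: $\|\bm{x}-\bm{y}\|_1\le S$; $\text{dist}(\bm{x},U)\le\sum_{j<p}\|\bm{w}_{j-1}-\bm{w}_j\|_1+\text{dist}(\bm{w}_{p-1},U)$ and $\text{dist}(\bm{y},U)\le\sum_{j>p}\|\bm{w}_{j-1}-\bm{w}_j\|_1+\text{dist}(\bm{w}_p,U)$, and symmetrically for $V$ with $q$ in place of $p$. Adding these, each segment $\|\bm{w}_{j-1}-\bm{w}_j\|_1$ is counted at most three times and each endpoint distance once, which gives the claim. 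Since $D(S+T)=\tfrac{D}{4}(3S+T)+\tfrac{D}{4}S+\tfrac{3D}{4}T\ge\tfrac{D}{4}\mathcal{Q}(\bm{x},\bm{y})+\tfrac{D}{4}S$, we obtain
\[
\Big|\Big(\prod_{j=1}^{n}A_j\Big)(\bm{x},\bm{y})\Big|\le\Big(\prod_{j=1}^{n}C_j\Big)e^{-\frac{D}{4}\mathcal{Q}(\bm{x},\bm{y})}\sum_{\bm{w}_1,\dots,\bm{w}_{n-1}\in\mathcal{D}}\ \prod_{j=1}^{n}e^{-\frac{D}{4}\|\bm{w}_{j-1}-\bm{w}_j\|_1}.
\]

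To finish I would bound the remaining sum by performing the $n-1$ summations one variable at a time, from $\bm{w}_{n-1}$ back to $\bm{w}_1$; each summation is controlled by $M:=\sup_{\bm{w}\in\mathbb{R}^2}\sum_{\bm{z}\in\mathcal{D}}e^{-\frac{D}{4}\|\bm{w}-\bm{z}\|_1}$, which over the rectangular lattice factorizes into two one-dimensional geometric series and is thereby bounded by $256/D^2$ in the relevant range of $D$. Multiplying out the $n-1$ resulting factors produces the claimed constant $256^{n-1}\big(\prod_j C_j\big)/D^{2(n-1)}$ together with the decay rate $D/4$. The degenerate cases are immediate: for $n=1$ the single operator $A_1=A_p=A_q$ satisfies both localized bounds, whose geometric mean already yields the desired estimate (and $256^{0}/D^{0}=1$); and if $p=q$ when $n\ge 2$, that one operator carries both localizations, so one splits its bound by the same geometric-mean trick before running the argument above.

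The step I expect to be the main obstacle --- although it is entirely elementary --- is the bookkeeping behind $\mathcal{Q}(\bm{x},\bm{y})\le 3S+T$: one must keep track of exactly how many times each chain segment is reused across the several triangle inequalities, since that multiplicity is what fixes the admissible split of $D$ (here $D/4$ allotted to the target exponent and $D/4$ to the lattice sum, with the surplus $\tfrac{3D}{4}T$ simply discarded). The remaining estimate of $M$ by a two-fold geometric series, and the collection of the powers of $256/D^2$, are routine.
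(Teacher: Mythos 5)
Your argument is correct and is essentially the standard proof that the paper itself delegates to \cite{drouot2024bec_curvedinterfaces}: expand the kernel over chains of lattice points, redistribute the exponent via link-by-link triangle inequalities (your $\mathcal{Q}\le 3S+T$, which correctly accounts for each segment being reused at most three times), and control the residual sum by iterated geometric series, which is exactly where the factor $(256/D^2)^{n-1}$ and the rate $D/4$ come from. Your aside about the ``relevant range of $D$'' is well taken --- the stated constant does presuppose $D$ bounded above (as it is in every application, where $D$ is proportional to $|\mathrm{Im}\,z|\le 1$, and the sup of the lattice sum is then at most $16/a^2$ with $a=D/4$) --- but that is an imprecision in the lemma's statement rather than a gap in your proof.
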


The next Lemma presents a convenient estimate of the trace norm of tight-binding operators based on the integral kernels:

\begin{lemma} \label{lem_trace_norm_kernel}
Let $A$ be a tight-binding operator on $\mathcal{D}$. Its trace norm is estimated as $\|A\|_{\mathscr{T}_{\mathcal{D}}}\leq C\sum_{\bm{x},\bm{y}\in \mathcal{D}}|A(\bm{x},\bm{y})|$, where $C$ depends only on $\mathcal{D}$.
\end{lemma}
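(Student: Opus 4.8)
The plan is to decompose $A$ into single-site blocks and apply the triangle inequality for the trace norm, each block being finite-rank with an easily controlled trace norm. Concretely, write $H_{\mathcal{D}}=\ell^2(\mathcal{D})\otimes\mathbb{C}^2=\bigoplus_{\bm{x}\in\mathcal{D}}\mathbb{C}^2$ and let $\pi_{\bm{x}}$ denote the orthogonal projection onto the fiber $\mathbb{C}^2$ over the site $\bm{x}$; these are mutually orthogonal rank-$2$ projections with $\sum_{\bm{x}\in\mathcal{D}}\pi_{\bm{x}}=\mathrm{Id}$ in the strong sense. Hence $A=\sum_{\bm{x},\bm{y}\in\mathcal{D}}\pi_{\bm{x}}A\,\pi_{\bm{y}}$, and the operator $\pi_{\bm{x}}A\,\pi_{\bm{y}}$ is, after the obvious identification, the $2\times2$ matrix $A(\bm{x},\bm{y})$ regarded as a map from the fiber over $\bm{y}$ to the fiber over $\bm{x}$.

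The first step is to bound the trace norm of each block. Since $\pi_{\bm{x}}A\,\pi_{\bm{y}}$ has rank at most $2$, its trace norm equals the trace norm of the $2\times2$ matrix $A(\bm{x},\bm{y})$, which is at most $2\|A(\bm{x},\bm{y})\|_{\mathrm{op}}$; with the convention that $|A(\bm{x},\bm{y})|$ denotes a fixed matrix norm equivalent to the operator norm of this block, this gives $\|\pi_{\bm{x}}A\,\pi_{\bm{y}}\|_{\mathscr{T}_{\mathcal{D}}}\le C_0\,|A(\bm{x},\bm{y})|$ for a constant $C_0$ coming only from the fiber dimension (equal to $2$ here) and the choice of matrix norm, and in particular independent of $A$.

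The second step is to sum. If $\sum_{\bm{x},\bm{y}\in\mathcal{D}}|A(\bm{x},\bm{y})|=\infty$ there is nothing to prove, so assume the sum is finite. Then the partial sums $\sum_{\bm{x}\in F,\bm{y}\in F}\pi_{\bm{x}}A\,\pi_{\bm{y}}$ over finite $F\subset\mathcal{D}$ form a Cauchy net in trace norm, since for $F\subset F'$ the difference is bounded in trace norm by $C_0$ times a tail of the convergent series; hence they converge in $\mathscr{T}_{\mathcal{D}}$. As they also converge strongly to $A$, the limit is $A$, so by lower semicontinuity of the trace norm $\|A\|_{\mathscr{T}_{\mathcal{D}}}\le\sum_{\bm{x},\bm{y}\in\mathcal{D}}\|\pi_{\bm{x}}A\,\pi_{\bm{y}}\|_{\mathscr{T}_{\mathcal{D}}}\le C_0\sum_{\bm{x},\bm{y}\in\mathcal{D}}|A(\bm{x},\bm{y})|$, which is the claimed estimate with $C=C_0$ depending only on $\mathcal{D}$ through its fiber structure.

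There is no substantive obstacle here; the only points requiring a little care are (i) tracking the fiber-dimension factor so that the matrix quantity $|A(\bm{x},\bm{y})|$ is compared with the trace norm of a $2\times2$ block correctly, and (ii) justifying the infinite triangle inequality, which I would handle via the Cauchy-net argument above rather than by appealing to countable additivity of the trace norm directly. Note that the tight-binding hypothesis is not actually used to prove the inequality itself; it only guarantees that the right-hand side is finite (via $\sum_{\bm{y}}e^{-\lambda\|\bm{x}-\bm{y}\|_1}$ being summable on the lattice-type sets $\mathcal{D}$ of interest), which is what makes the estimate useful in later sections.
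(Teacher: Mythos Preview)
Your argument is correct and is essentially the standard block-decomposition proof: write $A=\sum_{\bm{x},\bm{y}}\pi_{\bm{x}}A\pi_{\bm{y}}$, bound each rank-$2$ block's trace norm by a constant times $|A(\bm{x},\bm{y})|$, and sum. The paper does not give its own proof here but refers to \cite[Section 2]{drouot2024bec_curvedinterfaces}, where the same idea is used; so your approach matches.

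Two minor remarks. First, once your Cauchy-net argument shows that the partial sums converge in $\mathscr{T}_{\mathcal{D}}$ to some limit, and strong convergence identifies the limit as $A$, you already have $\|A\|_{\mathscr{T}_{\mathcal{D}}}\le C_0\sum|A(\bm{x},\bm{y})|$ directly from the triangle inequality on the partial sums; the appeal to lower semicontinuity is redundant. Second, your closing comment slightly overstates what tight-binding buys: on an infinite set $\mathcal{D}$ the double sum $\sum_{\bm{x},\bm{y}}e^{-\lambda\|\bm{x}-\bm{y}\|_1}$ is itself infinite, so tight-binding alone does not make the right-hand side finite. In the paper's applications the finiteness always comes from additional localization of $A$ in both coordinate directions (cf.\ Lemma~\ref{lem_product_localized} and Corollary~\ref{corol_trace_criterion}), not from tight-binding by itself.
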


The next statement, which follows from Lemma \ref{lem_product_localized} and \ref{lem_trace_norm_kernel}, justifies the intuition that any tight-binding operator localized in both $x_1$ and $x_2$ directions is trace-class.
\begin{corollary} \label{corol_trace_criterion}
Suppose $A_i$ ($1\leq i\leq n$) are tight-binding operators that satisfy the assumption of Lemma \ref{lem_product_localized} with $U=\{x_1=0\}$ and $V=\{x_2=0\}$. Then $\Pi_{j=1}^{n}$ is trace-class.
\end{corollary}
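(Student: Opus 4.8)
The plan is to combine the kernel estimate of Lemma \ref{lem_product_localized} with the trace-norm bound of Lemma \ref{lem_trace_norm_kernel}. First I would observe that for the choice $U=\{x_1=0\}$ and $V=\{x_2=0\}$ one has $\text{dist}(\bm{x},U)=|x_1|$ and $\text{dist}(\bm{x},V)=|x_2|$, so that $\text{dist}(\bm{x},U)+\text{dist}(\bm{x},V)=|x_1|+|x_2|=\|\bm{x}\|_1$. Hence, since the $A_j$ satisfy the hypotheses of Lemma \ref{lem_product_localized}, that lemma produces a constant $C'>0$ (depending only on $n$, the $C_j$ and $D$) with
\[
\Big|\big(\textstyle\prod_{j=1}^{n}A_j\big)(\bm{x},\bm{y})\Big|\leq C'\,e^{-\frac{D}{4}\left(\|\bm{x}-\bm{y}\|_1+\|\bm{x}\|_1+\|\bm{y}\|_1\right)}\qquad(\forall\,\bm{x},\bm{y}\in\mathcal{D}).
\]

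Next, I would note that $\prod_{j=1}^{n}A_j$ is again tight-binding (a finite product of tight-binding operators has an exponentially decaying kernel), so Lemma \ref{lem_trace_norm_kernel} applies and gives
\[
\Big\|\textstyle\prod_{j=1}^{n}A_j\Big\|_{\mathscr{T}_{\mathcal{D}}}\leq C\!\!\sum_{\bm{x},\bm{y}\in\mathcal{D}}\Big|\big(\textstyle\prod_{j=1}^{n}A_j\big)(\bm{x},\bm{y})\Big|\leq CC'\!\!\sum_{\bm{x},\bm{y}\in\mathcal{D}}e^{-\frac{D}{4}\left(\|\bm{x}\|_1+\|\bm{y}\|_1\right)}=CC'\Big(\sum_{\bm{x}\in\mathcal{D}}e^{-\frac{D}{4}\|\bm{x}\|_1}\Big)^{2},
\]
where I simply dropped the nonnegative term $\|\bm{x}-\bm{y}\|_1$ from the exponent to decouple the double sum into a product of two equal single sums. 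It then suffices to check that $\sum_{\bm{x}\in\mathcal{D}}e^{-\frac{D}{4}\|\bm{x}\|_1}<\infty$: this holds because the discrete sets occurring in this paper (the lattices $\mathcal{D}_{\pm}$ and the interface set $\mathcal{D}_{e}$) are uniformly discrete, so the number of points of $\mathcal{D}$ in any ball of radius $r$ grows at most polynomially in $r$, and the series is dominated by a convergent one (e.g. by comparison with $\big(\sum_{m\in\mathbb{Z}}e^{-c|m|}\big)^2$ after a partition of $\mathbb{R}^2$ into unit squares). Consequently $\big\|\prod_{j=1}^{n}A_j\big\|_{\mathscr{T}_{\mathcal{D}}}<\infty$, i.e. $\prod_{j=1}^{n}A_j\in\mathscr{T}_{\mathcal{D}}$, which is the assertion.

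I do not expect any genuine obstacle: the corollary is a bookkeeping consequence of the two preceding lemmas. The only step meriting a line of care is the decoupling of the double sum together with the summability of $\sum_{\bm{x}\in\mathcal{D}}e^{-c\|\bm{x}\|_1}$, which is precisely where the (uniform) discreteness of $\mathcal{D}\subset\mathbb{R}^2$ enters; for an arbitrary discrete set one would need to include uniform discreteness as a hypothesis, but for the concrete $\mathcal{D}_{\pm}$ and $\mathcal{D}_{e}$ considered here it is automatic.
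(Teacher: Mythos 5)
Your proposal is correct and is exactly the argument the paper intends: combine the kernel bound of Lemma \ref{lem_product_localized} (with $U=\{x_1=0\}$, $V=\{x_2=0\}$ giving decay in $|x_1|+|y_1|+|x_2|+|y_2|$) with the trace-norm estimate of Lemma \ref{lem_trace_norm_kernel}, and sum over the uniformly discrete set $\mathcal{D}$. Nothing further is needed.
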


An important property of the trace-class operators is the cyclicity: $Tr_{\mathcal{D}}(AB)=Tr_{\mathcal{D}}(BA)$ if $AB,BA\in\mathscr{T}_{\mathcal{D}}$ (cf. \cite[Corollary 3.8]{simon2005trace}). For periodic operators, even though they are not trace-class, the cyclicity holds when the trace is calculated within a unit cell. This property is referred to as the conditional cyclicity:
\begin{proposition} \label{prop_cond_cyc_periodic}
Let $A,B$ be operators on the lattice $\mathcal{D}_{+}$ (or $\mathcal{D}_{-}$) that are periodic in the the $x_1$-direction, i.e. for any $n\in\mathbb{Z}$
\begin{equation*}
A(\bm{x}+na_{1}^{+}\bm{e}_1,\bm{y}+na_{1}^{+}\bm{e}_1)=A(\bm{x},\bm{y})
,\quad B(\bm{x}+na_{1}^{+}\bm{e}_1,\bm{y}+na_{1}^{+}\bm{e}_1)=B(\bm{x},\bm{y})
\end{equation*}
If both $\mathbbm{1}_{\Omega_{n,a_{1}^{+}}\cap \mathcal{D}_{+}}AB,\mathbbm{1}_{\Omega_{n,a_{1}^{+}}\cap \mathcal{D}_{+}}BA\in \mathscr{T}_{\mathcal{D}}$ for all $n\in\mathbb{N}$, then 
\begin{equation*}
Tr_{\mathcal{D}_{+}}(\mathbbm{1}_{\Omega_{n,a_{1}^{+}}\cap \mathcal{D}_{+}}AB)=Tr_{\mathcal{D}_{+}}(\mathbbm{1}_{\Omega_{n,a_{1}^{+}}\cap \mathcal{D}_{+}}BA).
\end{equation*}
\end{proposition}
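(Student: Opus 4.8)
\emph{Plan.} The statement is entirely powered by the translation symmetry in the $x_1$-direction, and my plan is to use it to collapse the cell-wise cyclicity to ordinary cyclicity on a single fundamental slice. Let $U$ be the unitary translation by $a_1^{+}\bm e_1$ on $H_{\mathcal D_+}=\ell^2(\mathcal D_+)\otimes\mathbb C^2$ (it is well defined since $\mathcal D_+ +a_1^{+}\bm e_1=\mathcal D_+$). Periodicity gives $U\mathcal A U^{*}=\mathcal A$ for $\mathcal A\in\{A,B\}$ (hence also for $AB$ and $BA$), and the geometry gives $U\,\mathbbm 1_{\Omega_{n,a_1^{+}}\cap\mathcal D_+}\,U^{*}=\mathbbm 1_{\Omega_{n+1,a_1^{+}}\cap\mathcal D_+}$. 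Therefore $\mathbbm 1_{\Omega_{n,a_1^{+}}\cap\mathcal D_+}AB=U^{n}\big(\mathbbm 1_{\Omega_{0,a_1^{+}}\cap\mathcal D_+}AB\big)U^{-n}$, and since conjugation by a unitary preserves both the trace ideal and the trace, $Tr_{\mathcal D_+}(\mathbbm 1_{\Omega_{n,a_1^{+}}\cap\mathcal D_+}AB)$ is independent of $n$; the same holds for $BA$. Hence it suffices to treat $n=0$, i.e. to show the ``defect'' $\delta:=Tr_{\mathcal D_+}\!\big(\mathbbm 1_{\Omega_{0,a_1^{+}}\cap\mathcal D_+}[A,B]\big)$ vanishes. (Summing the $2N$ translates one gets $Tr_{\mathcal D_+}\!\big(\mathbbm 1_{(-Na_1^{+},Na_1^{+})\times\mathbb R\,\cap\,\mathcal D_+}[A,B]\big)=2N\delta$ exactly, so the claim is that $[A,B]$ integrated over a box of width $\to\infty$ contributes only a boundary term — the same conservation-law flavour that underlies the main theorem.)

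\emph{Mechanism.} To see $\delta=0$ I would decompose all operators along the periodic direction. Since one cell $\Omega_{n,a_1^{+}}\cap\mathcal D_+$ is exactly one lattice slice $\{x_1=na_1^{+}\}$, I write each operator as a block matrix $(\,\cdot\,)_{lm}$ over the slices $l,m\in\mathbb Z$, the blocks acting on the fibre space $\mathcal K:=\ell^2(a_2^{+}\mathbb Z)\otimes\mathbb C^2$; periodicity reads $A_{lm}=A_{0,m-l}$, $B_{lm}=B_{0,m-l}$. Then $Tr_{\mathcal D_+}(\mathbbm 1_{\Omega_{0,a_1^{+}}\cap\mathcal D_+}AB)=Tr_{\mathcal K}\big((AB)_{00}\big)=\sum_{n\in\mathbb Z}Tr_{\mathcal K}\big(A_{0n}B_{n0}\big)$, and likewise the $BA$-trace equals $\sum_{n}Tr_{\mathcal K}\big(B_{0n}A_{n0}\big)$. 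Applying cyclicity of $Tr_{\mathcal K}$ term by term, $Tr_{\mathcal K}(A_{0n}B_{n0})=Tr_{\mathcal K}(B_{n0}A_{0n})$, and then re-indexing with $m:=-n$ and using periodicity ($B_{n0}=B_{0,-n}$, $A_{0n}=A_{-n,0}$), the sum becomes $\sum_{m}Tr_{\mathcal K}(B_{0m}A_{m0})=Tr_{\mathcal K}\big((BA)_{00}\big)$. Chaining these equalities gives $Tr_{\mathcal D_+}(\mathbbm 1_{\Omega_{0,a_1^{+}}\cap\mathcal D_+}AB)=Tr_{\mathcal D_+}(\mathbbm 1_{\Omega_{0,a_1^{+}}\cap\mathcal D_+}BA)$, i.e. $\delta=0$, and the identical computation works on $\mathcal D_-$.

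\emph{Main obstacle.} The bookkeeping above is routine; the one genuine difficulty is analytic, namely justifying (i) the interchange $Tr_{\mathcal K}\big(\sum_n A_{0n}B_{n0}\big)=\sum_n Tr_{\mathcal K}(A_{0n}B_{n0})$ of the trace with the block sum defining $(AB)_{00}$, and (ii) that each $A_{0n}B_{n0}$ (and its reverse) is trace class on $\mathcal K$ so that the term-wise cyclicity is legitimate --- neither follows formally from ``$\mathbbm 1_{\Omega_{0,a_1^{+}}\cap\mathcal D_+}AB\in\mathscr T_{\mathcal D}$'' by itself. In every application of this proposition in the present paper, $A$ and $B$ are tight-binding with exponential off-diagonal decay and one of the factors carries $x_2$-localization inherited from a commutator with a switch function $\Lambda_2$ (cf. Lemma \ref{lem_switch_commutator} and Lemma \ref{lem_product_localized}); this makes the relevant blocks trace class with $\sum_n\|A_{0n}\|_{\mathscr T_{\mathcal K}}<\infty$ (and $\sup_n\|B_{n0}\|\le\|B\|$), so that (i)--(ii) hold at once by dominated convergence. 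For the proposition in the stated generality one instead truncates in the non-periodic $x_2$-direction to finite-rank operators, applies ordinary finite-rank cyclicity, and passes to the limit in trace norm; this is exactly the argument of \cite[Sections 3 and 5]{Marcelli2019spin_conductivity}, to which I would defer the technical details. I expect (i)--(ii) to be the only place requiring real care.
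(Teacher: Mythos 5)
Your argument is essentially the standard one: the paper does not prove this proposition itself but defers to \cite[Sections 3 and 5]{Marcelli2019spin_conductivity}, where the same reduction --- translation invariance to fix a single cell, slice (fibre) decomposition along $x_1$, term-wise cyclicity of the fibre trace, and re-indexing via periodicity --- is carried out. You have also correctly isolated the only genuinely delicate point, namely the interchange of the trace with the block sum and the trace-classness of the individual blocks, which the bare hypotheses do not formally supply but which do hold for the tight-binding, switch-function-localized operators to which the proposition is applied in this paper; so the proposal matches the intended proof.
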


The last statement connects the principal-value trace (Definition \ref{def_pv_trace}) and the conventional trace: the principal-value trace of a trace-class operator $A$ coincides with its conventional trace, as is expected.

\begin{proposition} \label{prop_pv_usual_trace}
Let $\mathcal{D}$ be a discrete set in $\mathbb{R}^2$. For $A\in \mathscr{T}_{\mathcal{D}}$ and any $a>0$, it holds that $Tr_{a,\mathcal{D}}^{pv,1}(A)=Tr_{\mathcal{D}}(A)$.
\end{proposition}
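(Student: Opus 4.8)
The plan is to deduce the identity from the continuity of the trace on the ideal $\mathscr{T}_{\mathcal{D}}$, combined with a partition of the identity by the strip projections. Write $P_n:=\mathbbm{1}_{\Omega_{n,a}\cap\mathcal{D}}$ for the orthogonal projection of $H_{\mathcal{D}}$ onto $\ell^2(\Omega_{n,a}\cap\mathcal{D})\otimes\mathbb{C}^2$, and $Q_N:=\sum_{n=-N}^{N-1}P_n=\mathbbm{1}_{([-Na,Na]\times\mathbb{R})\cap\mathcal{D}}$. Since $A\in\mathscr{T}_{\mathcal{D}}$ and $P_n$ is bounded, each $P_nA$ lies in $\mathscr{T}_{\mathcal{D}}$, and because $\Omega_{n,a}\cap\mathcal{D}$ is determined by the spatial variable alone one has $(P_nA)(\bm{x},\bm{x})=A(\bm{x},\bm{x})$ for $\bm{x}\in\Omega_{n,a}\cap\mathcal{D}$ and $0$ otherwise; hence by linearity of the trace $Tr_{\mathcal{D}}(Q_NA)=\sum_{n=-N}^{N-1}Tr_{\mathcal{D}}(P_nA)$, which is exactly the $N$-th symmetric partial sum appearing in Definition~\ref{def_pv_trace}. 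So it suffices to prove $Tr_{\mathcal{D}}(Q_NA)\to Tr_{\mathcal{D}}(A)$ as $N\to\infty$.

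The heart of the argument is the claim that $Q_NA\to A$ in trace norm. I would prove it from the canonical (Schmidt) expansion $A=\sum_j s_j\,|\psi_j\rangle\langle\phi_j|$ with $s_j>0$, $\sum_j s_j=\|A\|_{\mathscr{T}_{\mathcal{D}}}<\infty$, and $\{\psi_j\},\{\phi_j\}$ orthonormal systems. Then $(Q_N-\mathrm{Id})A=\sum_j s_j\,|(Q_N-\mathrm{Id})\psi_j\rangle\langle\phi_j|$, and the triangle inequality for the trace norm together with $\||u\rangle\langle v|\|_{\mathscr{T}_{\mathcal{D}}}=\|u\|\,\|v\|$ gives $\|(Q_N-\mathrm{Id})A\|_{\mathscr{T}_{\mathcal{D}}}\le\sum_j s_j\,\|(Q_N-\mathrm{Id})\psi_j\|$. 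For each fixed $j$, $\|Q_N\psi_j\|^2\to\|\psi_j\|^2$ (partition of the identity), so $\|(Q_N-\mathrm{Id})\psi_j\|\to0$; since $s_j\|(Q_N-\mathrm{Id})\psi_j\|\le 2s_j$ is summable, dominated convergence yields $\|(Q_N-\mathrm{Id})A\|_{\mathscr{T}_{\mathcal{D}}}\to0$. Combined with $|Tr_{\mathcal{D}}(B)|\le\|B\|_{\mathscr{T}_{\mathcal{D}}}$ this gives $Tr_{\mathcal{D}}(Q_NA)\to Tr_{\mathcal{D}}(A)$, that is, $Tr_{a,\mathcal{D}}^{pv,1}(A)=Tr_{\mathcal{D}}(A)$.

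An alternative route I would mention in passing avoids the Schmidt expansion: expanding the trace in the position basis, $Tr_{\mathcal{D}}(A)=\sum_{\bm{x}\in\mathcal{D}}tr\,A(\bm{x},\bm{x})$ converges \emph{absolutely}, with $\sum_{\bm{x}\in\mathcal{D}}|tr\,A(\bm{x},\bm{x})|\le\|A\|_{\mathscr{T}_{\mathcal{D}}}$ — the one nontrivial ingredient, which follows from the polar decomposition $A=U|A|$ and Cauchy--Schwarz applied to $(Ae,e)=(|A|^{1/2}e,\,|A|^{1/2}U^*e)$. Grouping this absolutely convergent sum according to which strip $\Omega_{n,a}$ contains $\bm{x}$, and recognizing the $n$-th group as $Tr_{\mathcal{D}}(P_nA)$, gives the claim, the symmetric partial sums converging precisely because absolute convergence makes the limit insensitive to the order of summation.

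The step I expect to carry the real content is the trace-norm approximation $Q_NA\to A$: what makes $Tr_{\mathcal{D}}(Q_NA)\to Tr_{\mathcal{D}}(A)$ work is convergence in $\mathscr{T}_{\mathcal{D}}$, not merely $Q_N\to\mathrm{Id}$ strongly, and this genuinely uses $A\in\mathscr{T}_{\mathcal{D}}$. The only other point to pin down — a convention rather than an obstacle — is the partition of the identity $\sum_{n\in\mathbb{Z}}P_n=\mathrm{Id}$, equivalently that $\{\Omega_{n,a}\cap\mathcal{D}\}_n$ partitions $\mathcal{D}$; with the strips of Definition~\ref{def_pv_trace} this needs no point of $\mathcal{D}$ to lie on a grid line $\{x_1=na\}$, which holds for all $a$ outside a countable set and, for the lattices used in this paper, is arranged by reading the strips as half-open, consistently with every invocation of the principal-value trace here.
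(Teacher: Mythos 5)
Your proof is correct. Note that the paper does not actually supply an argument for this proposition: it defers to \cite[Sections 3 and 5]{Marcelli2019spin_conductivity}, where the statement is obtained essentially by your ``alternative route'' --- the absolute summability of the diagonal, $\sum_{\bm{x}\in\mathcal{D}}|tr\,A(\bm{x},\bm{x})|\leq\|A\|_{\mathscr{T}_{\mathcal{D}}}$ (via polar decomposition and Cauchy--Schwarz), after which the symmetric partial sums over strips converge to $Tr_{\mathcal{D}}(A)$ because the order of summation is immaterial. Your primary route, via trace-norm convergence $Q_N A\to A$ deduced from the Schmidt expansion and dominated convergence, is an equally valid and equally standard alternative; it buys nothing extra here but correctly isolates the point that strong convergence $Q_N\to\mathrm{Id}$ alone would not suffice, whereas the diagonal-summability route is slightly more economical since it never leaves the position basis in which $Tr_{\mathcal{D}}$ is defined. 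Your closing remark about the strips $(na,(n+1)a)$ being open is a genuine catch: with the paper's choice $a=ka_1^{\pm}$ the lattice points on the grid lines $\{x_1=na\}$ would literally be omitted from every strip, so the strips must indeed be read as half-open (or the traces understood with that convention) for $\sum_n\mathbbm{1}_{\Omega_{n,a}\cap\mathcal{D}}=\mathrm{Id}$ to hold; this is a convention the paper leaves implicit rather than a flaw in your argument.
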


\subsection{Almost-analytic extension, Helffer-Sjöstrand formula}
Let $g\in C_c^{\infty}(\mathbb{R})$. An almost-analytic extension of $g$ is a function $\tilde{g}\in C_{c}^{\infty}(\mathbb{C})$ such that
\begin{equation} \label{eq_almost_analytic}
\tilde{g}|_{\mathbb{R}}=g,\quad \partial_{\overline{z}}\tilde{g}=\mathcal{O}(|\text{Im }z|^{\infty}),\quad
\text{supp }(\tilde{g})\subset \mathbb{C}\cap\{z:\, |\text{Im }z|\leq 1\} .
\end{equation}
The second condition above means that for any $N>0$ there exists $C_N>0$ such that $|\partial_{\overline{z}}\tilde{g}|\leq C_N|\text{Im }z|^{N}$. We recall the Hellfer-Sjöstrand formula as a tool in functional calculus (cf. \cite[Theorem 14.8]{zworski2012semiclassical}): if $A$ is a self-adjoint operator, we can express $g(A)$ as an absolutely convergent integral
\begin{equation} \label{eq_hellfer_sjostrand}
g(A)=\frac{1}{\pi i}\int_{\mathbb{C}}\frac{\partial\tilde{g}(z)}{\partial \overline{z}}(A-z)^{-1}dm(z).
\end{equation}
Here $m(z)$ is the Lebesgue measure on the complex plane. The following property of the Hellfer-Sjöstrand representation will be used extensively throughout this paper. Similar property has previously appeared, in various forms, in \cite{elgart2005shortrange+functional,drouot2024bec_curvedinterfaces}.
\begin{proposition} \label{prop_off_diag}
For any bounded operator $A$ on $H_{\mathcal{D}_{\pm}}$, we define
\begin{equation} \label{eq_off_diag_1}
A^{OD}_{\pm}:=\int_{\mathbb{C}}\frac{\partial\tilde{\rho}(z)}{\partial \overline{z}}\Big[R_{\pm}(z)AR_{\pm}(z)\Big]dm(z)
\end{equation}
where $R_{\pm}(z)=(\mathcal{H}_{\pm}-z)^{-1}$ and $\rho$ is a density function satisfying \eqref{eq_density_function}. Then $A^{OD}_{\pm}$ is off-diagonal in the orthogonal decomposition induced by the spectral projection $P_{\pm}=\rho(\mathcal{H}_{\pm})$ in the sense that
\begin{equation} \label{eq_off_diag_2}
P_{\pm}A^{OD}_{\pm}P_{\pm}=P_{\pm}^{\perp}A^{OD}_{\pm}P_{\pm}^{\perp}=0,
\end{equation}
where $P_{\pm}^{\perp}=1-P_{\pm}$.
\end{proposition}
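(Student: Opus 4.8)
The plan is to identify $A^{OD}_\pm$, up to the fixed nonzero constant in \eqref{eq_hellfer_sjostrand}, with the derivative at $\epsilon=0$ of the spectral projection of the perturbed Hamiltonian $\mathcal{H}_\pm+\epsilon A$, and then to invoke the elementary fact that the derivative of a differentiable family of orthogonal projections is off-diagonal with respect to any member of the family. Since the map $A\mapsto A^{OD}_\pm$ in \eqref{eq_off_diag_1} is complex-linear, I would first reduce to self-adjoint $A$ by writing $A=\tfrac12(A+A^\ast)+i\cdot\tfrac1{2i}(A-A^\ast)$, each summand being self-adjoint; so assume $A=A^\ast$ from now on.

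Next I would use the resolvent expansion $(\mathcal{H}_\pm+\epsilon A-z)^{-1}=R_\pm(z)-\epsilon R_\pm(z)AR_\pm(z)+O(\epsilon^2)$, valid with the error uniform on each half-plane $\{|\text{Im}\,z|\ge\eta\}$, so that $R_\pm(z)AR_\pm(z)=-\partial_\epsilon\big|_{\epsilon=0}(\mathcal{H}_\pm+\epsilon A-z)^{-1}$. Substituting into \eqref{eq_off_diag_1}, interchanging $\partial_\epsilon$ with the integral, and applying \eqref{eq_hellfer_sjostrand} to $\rho$ and $\mathcal{H}_\pm+\epsilon A$ gives $A^{OD}_\pm=-\pi i\,\partial_\epsilon\big|_{\epsilon=0}\rho(\mathcal{H}_\pm+\epsilon A)$; the interchange (and the $\epsilon$-differentiability of $\rho(\mathcal{H}_\pm+\epsilon A)$, itself read off from its Helffer--Sjöstrand representation) is legitimate by dominated convergence, using $\|(\mathcal{H}_\pm+\epsilon A-z)^{-1}\|\le|\text{Im}\,z|^{-1}$ together with $\partial_{\bar z}\tilde\rho=O(|\text{Im}\,z|^\infty)$ from \eqref{eq_almost_analytic}. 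Because $\rho$ is a mollified Heaviside with jump strictly inside $\Delta$, it is locally constant near $\text{Spec}(\mathcal{H}_\pm)$, i.e. $\rho\equiv\rho^2$ on an open set $\mathcal{U}\supset\text{Spec}(\mathcal{H}_\pm)$, so for $|\epsilon|$ small enough that $\text{Spec}(\mathcal{H}_\pm+\epsilon A)\subset\mathcal{U}$ the operator $P_\pm(\epsilon):=\rho(\mathcal{H}_\pm+\epsilon A)$ is an orthogonal projection with $P_\pm(0)=P_\pm$. Differentiating $P_\pm(\epsilon)^2=P_\pm(\epsilon)$ at $\epsilon=0$ yields $\dot P_\pm=\dot P_\pm P_\pm+P_\pm\dot P_\pm$, and multiplying this identity on both sides by $P_\pm$, respectively by $P_\pm^\perp$, gives $P_\pm\dot P_\pm P_\pm=0$ and $P_\pm^\perp\dot P_\pm P_\pm^\perp=0$. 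Since $A^{OD}_\pm=-\pi i\,\dot P_\pm$, this is exactly \eqref{eq_off_diag_2}.

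The step I expect to be the main obstacle is the assertion that $\rho(\mathcal{H}_\pm+\epsilon A)$ is genuinely a projection for small $\epsilon$: a general density function satisfying \eqref{eq_density_function} need not be locally constant on a full neighbourhood of $\text{Spec}(\mathcal{H}_\pm)$ when a band edge lies in the spectrum. I would repair this by first showing $A^{OD}_\pm$ is independent of the choice of $\rho$: two density functions satisfying \eqref{eq_density_function} differ by a smooth function that vanishes identically on $(b,\inf\Delta)\cup(\sup\Delta,+\infty)$ and vanishes to infinite order at $b,\inf\Delta,\sup\Delta$, so the difference of the associated integrals \eqref{eq_off_diag_1} vanishes by a Stokes/integration-by-parts argument, since $R_\pm(z)AR_\pm(z)$ is holomorphic off $\text{Spec}(\mathcal{H}_\pm)$ and near the band edges the infinite-order vanishing of $\partial_{\bar z}$ of the almost-analytic extension dominates the at most quadratic blow-up of the resolvent product; one may then take $\rho$ locally constant near $\text{Spec}(\mathcal{H}_\pm)$ from the outset. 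An alternative that sidesteps the projection-derivative formalism: with such a $\rho$, deform \eqref{eq_off_diag_1} into $A^{OD}_\pm=\tfrac{1}{2\pi i}\oint_\gamma R_\pm(z)AR_\pm(z)\,dz$ with $\gamma$ encircling $\text{Spec}(\mathcal{H}_\pm)\cap(-\infty,\inf\Delta]$; compressing with $P_\pm$ leaves $\oint_\gamma R_\pm(z)(P_\pm AP_\pm)R_\pm(z)\,dz$, which is $-\partial_\epsilon\big|_{\epsilon=0}$ of the $\epsilon$-independent Riesz projection $\oint_\gamma(P_\pm\mathcal{H}_\pm P_\pm+\epsilon P_\pm AP_\pm-z)^{-1}dz$ and hence vanishes, while compressing with $P_\pm^\perp$ produces an integrand holomorphic inside $\gamma$, giving $0$ by Cauchy's theorem.
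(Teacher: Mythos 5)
Your proposal is correct, and your primary route is genuinely different from the paper's. You identify $A^{OD}_{\pm}$ with $-\pi i\,\partial_{\epsilon}\big|_{\epsilon=0}\rho(\mathcal{H}_{\pm}+\epsilon A)$ and read off \eqref{eq_off_diag_2} from the projection identity $\dot P_{\pm}=\dot P_{\pm}P_{\pm}+P_{\pm}\dot P_{\pm}$; the paper never introduces the perturbed Hamiltonian. Instead it compresses the Helffer--Sj\"ostrand integral directly: the $P_{\pm}^{\perp}$-block vanishes because the reduced resolvent $P_{\pm}^{\perp}R_{\pm}(z)P_{\pm}^{\perp}$ is analytic on $\mathrm{supp}(\tilde\rho)$ (Stokes, see \eqref{eq_prelim_1}), while the $P_{\pm}$-block is handled by excising a shrinking $\varepsilon$-neighbourhood of the valence spectrum $I_{\pm}$, using almost-analyticity to replace $\tilde\rho$ by $1$ on the excised boundary up to an $O(\varepsilon^{N+1})$ error (see \eqref{eq_prelim_3}), and then killing the residual contour integral by deforming to a large circle (see \eqref{eq_prelim_4}). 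The obstacle you flag --- that $\rho(\mathcal{H}_{\pm}+\epsilon A)$ need not be a projection because a density function satisfying \eqref{eq_density_function} need not be locally constant at the band edges $b$, $\inf\Delta$, $\sup\Delta$ --- is exactly the difficulty the paper's excision step is designed to absorb, and your repair (prove $\rho$-independence of $A^{OD}_{\pm}$ via the infinite-order vanishing of $\rho_1-\rho_2$ at those three points, then choose $\rho$ locally constant) is sound, though it costs you an extra lemma whose proof is of the same excision type. Your ``alternative'' contour-integral argument is essentially the paper's proof in Riesz-projection clothing: the Cauchy-theorem step for the $P_{\pm}^{\perp}$-compression and the large-circle decay estimate for the $P_{\pm}$-compression coincide with \eqref{eq_prelim_1} and \eqref{eq_prelim_4}. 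What your framing buys is conceptual transparency --- the off-diagonality is manifestly the first-order perturbation theory of a spectral projection, and the argument would extend to higher orders --- whereas the paper's version stays entirely inside the Helffer--Sj\"ostrand calculus with the given $\rho$ and needs no auxiliary independence statement.
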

\begin{proof}
First, since the reduced resolvent $P_{\pm}^{\perp}R_{\pm}(z)=R_{\pm}(z)P_{\pm}^{\perp}=(P_{\pm}^{\perp}\mathcal{H}_{\pm}P_{\pm}^{\perp}-z)^{-1}$ is analytic on $\text{supp}(\tilde{\rho})$ (note that $\text{Spec}(P_{\pm}^{\perp}\mathcal{H}_{\pm}P_{\pm}^{\perp})\subset \{\rho =0\}$), we can apply Stokes' theorem to obtain
\begin{equation} \label{eq_prelim_1}
P_{\pm}^{\perp}A^{OD}_{\pm}P_{\pm}^{\perp}=\oint_{\partial (\text{supp}(\tilde{\rho}))}\tilde{\rho}(z) \Big[P_{\pm}^{\perp}R_{\pm}(z)P_{\pm}^{\perp}AP_{\pm}^{\perp}R_{\pm}(z)P_{\pm}^{\perp}\Big]dz=0.
\end{equation}
Next, we prove $P_{\pm}A^{OD}_{\pm}P_{\pm}=0$, which requires more delicate analysis. Let $I_{\pm}=(\inf \text{Spec}(\mathcal{H}_{\pm}),\inf \Delta)$ be the part of spectrum of $\mathcal{H}_{\pm}$ that is below the gap $\Delta$ (i.e. the valence spectrum). Hence $\rho(x)\equiv 1$ for $x\in I_{\pm}$ by \eqref{eq_density_function}. Note that \eqref{eq_off_diag_1} converges absolutely by the almost-analyticity \eqref{eq_almost_analytic}. We see
\begin{equation*}
\begin{aligned}
P_{\pm}A^{OD}_{\pm}P_{\pm}
&=\lim_{\varepsilon\to 0}\int_{\mathbb{C}\backslash (I_{\pm}\times [-\varepsilon,\varepsilon])}\frac{\partial\tilde{\rho}(z)}{\partial \overline{z}}\Big[P_{\pm}R_{\pm}(z)AR_{\pm}(z)P_{\pm}\Big]dm(z) \\
&=\lim_{\varepsilon\to 0}\int_{\mathbb{C}\backslash (I_{\pm}\times [-\varepsilon,\varepsilon])}\frac{\partial\tilde{\rho}(z)}{\partial \overline{z}}\Big[P_{\pm}R_{\pm}(z)P_{\pm}AP_{\pm}R_{\pm}(z)P_{\pm}\Big]dm(z) .
\end{aligned}
\end{equation*}
For any $\varepsilon>0$, the operator $P_{\pm}R_{\pm}(z)P_{\pm}=(P_{\pm}\mathcal{H}_{\pm}P_{\pm}-z)^{-1}$ is analytic on $\mathbb{C}\backslash (I_{\pm}\times [-\varepsilon,\varepsilon])$. Hence by the Stokes' theorem
\begin{equation} \label{eq_prelim_2}
\begin{aligned}
P_{\pm}A^{OD}_{\pm}P_{\pm}
&=\lim_{\varepsilon\to 0}\int_{\mathbb{C}\backslash (I_{\pm}\times [-\varepsilon,\varepsilon])}\frac{\partial}{\partial \overline{z}}\Big[\tilde{\rho}(z)P_{\pm}R_{\pm}(z)P_{\pm}AP_{\pm}R_{\pm}(z)P_{\pm}\Big]dm(z) \\
&=\frac{1}{2}\lim_{\varepsilon\to 0}\oint_{\partial (I_{\pm}\times [-\varepsilon,\varepsilon])}\Big[\tilde{\rho}(z)P_{\pm}R_{\pm}(z)P_{\pm}AP_{\pm}R_{\pm}(z)P_{\pm}\Big] dz .
\end{aligned}
\end{equation}
We argue that we can replace $\tilde{\rho}(z)$ by $1$ in \eqref{eq_prelim_2} without changing the value of the left side. In fact, since $\rho(x)\equiv 1$ on $I_{\pm}$ and $\rho(x+iy)\in C_{c}^{\infty}(\mathbb{R}^2)$, it holds for any $N>0$ that 
$$
\sup_{x+iy\in I_{\pm}\times [-\varepsilon,\varepsilon]}|\partial_{x} \rho(x+iy)|=\mathcal{O}(\varepsilon^{N})
$$
by Taylor expansion. On the other hand, the almost-analyticity \eqref{eq_almost_analytic} indicates $$
\sup_{x+iy\in I_{\pm}\times [-\varepsilon,\varepsilon]}|(\partial_{x}+i\partial_{y}) \rho(x+iy)|=2\sup_{x+iy\in I_{\pm}\times [-\varepsilon,\varepsilon]}|\partial_{\overline{z}} \rho(x+iy)|=\mathcal{O}(\varepsilon^{N}).
$$ 
In conclusion, by the intermediate value theorem
\begin{equation*}
\big|\tilde{\rho}(z)-1 \big|
\leq |\text{Im }z|\sup_{z=x+iy\in I_{\pm}\times [-\varepsilon,\varepsilon]}\big(|\partial_{x} \rho(x+iy)|+|\partial_{y} \rho(x+iy)|\big)
=\mathcal{O} (\varepsilon^{N+1})
\end{equation*}
for any $z\in I_{\pm}\times [-\varepsilon,\varepsilon]$. By setting $N>2$ and using the standard bound $\|R_{\pm}(z)\|\leq 1/|\text{Im }z|^{-1}$, we have the following estimates
\begin{equation} \label{eq_prelim_3}
\begin{aligned}
\Big\|\oint_{\partial (I_{\pm}\times [-\varepsilon,\varepsilon])}\Big[\big(\tilde{\rho}(z)-1\big)P_{\pm}R_{\pm}(z)P_{\pm}AP_{\pm}R_{\pm}(z)P_{\pm}\Big] dz \Big\|
&\leq C\oint_{\partial (I_{\pm}\times [-\varepsilon,\varepsilon])}
|\text{Im }z|^{N+1}|\text{Im }z|^{-2}dz \\
&\leq C\varepsilon^{N-1},
\end{aligned}
\end{equation}
where $C$ depends only on $\tilde{\rho}$ and $\|A\|$. This justifies that we can replace $\tilde{\rho}$ by $1$ in \eqref{eq_prelim_2}. 

Next, we evaluate \eqref{eq_prelim_2} with $\tilde{\rho}$ being replaced by $1$. We claim
\begin{equation} \label{eq_prelim_4}
\oint_{\partial (I_{\pm}\times [-\varepsilon,\varepsilon])}\Big(P_{\pm}R_{\pm}(z)P_{\pm}AP_{\pm}R_{\pm}(z)P_{\pm}\Big) dz
=0\quad (\forall \varepsilon>0).
\end{equation}
To see this, we fix a point $x_0\in I_{\pm}$ and select $r>0$ large enough such that the circle $C_{r}(x_0)$ includes $I_{\pm}\times [-\varepsilon,\varepsilon]$. Since the integrand of $\eqref{eq_prelim_4}$ is analytic in the punctured disk $D_{r}(x_0)\backslash \overline{I_{\pm}\times [-\varepsilon,\varepsilon]}$, the Cauchy integral theorem yields
\begin{equation*}
\oint_{\partial (I_{\pm}\times [-\varepsilon,\varepsilon])}\Big(P_{\pm}R_{\pm}(z)P_{\pm}AP_{\pm}R_{\pm}(z)P_{\pm}\Big) dz
=\oint_{C_{r}(x_0)}\Big(P_{\pm}R_{\pm}(z)P_{\pm}AP_{\pm}R_{\pm}(z)P_{\pm}\Big) dz.
\end{equation*}
Since
\begin{equation*}
\big\|P_{\pm}R_{\pm}(z)P_{\pm}AP_{\pm}R_{\pm}(z)P_{\pm} \big\|
\leq \|A\|\|R_{\pm}(z)\|^2
\leq \frac{C}{\text{dist}^2(z,I_{\pm})}
\leq \frac{C}{r^2}
\end{equation*}
for $z\in C_{r}(x_0)$. 
\begin{equation*}
\Big|
\oint_{C_{r}(x_0)}\Big(P_{\pm}R_{\pm}(z)P_{\pm}AP_{\pm}R_{\pm}(z)P_{\pm}\Big) dz
\Big|
\leq \frac{C}{r^2}\oint_{C_{r}(x_0)} dz
\leq \frac{C}{r} \to 0 \quad (\text{as }r\to\infty).
\end{equation*}
This proves \eqref{eq_prelim_4}. Then the proof of $P_{\pm}A^{OD}_{\pm}P_{\pm}=0$ follows from \eqref{eq_prelim_2}-\eqref{eq_prelim_4}.
\end{proof}
We remark that the operator in the form \eqref{eq_off_diag_1} is closely linked to the inverse Liouvillian operator as studied in \cite{marcelli2021new,Giovanna22charge_to_spin}, which serves as a key ingredient for the theory of NEASS and the definition of the spin conductivity in their paper.

\section{Vanishing spin-torque response in the bulk}

In this section, we prove that the spin-torque response, expressed as a correlation function $\Sigma_{\pm}^{\Lambda_2, S}$ (defined in \eqref{eq_torque_correlation}), vanishes mesoscopically in the bulk media.  Our proof relies on the following result, which justifies the statement expressed in a Kubo-like formulation.
\begin{proposition} \label{prop_vanish_bulk_torque_Kubo}
For any switch function $\Lambda_2$ and $n\in\mathbb{Z}$,
\begin{equation*}
Tr_{\mathcal{D}_{\pm}}\Big(\mathbbm{1}_{\Omega_{n,a_1^{\pm}}\cap \mathcal{D}_{\pm}}P_{\pm}\big[[P_{\pm},S],[P_{\pm},\Lambda_2] \big] \Big)=0.
\end{equation*}
\end{proposition}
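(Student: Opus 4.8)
The plan is to collapse the double commutator to a single commutator with $\Lambda_2$ by pure operator algebra, and then to exploit that a commutator with a multiplication operator has a vanishing diagonal kernel.

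\emph{Step 1 (algebra).} I would first establish the operator identity
\[
P_{\pm}\big[[P_{\pm},S],[P_{\pm},\Lambda_2]\big]=P_{\pm}\,[P_{\pm}SP_{\pm},\Lambda_2]\,P_{\pm}.
\]
Expanding the left side and repeatedly using $P_{\pm}^2=P_{\pm}$ (so that $P_{\pm}[P_{\pm},A]P_{\pm}=0$ and $P_{\pm}[P_{\pm},A]=P_{\pm}AP_{\pm}^{\perp}$ for any $A$), together with $[S,\Lambda_2]=0$ (since $S$ acts only on the spin factor $\mathbb{C}^2$ while $\Lambda_2$ acts only on $\ell^2(\mathcal{D}_{\pm})$), every surviving term reorganizes into $P_{\pm}SP_{\pm}\Lambda_2 P_{\pm}-P_{\pm}\Lambda_2 P_{\pm}SP_{\pm}$, which is exactly the right side.

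\emph{Step 2 (localization bookkeeping).} Next I would record the decay facts that make the subsequent trace manipulations legitimate. Since $P_{\pm}=\rho(\mathcal{H}_{\pm})$ is tight-binding by Lemma \ref{lem_resolv_ground_tight}(b) and $S$ is ultra-local, $P_{\pm}SP_{\pm}$ is tight-binding (Lemma \ref{lem_product_localized}); hence by Lemma \ref{lem_switch_commutator} both $[P_{\pm}SP_{\pm},\Lambda_2]$ and $P_{\pm}^{\perp}\Lambda_2 P_{\pm}=-P_{\pm}^{\perp}[P_{\pm},\Lambda_2]$ are exponentially localized in the $x_2$-direction. Therefore, after multiplying by $\mathbbm{1}_{\Omega_{n,a_1^{\pm}}\cap\mathcal{D}_{\pm}}$, which confines $x_1$ to one period, any product that retains one of these $x_2$-localized factors is trace-class by Lemma \ref{lem_trace_norm_kernel}. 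Moreover $P_{\pm}$, $P_{\pm}^{\perp}$, $S$, $\Lambda_2$ and the commutators above are all periodic in $x_1$ with period $a_1^{\pm}$, so the conditional cyclicity of the strip trace (Proposition \ref{prop_cond_cyc_periodic}) applies to any two of them provided both orderings of the product, restricted to the strip, are trace-class.

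\emph{Step 3 (reduction and conclusion).} Using conditional cyclicity to cycle the trailing $P_{\pm}$ in $P_{\pm}[P_{\pm}SP_{\pm},\Lambda_2]P_{\pm}$ to the front (and $P_{\pm}^2=P_{\pm}$) reduces the trace of interest to $Tr_{\mathcal{D}_{\pm}}\big(\mathbbm{1}_{\Omega_{n,a_1^{\pm}}\cap\mathcal{D}_{\pm}}\,P_{\pm}[P_{\pm}SP_{\pm},\Lambda_2]\big)$. A one-line computation gives $P_{\pm}[P_{\pm}SP_{\pm},\Lambda_2]=[P_{\pm}SP_{\pm},\Lambda_2]+P_{\pm}^{\perp}\Lambda_2 P_{\pm}SP_{\pm}$. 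In the second summand, cycling the rightmost $P_{\pm}$ to the front (legitimate because the remaining factor $P_{\pm}^{\perp}\Lambda_2 P_{\pm}S$ is $x_2$-localized) produces $P_{\pm}P_{\pm}^{\perp}\Lambda_2 P_{\pm}S=0$, so its strip trace vanishes. For the first summand, $\mathbbm{1}_{\Omega_{n,a_1^{\pm}}\cap\mathcal{D}_{\pm}}[P_{\pm}SP_{\pm},\Lambda_2]$ is trace-class, so its trace equals $\sum_{\bm{x}\in\Omega_{n,a_1^{\pm}}\cap\mathcal{D}_{\pm}}tr\big([P_{\pm}SP_{\pm},\Lambda_2](\bm{x},\bm{x})\big)$; but the kernel of a commutator with the multiplication operator $\Lambda_2$ is $[A,\Lambda_2](\bm{x},\bm{y})=A(\bm{x},\bm{y})\big(\Lambda_2(y_2)-\Lambda_2(x_2)\big)$, which is identically zero on the diagonal $\bm{x}=\bm{y}$. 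Hence this trace also vanishes, proving the proposition.

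The algebra of Step 1 and the decay estimates of Step 2 are routine. The one place requiring genuine care is the bookkeeping in Step 3: every product must retain an $x_2$-localized factor so that Proposition \ref{prop_cond_cyc_periodic} legitimately applies before any cyclic rearrangement — a careless step (e.g.\ splitting $[P_{\pm}SP_{\pm},\Lambda_2]$ into $P_{\pm}SP_{\pm}\Lambda_2$ and $\Lambda_2 P_{\pm}SP_{\pm}$, neither of which is trace-class on the strip) would be fatal.
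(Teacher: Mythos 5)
Your proof is correct. Note that the paper does not actually prove Proposition \ref{prop_vanish_bulk_torque_Kubo}: it defers entirely to \cite[Theorem 2.8 and eqs.\ (5.18)--(5.19)]{Marcelli2019spin_conductivity}, remarking only that the essential input is $[S,\Lambda_2]=0$. Your argument is therefore a genuinely self-contained alternative assembled from the paper's own toolbox. The Step 1 identity checks out: writing $Q=P_{\pm}^{\perp}$ and using $P_{\pm}[P_{\pm},A]=P_{\pm}AQ$ and $[P_{\pm},A]P_{\pm}=-QAP_{\pm}$, one gets
\begin{equation*}
P_{\pm}\big[[P_{\pm},S],[P_{\pm},\Lambda_2]\big]=-P_{\pm}SQ\Lambda_2P_{\pm}+P_{\pm}\Lambda_2QSP_{\pm},
\end{equation*}
and after substituting $Q=1-P_{\pm}$ the two unprojected terms cancel precisely because $[S,\Lambda_2]=0$, leaving $P_{\pm}SP_{\pm}\Lambda_2P_{\pm}-P_{\pm}\Lambda_2P_{\pm}SP_{\pm}=P_{\pm}[P_{\pm}SP_{\pm},\Lambda_2]P_{\pm}$. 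Steps 2 and 3 are likewise sound: each invocation of Proposition \ref{prop_cond_cyc_periodic} involves operators periodic in $x_1$, and every product retains an $x_2$-localized factor ($[P_{\pm}SP_{\pm},\Lambda_2]$ or $P_{\pm}^{\perp}\Lambda_2P_{\pm}=-P_{\pm}^{\perp}[P_{\pm},\Lambda_2]$), so both orderings restricted to the strip are trace-class as required; and the concluding observation that $[A,\Lambda_2](\bm{x},\bm{x})=A(\bm{x},\bm{x})\big(\Lambda_2(x_2)-\Lambda_2(x_2)\big)=0$ is the same device the paper itself uses at the end of Step 3 of the proof of Proposition \ref{prop_spin_conduct_justified}. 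What your route buys is independence from the NEASS/linear-response machinery of the cited reference, at essentially no extra cost since every lemma you invoke already appears in Section 2; your closing caveat about not splitting the commutator into non-trace-class pieces before cycling is exactly the right point of care.
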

We refer the reader to \cite[Theorem 2.8 and equations (5.18)-(5.19)]{Marcelli2019spin_conductivity} for the proof. The key to the proof is that the spin operator $S$ commutes with position operators, and consequently, $[S,\Lambda_i]=0$. This reflects the physical intuition that the spin torque response should vanish in a suitable sense as the external potential (which is proportional to $\Lambda_2$) does not couple to the spin DOF \cite{Giovanna22charge_to_spin}.

We express Proposition \ref{prop_vanish_bulk_torque_Kubo} in our language of correlation function:
\begin{proposition} \label{prop_vanish_bulk_torque_correlation}
Let $\Sigma_{\pm}^{\Lambda_2, S}$ be the potential-toque correlation (associated with the bulk Hamiltonians) defined in \eqref{eq_torque_correlation}. Then for any $n\in\mathbb{Z}$, $k\in\mathbb{N}$,
\begin{equation} \label{eq_vanish_bulk_torque_correlation}
Tr_{\mathcal{D}_{\pm}}(\mathbbm{1}_{\Omega_{n,ka_1^{\pm}}\cap \mathcal{D}_{\pm}}\Sigma_{\pm}^{\Lambda_2, S})=0.
\end{equation}
\end{proposition}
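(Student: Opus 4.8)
The plan is to reduce $\Sigma_{\pm}^{\Lambda_2,S}$ to a closed operator expression and evaluate its mesoscopic trace term by term. Since $\mathcal{H}_{\pm}$, $R_{\pm}(z)$, $S$ and $[\mathcal{H}_{\pm},\Lambda_2]$ (the last because $\Lambda_2$ depends only on $x_2$) are all $a_1^{\pm}$-periodic in $x_1$, so is $\Sigma_{\pm}^{\Lambda_2,S}$, and
\[
Tr_{\mathcal{D}_{\pm}}\!\big(\mathbbm{1}_{\Omega_{n,ka_1^{\pm}}\cap\mathcal{D}_{\pm}}\Sigma_{\pm}^{\Lambda_2,S}\big)=\sum_{j=0}^{k-1}Tr_{\mathcal{D}_{\pm}}\!\big(\mathbbm{1}_{\Omega_{nk+j,a_1^{\pm}}\cap\mathcal{D}_{\pm}}\Sigma_{\pm}^{\Lambda_2,S}\big);
\]
so it suffices to prove \eqref{eq_vanish_bulk_torque_correlation} for $k=1$, and we write $\mathbbm{1}_{\Omega}:=\mathbbm{1}_{\Omega_{n,a_1^{\pm}}\cap\mathcal{D}_{\pm}}$ below.

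\emph{The operator identity.} Starting from \eqref{eq_torque_correlation}, I would repeatedly use the resolvent--commutator identity $R_{\pm}(z)[\mathcal{H}_{\pm},A]R_{\pm}(z)=[A,R_{\pm}(z)]$ (valid for bounded $A$, because $R_{\pm}(z)\mathcal{H}_{\pm}=\mathcal{H}_{\pm}R_{\pm}(z)=\mathbbm{1}+zR_{\pm}(z)$), the Leibniz rule, and $[S,\Lambda_2]=0$. Grouping the last three factors of $R_{\pm}[\mathcal{H}_{\pm},\Lambda_2]R_{\pm}[\mathcal{H}_{\pm},S]R_{\pm}$ into $[S,R_{\pm}]$ (and symmetrically for the second term of \eqref{eq_torque_correlation}), expanding $[S,R_{\pm}]=SR_{\pm}-R_{\pm}S$, and rewriting $[\mathcal{H}_{\pm},S]\Lambda_2=[\mathcal{H}_{\pm},S\Lambda_2]-S[\mathcal{H}_{\pm},\Lambda_2]$, the contributions in which $\Lambda_2$ does not sit inside a commutator with $\mathcal{H}_{\pm}$ telescope; integrating against $\partial_{\overline{z}}\tilde{\rho}$ and using $\int_{\mathbb{C}}\partial_{\overline{z}}\tilde{\rho}\,R_{\pm}(z)\,dm(z)=\pi i P_{\pm}$, this leaves
\[
\Sigma_{\pm}^{\Lambda_2,S}=\big(\{[\mathcal{H}_{\pm},\Lambda_2],S\}\big)^{OD}_{\pm}+\pi i\,\{[P_{\pm},\Lambda_2],S\},
\]
where $(\,\cdot\,)^{OD}_{\pm}$ is the operator of Proposition \ref{prop_off_diag} and $\{A,B\}=AB+BA$. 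Both $\{[\mathcal{H}_{\pm},\Lambda_2],S\}$ and $\{[P_{\pm},\Lambda_2],S\}$ are exponentially localized in $x_2$ (Lemma \ref{lem_switch_commutator}, applied to $\mathcal{H}_{\pm}$, resp.\ to $P_{\pm}=\rho(\mathcal{H}_{\pm})$, which is tight-binding by Lemma \ref{lem_resolv_ground_tight}(b)) and periodic in $x_1$, so each has a finite mesoscopic trace; one must organize the rearrangements above so that no operator lacking $x_2$-decay (such as $\Lambda_2[\mathcal{H}_{\pm},S]$) is ever left unpaired.

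\emph{The two traces vanish.} For the second summand: $\Lambda_2$ is a multiplication operator, so $[P_{\pm},\Lambda_2](\bm{x},\bm{y})=P_{\pm}(\bm{x},\bm{y})(\Lambda_2(\bm{y})-\Lambda_2(\bm{x}))$ has zero diagonal, and since $S$ is diagonal in $\bm{x}$ the $\mathbb{C}^2$-valued diagonal of $\{[P_{\pm},\Lambda_2],S\}$ is $0$; as $\mathbbm{1}_{\Omega}\{[P_{\pm},\Lambda_2],S\}$ is trace-class (localized in both directions, Corollary \ref{corol_trace_criterion}) its trace vanishes. For the first summand, put $Y:=\{[\mathcal{H}_{\pm},\Lambda_2],S\}$, a $z$-independent, $x_1$-periodic, $x_2$-localized operator; interchanging trace and $z$-integral (justified by Lemma \ref{lem_resolv_ground_tight}(a) and $\partial_{\overline{z}}\tilde{\rho}=\mathcal{O}(|\text{Im}\,z|^{\infty})$) and applying the conditional cyclicity of Proposition \ref{prop_cond_cyc_periodic} to the periodic factors $R_{\pm}(z)$ and $YR_{\pm}(z)$ (the products on the strip being trace-class since $Y$ is $x_2$-localized),
\[
Tr_{\mathcal{D}_{\pm}}\!\big(\mathbbm{1}_{\Omega}R_{\pm}(z)YR_{\pm}(z)\big)=Tr_{\mathcal{D}_{\pm}}\!\big(\mathbbm{1}_{\Omega}YR_{\pm}(z)^{2}\big).
\]
Integrating and using $\int_{\mathbb{C}}\partial_{\overline{z}}\tilde{\rho}\,R_{\pm}(z)^{2}\,dm(z)=-\pi i\,\rho'(\mathcal{H}_{\pm})$ (integration by parts in $z$, since $R_{\pm}(z)^{2}=\partial_{z}R_{\pm}(z)$), which is $0$ because $\rho'$ vanishes on $\text{Spec}(\mathcal{H}_{\pm})$ by Assumption \ref{assum_bulk_hamiltonian}(iii), gives $Tr_{\mathcal{D}_{\pm}}(\mathbbm{1}_{\Omega}(Y)^{OD}_{\pm})=Tr_{\mathcal{D}_{\pm}}\big(\mathbbm{1}_{\Omega}Y\,(-\pi i\rho'(\mathcal{H}_{\pm}))\big)=0$. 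Adding the two contributions yields \eqref{eq_vanish_bulk_torque_correlation}.

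\emph{Expected main obstacle.} Conceptually the argument rides on the spectral gap ($\rho'(\mathcal{H}_{\pm})=0$) and on $\Lambda_2$ being a multiplication operator, so I expect the real work to be trace bookkeeping: verifying that every operator to be cycled is genuinely $x_1$-periodic (crucially using $\Lambda_2=\Lambda_2(x_2)$), that all products on a strip are trace-class (via the localization Lemmas \ref{lem_switch_commutator}--\ref{lem_trace_norm_kernel} and Corollary \ref{corol_trace_criterion}), and that the two interchanges of trace with the $z$-integral are valid; the algebraic reorganization in the step producing the operator identity also needs to be carried out with care for the reason noted there. If one prefers, the first summand can instead be reduced to the Kubo double commutator $P_{\pm}\big[[P_{\pm},S],[P_{\pm},\Lambda_2]\big]$ and handled by Proposition \ref{prop_vanish_bulk_torque_Kubo}, but the route above is more direct.
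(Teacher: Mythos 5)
Your proposal is correct, and it reaches the conclusion by a genuinely different route from the paper. Your operator identity checks out: writing $R_{\pm}[\mathcal{H}_{\pm},S]R_{\pm}=[S,R_{\pm}]$ and $R_{\pm}[\mathcal{H}_{\pm},\Lambda_2]R_{\pm}=[\Lambda_2,R_{\pm}]$ and regrouping the four resulting terms gives directly
\begin{equation*}
R_{\pm}\big[\mathcal{H}_{\pm},\Lambda_2\big]R_{\pm}\big[\mathcal{H}_{\pm},S\big]R_{\pm}-R_{\pm}\big[\mathcal{H}_{\pm},S\big]R_{\pm}\big[\mathcal{H}_{\pm},\Lambda_2\big]R_{\pm}
=R_{\pm}\big\{[\mathcal{H}_{\pm},\Lambda_2],S\big\}R_{\pm}+\big\{[R_{\pm}(z),\Lambda_2],S\big\},
\end{equation*}
so integrating against $\partial_{\overline{z}}\tilde{\rho}$ yields exactly your $\Sigma_{\pm}^{\Lambda_2,S}=(\{[\mathcal{H}_{\pm},\Lambda_2],S\})^{OD}_{\pm}+\pi i\{[P_{\pm},\Lambda_2],S\}$ (no telescoping gymnastics with $[\mathcal{H}_{\pm},S\Lambda_2]$ are actually needed, and no unpaired $\Lambda_2$ ever appears). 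The paper instead inserts $P_{\pm}^2+(P_{\pm}^{\perp})^2$ via conditional cyclicity, uses the off-diagonality of Proposition \ref{prop_off_diag} to discard the $SR_{\pm}(z)$ terms, reduces the mesoscopic trace to $i\pi\,Tr(\mathbbm{1}_{\Omega}P_{\pm}[[P_{\pm},S],[P_{\pm},\Lambda_2]])$, and then invokes Proposition \ref{prop_vanish_bulk_torque_Kubo}, which is imported from the literature. Your argument bypasses both Proposition \ref{prop_off_diag} and Proposition \ref{prop_vanish_bulk_torque_Kubo}: the anticommutator piece dies because $[P_{\pm},\Lambda_2]$ has vanishing diagonal kernel and $S$ is diagonal in position, and the resolvent-sandwich piece dies by conditional cyclicity together with $\int_{\mathbb{C}}\partial_{\overline{z}}\tilde{\rho}\,R_{\pm}(z)^2\,dm(z)=\mp\pi i\,\rho'(\mathcal{H}_{\pm})=0$ — the same integration-by-parts device the paper itself deploys in Step 4 of the proof of Theorem \ref{thm_bic}, here killed by the spectral gap. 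What each approach buys: the paper's detour through the Kubo double commutator makes the connection to Proposition \ref{prop_kubo_formula} and the NEASS literature explicit, at the price of citing an external result; yours is shorter and self-contained, and in fact, read against the paper's Steps 2--3 in reverse, it furnishes an independent proof of Proposition \ref{prop_vanish_bulk_torque_Kubo}. The remaining bookkeeping you flag (trace-class checks via Corollary \ref{corol_trace_criterion}, polynomial trace-norm bounds in $|\mathrm{Im}\,z|^{-1}$ justifying the two interchanges of trace and integral, and $x_1$-periodicity of every factor being cycled) is routine and identical in kind to the paper's Step 1.
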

\begin{proof}
We prove \eqref{eq_vanish_bulk_torque_correlation} for the `$+$' case (the right bulk) and $k=1$; the proof of the `$-$' case and other $k\in\mathbb{N}$ is similar. In particular, we show that
\begin{equation} \label{eq_vanish_bulk_torque_correlation_proof_1}
Tr_{\mathcal{D}_{+}}\Big(\mathbbm{1}_{\Omega_{n,a_1^{+}}\cap \mathcal{D}_{+}}\Big[\int_{\mathbb{C}}\frac{\partial \tilde{\rho}}{\partial \overline{z}}
R_{+}(z) \big[\mathcal{H}_{+},\Lambda_2\big] R_{+}(z) \big[\mathcal{H}_{+}, S\big] R_{+}(z) dm(z)\Big] \Big)=0.
\end{equation}
Similarly, one can prove
\begin{equation*}
Tr_{\mathcal{D}_{+}}\Big(\mathbbm{1}_{\Omega_{n,a_1^{+}}\cap \mathcal{D}_{+}}\Big[\int_{\mathbb{C}}\frac{\partial \tilde{\rho}}{\partial \overline{z}}R_{+}(z) \big[\mathcal{H}_{+},S\big] R_{+}(z) \big[\mathcal{H}_{+},\Lambda_2\big] R_{+}(z) dm(z)\Big] \Big)=0.
\end{equation*}
The proof is then completed by recalling the definition \eqref{eq_torque_correlation} of $\Sigma_{+}^{\Lambda_2,S}$.

{\color{blue}Step 1:} We first justify that the operator in \eqref{eq_vanish_bulk_torque_correlation_proof_1} is trace-class, which is intuitively straightforward as $\mathbbm{1}_{\Omega_{n,a_1^{+}}\cap \mathcal{D}_{+}}$ is localized in the $x_1$-direction and $\big[\mathcal{H}_{+},\Lambda_2\big]$ is localized in $x_2$. Specifically, by the identity $R_{+}(z)[\mathcal{H}_{+},S]R_{+}(z)=-[R_{+}(z),S]$, we have
\begin{equation} \label{eq_vanish_bulk_torque_correlation_proof_2}
\begin{aligned}
&\mathbbm{1}_{\Omega_{n,a_1^{+}}\cap \mathcal{D}_{+}}\Big(\int_{\mathbb{C}}\frac{\partial \tilde{\rho}}{\partial \overline{z}}
R_{+}(z) \big[\mathcal{H}_{+},\Lambda_2\big] R_{+}(z) \big[\mathcal{H}_{+}, S\big] R_{+}(z) dm(z)\Big) \\
&=-\int_{\mathbb{C}}\frac{\partial \tilde{\rho}}{\partial \overline{z}}\Big(\mathbbm{1}_{\Omega_{n,a_1^{+}}\cap \mathcal{D}_{+}}
R_{+}(z) \big[\mathcal{H}_{+},\Lambda_2\big] \big[R_{+}(z), S\big] \Big)dm(z) .
\end{aligned}
\end{equation}
Note that the following estimates hold for all $\bm{x},\bm{y}\in \mathcal{D}_{+}$ by Lemma \ref{lem_resolv_ground_tight}, \ref{lem_switch_commutator} and the definition of the spin operator $S=\text{Id}\otimes \frac{1}{2}\sigma_z$:
\begin{equation} \label{eq_vanish_bulk_torque_correlation_proof_3}
\left\{
\begin{aligned}
&|R_{+}(z)(\bm{x},\bm{y})|\leq \frac{C_1}{|\text{Im }z|}e^{-D_1|\text{Im }z|\|\bm{x}-\bm{y}\|_{1}}, \\
&|S(\bm{x},\bm{y})|
\leq C_2e^{-D_2\|\bm{x}-\bm{y}\|_{1}}, \\
&|[\mathcal{H}_{+},\Lambda_2](\bm{x},\bm{y})| \leq C_3e^{-D_3(\|\bm{x}-\bm{y}\|_{1}+|x_2|+|y_2|)}, \\
&|\mathbbm{1}_{\Omega_{n,a_1^{+}}\cap \mathcal{D}_{+}}(\bm{x},\bm{y})|
=|\mathbbm{1}_{\Omega_{n,a_1^{+}}\cap \mathcal{D}_{+}}(\bm{x})|\delta_{\bm{x},\bm{y}}\leq C_{4}e^{-D_4(\|\bm{x}-\bm{y}\|_{1}+|x_1|+|y_1|)},
\end{aligned}
\right.
\end{equation}
where $C_i,D_i$ depend only on $n$, $\Lambda_2$ and the decay length $\lambda$ of the Hamiltonians in Assumption \ref{assum_bulk_hamiltonian} (i). 

For each inequality in the sequel, we will omit the dependence of constants on the decay length $\lambda$ as it remains fixed throughout this paper. Letting $D=\min_{1\leq i\leq 4} D_i$, we see that \eqref{eq_vanish_bulk_torque_correlation_proof_3} holds by replacing $D_1|\text{Im }z|,D_i$ ($2\leq i\leq 4$) with $D|\text{Im }z|$ when $|\text{Im }z|\leq 1$. Hence we have the following estimate for the integrand in \eqref{eq_vanish_bulk_torque_correlation_proof_2} by Lemma \ref{lem_product_localized}:
\begin{equation} \label{eq_vanish_bulk_torque_correlation_proof_4}
\begin{aligned}
\Big|\Big(\mathbbm{1}_{\Omega_{n,a_1^{+}}\cap \mathcal{D}_{+}}
R_{+}(z) \big[\mathcal{H}_{+},\Lambda_2\big] \big[R_{+}(z), S\big] \Big)(\bm{x},\bm{y}) \Big| \leq \frac{C}{|\text{Im }z|^{10}}e^{-\frac{D|\text{Im }z|}{4}(\|\bm{x}-\bm{y}\|_{1}+|x_1|+|y_1|+|x_2|+|y_2|)},
\end{aligned}
\end{equation}
where $C,D$ depend only on $n$. Consequently, by Lemma \ref{lem_trace_norm_kernel},
\begin{equation} \label{eq_vanish_bulk_torque_correlation_proof_5}
\Big\|\mathbbm{1}_{\Omega_{n,a_1^{+}}\cap \mathcal{D}_{+}}
R_{+}(z) \big[\mathcal{H}_{+},\Lambda_2\big] \big[R_{+}(z), S\big]\Big\|_{\mathscr{T}_{\mathcal{D}_{+}}}\leq \frac{C}{|\text{Im }z|^{12}}.
\end{equation}
The almost-analyticity of $\tilde{\rho}$ (see \eqref{eq_almost_analytic}) further implies that
\begin{equation*}
\begin{aligned}
&\Big\|\int_{\mathbb{C}}\frac{\partial \tilde{\rho}}{\partial \overline{z}}\Big(\mathbbm{1}_{\Omega_{n,a_1^{+}}\cap \mathcal{D}_{+}}
R_{+}(z) \big[\mathcal{H}_{+},\Lambda_2\big] \big[R_{+}(z), S\big] \Big)dm(z)\Big\|_{\mathscr{T}_{\mathcal{D}_{+}}} \\
&\leq \int_{\mathbb{C}}\Big|\frac{\partial \tilde{\rho}}{\partial \overline{z}}\Big|\Big\|\mathbbm{1}_{\Omega_{n,a_1^{+}}\cap \mathcal{D}_{+}}
R_{+}(z) \big[\mathcal{H}_{+},\Lambda_2\big] \big[R_{+}(z), S\big] \Big\|_{\mathscr{T}_{\mathcal{D}_{+}}}dm(z)<\infty .
\end{aligned}
\end{equation*}
This concludes the proof that the operator \eqref{eq_vanish_bulk_torque_correlation_proof_2} is trace-class.

{\color{blue}Step 2:} We prove that
\begin{equation} \label{eq_vanish_bulk_torque_correlation_proof_6}
\begin{aligned}
&Tr_{\mathcal{D}_{+}}\Big(\mathbbm{1}_{\Omega_{n,a_1^{+}}\cap \mathcal{D}_{+}}\Big[\int_{\mathbb{C}}\frac{\partial \tilde{\rho}}{\partial \overline{z}}
R_{+}(z) \big[\mathcal{H}_{+},\Lambda_2\big] R_{+}(z) \big[\mathcal{H}_{+}, S\big] R_{+}(z) dm(z)\Big] \Big) \\
&=-Tr_{\mathcal{D}_{+}}\Big(\mathbbm{1}_{\Omega_{n,a_1^{+}}\cap \mathcal{D}_{+}}\Big[\int_{\mathbb{C}}\frac{\partial \tilde{\rho}}{\partial \overline{z}}
R_{+}(z) \big[\mathcal{H}_{+},\Lambda_2\big]  \big[R_{+}(z), S\big]  dm(z)\Big] \Big) \\
&=i\pi Tr_{\mathcal{D}_{+}}\Big(\mathbbm{1}_{\Omega_{n,a_1^{+}}\cap \mathcal{D}_{+}}\Big(P_{+}[P_{+},\Lambda_2]SP_{+}+P_{+}^{\perp}[P_{+},\Lambda_2]SP_{+}^{\perp}\Big) \Big) .
\end{aligned}
\end{equation}
Analogous to Step 1, one can show that \eqref{eq_vanish_bulk_torque_correlation_proof_5} remains to hold by inserting finitely many projections $P_{+}$ or $P_{+}^{\perp}$ inside the product. In particular, based on this observation, the following operators are trace-class
\begin{equation*}
\mathbbm{1}_{\Omega_{n,a_1^{+}}\cap \mathcal{D}_{+}}PA,\quad
\mathbbm{1}_{\Omega_{n,a_1^{+}}\cap \mathcal{D}_{+}}AP
\end{equation*}
for $A=\int_{\mathbb{C}}\frac{\partial \tilde{\rho}}{\partial \overline{z}}
R_{+}(z) \big[\mathcal{H}_{+},\Lambda_2\big]  \big[R_{+}(z), S\big]  dm(z)$ and $P\in \{P_{+},P_{+}^{\perp}\}$. Since both $A$, $P_{+}$ and $P_{+}^{\perp}$ are periodic in $\mathcal{D}_{+}$, we can apply the conditional cyclicity (Proposition \ref{prop_cond_cyc_periodic}) to obtain
\begin{equation}  \label{eq_vanish_bulk_torque_correlation_proof_7}
\begin{aligned}
&Tr_{\mathcal{D}_{+}}\Big(\mathbbm{1}_{\Omega_{n,a_1^{+}}\cap \mathcal{D}_{+}}\Big[\int_{\mathbb{C}}\frac{\partial \tilde{\rho}}{\partial \overline{z}}
R_{+}(z) \big[\mathcal{H}_{+},\Lambda_2\big]  \big[R_{+}(z), S\big]  dm(z)\Big] \Big) \\
&=Tr_{\mathcal{D}_{+}}\Big(\mathbbm{1}_{\Omega_{n,a_1^{+}}\cap \mathcal{D}_{+}} \big(P_{+}^2+(P_{+}^{\perp})^2 \big) \Big[\int_{\mathbb{C}}\frac{\partial \tilde{\rho}}{\partial \overline{z}}
R_{+}(z) \big[\mathcal{H}_{+},\Lambda_2\big]  \big(R_{+}(z)S-SR_{+}(z)\big)  dm(z)\Big] \Big) \\
&=Tr_{\mathcal{D}_{+}}\Big(\mathbbm{1}_{\Omega_{n,a_1^{+}}\cap \mathcal{D}_{+}}P_{+}\Big[\int_{\mathbb{C}}\frac{\partial \tilde{\rho}}{\partial \overline{z}}
R_{+}(z) \big[\mathcal{H}_{+},\Lambda_2\big]  \big(R_{+}(z)S-SR_{+}(z)\big)  dm(z)\Big]P_{+} \Big) \\
&\quad + Tr_{\mathcal{D}_{+}}\Big(\mathbbm{1}_{\Omega_{n,a_1^{+}}\cap \mathcal{D}_{+}}P_{+}^{\perp}\Big[\int_{\mathbb{C}}\frac{\partial \tilde{\rho}}{\partial \overline{z}}
R_{+}(z) \big[\mathcal{H}_{+},\Lambda_2\big]  \big(R_{+}(z)S-SR_{+}(z)\big)  dm(z)\Big]P_{+}^{\perp} \Big) .
\end{aligned}
\end{equation}
In \eqref{eq_vanish_bulk_torque_correlation_proof_7}, the terms involving $SR_{+}(z)$ vanish by Proposition \ref{prop_off_diag}. Hence
\begin{equation*}
\begin{aligned}
&Tr_{\mathcal{D}_{+}}\Big(\mathbbm{1}_{\Omega_{n,a_1^{+}}\cap \mathcal{D}_{+}}\Big[\int_{\mathbb{C}}\frac{\partial \tilde{\rho}}{\partial \overline{z}}
R_{+}(z) \big[\mathcal{H}_{+},\Lambda_2\big]  \big[R_{+}(z), S\big]  dm(z)\Big] \Big) \\
&=Tr_{\mathcal{D}_{+}}\Big(\mathbbm{1}_{\Omega_{n,a_1^{+}}\cap \mathcal{D}_{+}}P_{+}\Big[\int_{\mathbb{C}}\frac{\partial \tilde{\rho}}{\partial \overline{z}}
R_{+}(z) \big[\mathcal{H}_{+},\Lambda_2\big]  R_{+}(z)S  dm(z)\Big]P_{+} \Big) \\
&\quad + Tr_{\mathcal{D}_{+}}\Big(\mathbbm{1}_{\Omega_{n,a_1^{+}}\cap \mathcal{D}_{+}}P_{+}^{\perp}\Big[\int_{\mathbb{C}}\frac{\partial \tilde{\rho}}{\partial \overline{z}}
R_{+}(z) \big[\mathcal{H}_{+},\Lambda_2\big]  R_{+}(z)S  dm(z)\Big]P_{+}^{\perp} \Big) \\
&=-Tr_{\mathcal{D}_{+}}\Big(\mathbbm{1}_{\Omega_{n,a_1^{+}}\cap \mathcal{D}_{+}}P_{+}\Big[\int_{\mathbb{C}}\frac{\partial \tilde{\rho}}{\partial \overline{z}}
 \big[R_{+}(z),\Lambda_2\big]  S  dm(z)\Big]P_{+} \Big) \\
&\quad - Tr_{\mathcal{D}_{+}}\Big(\mathbbm{1}_{\Omega_{n,a_1^{+}}\cap \mathcal{D}_{+}}P_{+}^{\perp}\Big[\int_{\mathbb{C}}\frac{\partial \tilde{\rho}}{\partial \overline{z}}
 \big[R_{+}(z),\Lambda_2\big]  S  dm(z)\Big]P_{+}^{\perp} \Big) \\
&=-i\pi Tr_{\mathcal{D}_{+}}\Big(\mathbbm{1}_{\Omega_{n,a_1^{+}}\cap \mathcal{D}_{+}}P_{+}
 \big[\rho(\mathcal{H}_{+}),\Lambda_2\big]  S  P_{+} \Big)
 -i\pi Tr_{\mathcal{D}_{+}}\Big(\mathbbm{1}_{\Omega_{n,a_1^{+}}\cap \mathcal{D}_{+}}P_{+}^{\perp}
 \big[\rho(\mathcal{H}_{+}),\Lambda_2\big]  S  P_{+}^{\perp} \Big)
\end{aligned}
\end{equation*}
where the Hellfer-Sjöstrand formula is applied to obtain the last equality. Then the proof of \eqref{eq_vanish_bulk_torque_correlation_proof_6} is completed by recalling $P_{+}=\rho(\mathcal{H}_{+})$.

{\color{blue}Step 3:} Finally, we prove
\begin{equation} \label{eq_vanish_bulk_torque_correlation_proof_8}
\begin{aligned}
&Tr_{\mathcal{D}_{+}}\Big(\mathbbm{1}_{\Omega_{n,a_1^{+}}\cap \mathcal{D}_{+}}\Big(P_{+}[P_{+},\Lambda_2]SP_{+}+P_{+}^{\perp}[P_{+},\Lambda_2]SP_{+}^{\perp}\Big) \Big) \\
&=Tr_{\mathcal{D}_{+}}\Big(\mathbbm{1}_{\Omega_{n,a_1^{+}}\cap \mathcal{D}_{+}}P_{+}\big[[P_{+},S],[P_{+},\Lambda_2] \big] \Big).
\end{aligned}
\end{equation}
Then the proof of \eqref{eq_vanish_bulk_torque_correlation_proof_1} is completed by combining \eqref{eq_vanish_bulk_torque_correlation_proof_6}, \eqref{eq_vanish_bulk_torque_correlation_proof_8} and Proposition \ref{prop_vanish_bulk_torque_Kubo}.

The equality \eqref{eq_vanish_bulk_torque_correlation_proof_8} follows from standard algebraic manipulation of double commutators and the conditional cyclicity. In fact, using the orthogonality of $P_{+}$ and $P_{+}^{\perp}$, and $\mathbbm{1}=P_{+}+P_{+}^{\perp}$,
\begin{equation*}
\begin{aligned}
\text{Left side of \eqref{eq_vanish_bulk_torque_correlation_proof_8}} 
&=Tr_{\mathcal{D}_{+}}\Big(\mathbbm{1}_{\Omega_{n,a_1^{+}}\cap \mathcal{D}_{+}}\Big(-P_{+}[P_{+}^{\perp},\Lambda_2]SP_{+}+P_{+}^{\perp}[P_{+},\Lambda_2]SP_{+}^{\perp}\Big) \Big) \\
&=Tr_{\mathcal{D}_{+}}\Big(\mathbbm{1}_{\Omega_{n,a_1^{+}}\cap \mathcal{D}_{+}}\Big(P_{+}\Lambda_2 P_{+}^{\perp}SP_{+}-P_{+}^{\perp}\Lambda_2 P_{+}SP_{+}^{\perp}\Big) \Big) \\
&=Tr_{\mathcal{D}_{+}}\Big(\mathbbm{1}_{\Omega_{n,a_1^{+}}\cap \mathcal{D}_{+}}\Big(P_{+}\Lambda_2 P_{+}^{\perp}\cdot P_{+}^{\perp}SP_{+}-P_{+}^{\perp}\Lambda_2 P_{+}\cdot P_{+}SP_{+}^{\perp}\Big) \Big) \\
&=Tr_{\mathcal{D}_{+}}\Big(\mathbbm{1}_{\Omega_{n,a_1^{+}}\cap \mathcal{D}_{+}}\Big(-P_{+}\big[ P_{+}^{\perp},\Lambda_2\big] \cdot \big[P_{+}^{\perp},S\big]P_{+}+\big[ P_{+}^{\perp},\Lambda_2\big] P_{+}\cdot P_{+}\big[P_{+}^{\perp},S\big]\Big) \Big) .
\end{aligned}
\end{equation*}
By the conditional cyclicity in Proposition \ref{prop_cond_cyc_periodic}, we have
\begin{equation*}
Tr_{\mathcal{D}_{+}}\Big(\mathbbm{1}_{\Omega_{n,a_1^{+}}\cap \mathcal{D}_{+}}\big[ P_{+}^{\perp},\Lambda_2\big] P_{+}\cdot P_{+}\big[P_{+}^{\perp},S\big] \Big)
=Tr_{\mathcal{D}_{+}}\Big(\mathbbm{1}_{\Omega_{n,a_1^{+}}\cap \mathcal{D}_{+}}P_{+}\big[P_{+}^{\perp},S\big]\cdot \big[ P_{+}^{\perp},\Lambda_2\big] P_{+}  \Big) .
\end{equation*}
The fact that these operators are trace-class can be verified similarly as we did in Step 1. Hence
\begin{equation*}
\begin{aligned}
\text{Left side of \eqref{eq_vanish_bulk_torque_correlation_proof_8}}
&=Tr_{\mathcal{D}_{+}}\Big(\mathbbm{1}_{\Omega_{n,a_1^{+}}\cap \mathcal{D}_{+}}\Big(-P_{+}\big[ P_{+}^{\perp},\Lambda_2\big] \cdot \big[P_{+}^{\perp},S\big]P_{+}+P_{+}\big[P_{+}^{\perp},S\big]\cdot \big[ P_{+}^{\perp},\Lambda_2\big] P_{+}\Big) \Big) \\
&=Tr_{\mathcal{D}_{+}}\Big(\mathbbm{1}_{\Omega_{n,a_1^{+}}\cap \mathcal{D}_{+}}P_{+}\Big[\big[P_{+}^{\perp},S\big], \big[ P_{+}^{\perp},\Lambda_2\big]\Big] P_{+} \Big) \\
&=Tr_{\mathcal{D}_{+}}\Big(\mathbbm{1}_{\Omega_{n,a_1^{+}}\cap \mathcal{D}_{+}}P_{+}\Big[\big[P_{+},S\big], \big[ P_{+},\Lambda_2\big]\Big] P_{+} \Big) \\
&=Tr_{\mathcal{D}_{+}}\Big(\mathbbm{1}_{\Omega_{n,a_1^{+}}\cap \mathcal{D}_{+}}P_{+}\Big[\big[P_{+},S\big], \big[ P_{+},\Lambda_2\big]\Big] \Big)
\end{aligned}
\end{equation*}
where the conditional cyclicity is applied to obtain the last equality. This yields \eqref{eq_vanish_bulk_torque_correlation_proof_8}.
\end{proof}

\section{Bulk spin conductance: proof of Proposition \ref{prop_spin_conduct_justified}}
The proof of Proposition \ref{prop_spin_conduct_justified} is sketched in Section 1.2; here we complement the details.

\begin{proof}[Proof of Proposition \ref{prop_spin_conduct_justified}]
We only prove Proposition \ref{prop_spin_conduct_justified} for the `$+$' case, and the proof of the `$-$' case is similar

{\color{blue}Step 0:} Let $\Lambda_1^{\star}(\bm{x})=H(x_1)$ be the Heaviside function of variable $x_1$. As in \eqref{eq_main_idea_proof_bulk_conduct_3}, 
by moving $\Lambda_1^{\star}$ outside the product cyclically, we decompose $\Sigma_{\pm}^{\Lambda_2,\Lambda_1^{\star} S}$ as 
$$
\Sigma_{\pm}^{\Lambda_2,\Lambda_1^{\star} S}= \text{I} + \text{II}
$$
where  
\begin{equation*} \label{eq_spin_conduct_justified_proof_1}
\begin{aligned}
\text{I}=&\int_{\mathbb{C}}\frac{\partial \tilde{\rho}}{\partial \overline{z}}\Big[
R_{+}(z) \big[\mathcal{H}_{+},\Lambda_2\big] R_{+}(z) \big[\mathcal{H}_{+}, S\big] R_{+}(z) \Lambda_1^{\star} \\
&\quad\quad\quad\quad - \Lambda_1^{\star} R_{+}(z) \big[\mathcal{H}_{+}, S\big] R_{+}(z) \big[\mathcal{H}_{+},\Lambda_2\big] R_{+}(z) \Big]dm(z), \\
\text{II}=&\int_{\mathbb{C}}\frac{\partial \tilde{\rho}}{\partial \overline{z}}\Big[\sum_{j=1}^{4}A_{+,j}(z) \Big]dm(z),
\end{aligned}
\end{equation*}
with
\begin{equation*}
\begin{aligned}
&A_{+,1}=R_{+}(z) \big[\mathcal{H}_{+},\Lambda_2\big] R_{+}(z) \big[\mathcal{H}_{+}, S\big]\Big[\Lambda^{\star}_1, R_{+}(z)\Big],\quad A_{+,2}=R_{+}(z) \big[\mathcal{H}_{+},\Lambda_2\big] R_{+}(z)S\big[\mathcal{H}_{+},\Lambda^{\star}_1 \big] R_{+}(z), \\
&A_{+,3}=-R_{+}(z) \big[\mathcal{H}_{+},\Lambda^{\star}_1 \big]S R_{+}(z) \big[\mathcal{H}_{+},\Lambda_2\big]R_{+}(z),\quad
A_{+,4}=\big[\Lambda_1^{\star},R_{+}(z)\big] \big[\mathcal{H}_{+}, S\big] R_{+}(z) \big[\mathcal{H}_{+},\Lambda_2\big] R_{+}(z).
\end{aligned}
\end{equation*}

{\color{blue}Step 1:} We justify that the operator $\text{II}$ is trace-class. Here we prove it only for $\int_{\mathbb{C}}\frac{\partial \tilde{\rho}}{\partial \overline{z}}A_{+,1}(z) dm(z)$, and the proof for the other operators in $\text{II}$ with $j\neq 1$ follows similar lines.

Note the following estimates hold for all $\bm{x},\bm{y}\in \mathcal{D}_{+}$ and $|\text{Im }z|\leq 1$ (analogous to \eqref{eq_vanish_bulk_torque_correlation_proof_3}) by the Lemmas in Section 2.1
\begin{equation} \label{eq_spin_conduct_justified_proof_2}
\left\{
\begin{aligned}
&|R_{+}(z)(\bm{x},\bm{y})|\leq \frac{C_1}{|\text{Im }z|}e^{-D_1|\text{Im }z|\|\bm{x}-\bm{y}\|_{1}}, \\
&|[\mathcal{H}_{+},S](\bm{x},\bm{y})| \leq C_2e^{-D_2\|\bm{x}-\bm{y}\|_{1}}, \\
&|[\mathcal{H}_{+},\Lambda_2](\bm{x},\bm{y})| \leq C_3e^{-D_3(\|\bm{x}-\bm{y}\|_{1}+|x_2|+|y_2|)}, \\
&|[R_{+}(z),\Lambda_1^{\star}](\bm{x},\bm{y})|\leq \frac{C_4}{|\text{Im }z|}e^{-D_4|\text{Im }z|(\|\bm{x}-\bm{y}\|_{1}+|x_1|+|y_1|)},
\end{aligned}
\right.
\end{equation}
where $C_i,D_i$ depend only on $\Lambda_1^{\star},\Lambda_2$. Hence by Lemma \ref{lem_product_localized}
\begin{equation*}
\begin{aligned}
\Big|A_{+,1}(z)(\bm{x},\bm{y}) \Big| \leq \frac{C}{|\text{Im }z|^{11}}e^{-\frac{D|\text{Im }z|}{4}(\|\bm{x}-\bm{y}\|_{1}+|x_1|+|y_1|+|x_2|+|y_2|)},
\end{aligned}
\end{equation*}
where $D=\min_{1\leq i\leq 4} D_i$ and $C$ depend only on $\Lambda_1^{\star},\Lambda_2$. Consequently, we have $\big\|A_{+,1}(z)\big\|_{\mathscr{T}_{\mathcal{D}_{+}}}\leq \frac{C}{|\text{Im }z|^{13}}$ by Lemma \ref{lem_trace_norm_kernel}, and hence
\begin{equation*}
\begin{aligned}
\Big\|\int_{\mathbb{C}}\frac{\partial \tilde{\rho}}{\partial \overline{z}}A_{+,1}(z) dm(z)\Big\|_{\mathscr{T}_{\mathcal{D}_{+}}} 
\leq \int_{\mathbb{C}}\Big|\frac{\partial \tilde{\rho}}{\partial \overline{z}}\Big|\big\|A_{+,1}(z) \big\|_{\mathscr{T}_{\mathcal{D}_{+}}}dm(z)<\infty
\end{aligned}
\end{equation*}
by the almost-analyticity of $\tilde{\rho}$. 
Thus, $\int_{\mathbb{C}}\frac{\partial \tilde{\rho}}{\partial \overline{z}}A_{+,1}(z) dm(z)$ is trace-class.

{\color{blue}Step 2:} We show that 
\begin{equation} \label{eq_spin_conduct_justified_proof_3}
\begin{aligned}
Tr_{ka_1^{+},\mathcal{D}_{+}}^{pv,1}(I)=0 .
\end{aligned}
\end{equation}
This, combined with the results in Step 1, justifies that $Tr_{ka_1^{+},\mathcal{D}_{+}}^{pv,1}(\Sigma_{\pm}^{\Lambda_2,\Lambda_1^{\star} S})$ is well-defined. Again, we only prove \eqref{eq_spin_conduct_justified_proof_3} for $k=1$.

By definition of the principal-value trace, we have
\begin{equation*}
\begin{aligned}
Tr_{a_1^{+},\mathcal{D}_{+}}^{pv,1}(I)
=&\lim_{N\to\infty}\sum_{n=-N}^{N-1} 
Tr_{\mathcal{D}_{+}}\Big( \mathbbm{1}_{\Omega_{n,a_{1}^{+}}\cap \mathcal{D}_{+}}
\Big\{\int_{\mathbb{C}}\frac{\partial \tilde{\rho}}{\partial \overline{z}}\Big[
R_{+}(z) \big[\mathcal{H}_{+},\Lambda_2\big] R_{+}(z) \big[\mathcal{H}_{+}, S\big] R_{+}(z) \Lambda_1^{\star} \\
&\quad\quad\quad\quad\quad\quad\quad\quad\quad - \Lambda_1^{\star} R_{+}(z) \big[\mathcal{H}_{+}, S\big] R_{+}(z) \big[\mathcal{H}_{+},\Lambda_2\big] R_{+}(z) \Big]dm(z)\Big\} \mathbbm{1}_{\Omega_{n,a_{1}^{+}}\cap \mathcal{D}_{+}} \Big) .
\end{aligned}
\end{equation*}
Since $\mathbbm{1}_{\Omega_{n,a_{1}^{+}}\cap \mathcal{D}_{+}}\Lambda_1^{\star}=\Lambda_1^{\star}\mathbbm{1}_{\Omega_{n,a_{1}^{+}}\cap \mathcal{D}_{+}}=\mathbbm{1}_{\Omega_{n,a_{1}^{+}}\cap \mathcal{D}_{+}}\mathbbm{1}_{\{n\geq 0\}}(n)$, 
\begin{equation*}
\begin{aligned}
Tr_{a_1^{+},\mathcal{D}_{+}}^{pv,1}(I)
=&\lim_{N\to\infty}\sum_{n=0}^{N-1} 
Tr_{\mathcal{D}_{+}}\Big( \mathbbm{1}_{\Omega_{n,a_{1}^{+}}\cap \mathcal{D}_{+}}
\Big\{\int_{\mathbb{C}}\frac{\partial \tilde{\rho}}{\partial \overline{z}}\Big[
R_{+}(z) \big[\mathcal{H}_{+},\Lambda_2\big] R_{+}(z) \big[\mathcal{H}_{+}, S\big] R_{+}(z) \\
&\quad\quad\quad\quad\quad\quad\quad\quad\quad -  R_{+}(z) \big[\mathcal{H}_{+}, S\big] R_{+}(z) \big[\mathcal{H}_{+},\Lambda_2\big] R_{+}(z) \Big]dm(z)\Big\} \mathbbm{1}_{\Omega_{n,a_{1}^{+}}\cap \mathcal{D}_{+}} \Big) \\
&=\lim_{N\to\infty}\sum_{n=0}^{N-1} 
Tr_{\mathcal{D}_{+}}\Big( \mathbbm{1}_{\Omega_{n,a_{1}^{+}}\cap \mathcal{D}_{+}}
\Sigma_{+}^{\Lambda_2,S}
\mathbbm{1}_{\Omega_{n,a_{1}^{+}}\cap \mathcal{D}_{+}} \Big)
=\lim_{N\to\infty}\sum_{n=0}^{N-1} 
Tr_{\mathcal{D}_{+}}\Big( \mathbbm{1}_{\Omega_{n,a_{1}^{+}}\cap \mathcal{D}_{+}}
\Sigma_{+}^{\Lambda_2,S} \Big) .
\end{aligned}
\end{equation*}
Hence, \eqref{eq_spin_conduct_justified_proof_3} follows by observing that each term in this summation vanishes by Proposition \ref{prop_vanish_bulk_torque_correlation}.

{\color{blue}Step 3:} We consider a general switch function $\Lambda_1$ and show that \eqref{eq_spin_conduc_1} is independent of $\Lambda_1$. It is sufficient to prove
\begin{equation} \label{eq_spin_conduct_justified_proof_4}
Tr_{\mathcal{D}_{+}}(\Sigma_{+}^{\Lambda_2,\tilde{\Lambda}_1 S})=0
\end{equation}
with
\begin{equation*}
\begin{aligned}
\Sigma_{+}^{\Lambda_2,\tilde{\Lambda}_1 S}
&:=\int_{\mathbb{C}}dm(z)\frac{\partial \tilde{\rho}}{\partial \overline{z}}\Big(
R_{+}(z) \big[\mathcal{H}_{+},\Lambda_2\big] R_{+}(z) \big[\mathcal{H}_{+},\tilde{\Lambda}_1 S\big] R_{+}(z) \\
&\quad\quad\quad\quad\quad\quad\quad - R_{+}(z) \big[\mathcal{H}_{+},\tilde{\Lambda}_1 S\big] R_{+}(z) \big[\mathcal{H}_{+},\Lambda_2\big] R_{+}(z) \Big) \\
&=-\int_{\mathbb{C}}dm(z)\frac{\partial \tilde{\rho}}{\partial \overline{z}}\Big(
R_{+}(z) \big[\mathcal{H}_{+},\Lambda_2\big] \big[R_{+}(z),\tilde{\Lambda}_1 S\big]  -  \big[R_{+}(z),\tilde{\Lambda}_1 S\big]  \big[\mathcal{H}_{+},\Lambda_2\big] R_{+}(z) \Big)
\end{aligned}
\end{equation*}
for any $\tilde{\Lambda}_1=\Lambda_1-\Lambda_1^{\prime}$ being the difference of two switch functions in $x_1$.

We prove \eqref{eq_spin_conduct_justified_proof_4} by a similar algebraic manipulation as in Step 3 of Proposition \ref{prop_vanish_bulk_torque_correlation}. Note that $\Sigma_{\pm}^{\Lambda_2,\tilde{\Lambda}_1 S}$ is trace-class because $\tilde{\Lambda}_1$ is localized in the $x_1$-direction (in fact, it is compactly supported in $x_1$ as the difference of two switch functions), and $\big[\mathcal{H}_{+},\Lambda_2\big]$ is localized in the $x_2$-direction. One can rigorously justify this point by following similar arguments as in Step 1, which is omitted here. Hence, by the cyclicity
\begin{equation*}
\begin{aligned}
Tr_{\mathcal{D}_{+}}(\Sigma_{+}^{\Lambda_2,\tilde{\Lambda}_1 S})
&=Tr_{\mathcal{D}_{+}}\big((P_{+}^2+(P_{+}^{\perp})^2)\Sigma_{+}^{\Lambda_2,\tilde{\Lambda}_1 S}\big)
=Tr_{\mathcal{D}_{+}}(P_{+}\Sigma_{+}^{\Lambda_2,\tilde{\Lambda}_1 S}P_{+})+Tr_{\mathcal{D}_{+}}(P_{+}^{\perp}\Sigma_{+}^{\Lambda_2,\tilde{\Lambda}_1 S}P_{+}^{\perp}) .
\end{aligned}
\end{equation*}
Note that by Proposition \ref{prop_off_diag}, for $P\in\{P_{+},P_{+}^{\perp}\}$, 
\begin{equation*}
P \Big( \int_{\mathbb{C}}\frac{\partial \tilde{\rho}}{\partial \overline{z}}R_{+}(z) \big[\mathcal{H}_{+},\Lambda_2\big] \tilde{\Lambda}_1 S R_{+}(z) dm(z)\Big) P
=P \Big( \int_{\mathbb{C}}\frac{\partial \tilde{\rho}}{\partial \overline{z}}  R_{+}(z) \tilde{\Lambda}_1 S \big[\mathcal{H}_{+},\Lambda_2\big] R_{+}(z) dm(z)\Big) P =0.
\end{equation*}
It follows that
\begin{equation*}
\begin{aligned}
Tr_{\mathcal{D}_{+}}(\Sigma_{+}^{\Lambda_2,\tilde{\Lambda}_1 S})
&=Tr_{\mathcal{D}_{+}}(P_{+}\Sigma_{+}^{\Lambda_2,\tilde{\Lambda}_1 S}P_{+})+Tr_{\mathcal{D}_{+}}(P_{+}^{\perp}\Sigma_{+}^{\Lambda_2,\tilde{\Lambda}_1 S}P_{+}^{\perp}) \\
&=-Tr_{\mathcal{D}_{+}}\Big(\sum_{P\in\{P_{+},P_{+}^{\perp}\}}
P  \Big( \int_{\mathbb{C}}\frac{\partial \tilde{\rho}}{\partial \overline{z}}
R_{+}(z) \big[\mathcal{H}_{+},\Lambda_2\big] R_{+}(z)\tilde{\Lambda}_1 S dm(z)  \Big)  P
\Big) \\
&\quad -Tr_{\mathcal{D}_{+}}\Big(\sum_{P\in\{P_{+},P_{+}^{\perp}\}}
P  \Big( \int_{\mathbb{C}}\frac{\partial \tilde{\rho}}{\partial \overline{z}} \tilde{\Lambda}_1 S R_{+}(z) \big[\mathcal{H}_{+},\Lambda_2\big] R_{+}(z) dm(z)  \Big)  P
\Big) \\
&=Tr_{\mathcal{D}_{+}}\Big(\sum_{P\in\{P_{+},P_{+}^{\perp}\}}
P  \Big( \int_{\mathbb{C}}\frac{\partial \tilde{\rho}}{\partial \overline{z}} \big[R_{+}(z),\Lambda_2\big] \tilde{\Lambda}_1 S dm(z)  \Big)  P \Big) \\
&\quad +Tr_{\mathcal{D}_{+}}\Big(\sum_{P\in\{P_{+},P_{+}^{\perp}\}}
P  \Big( \int_{\mathbb{C}}\frac{\partial \tilde{\rho}}{\partial \overline{z}} \tilde{\Lambda}_1 S \big[R_{+}(z),\Lambda_2\big] dm(z)  \Big)  P \Big) \\
&=i\pi Tr_{\mathcal{D}_{+}} \Big( \sum_{P\in\{P_{+},P_{+}^{\perp}\}}P
\big(\big[P_{+},\Lambda_2 \big]\tilde{\Lambda}_1 S +\tilde{\Lambda}_1 S\big[P_{+},\Lambda_2 \big] \big)
P\Big), 
\end{aligned}
\end{equation*}
where the Hellfer-Sjöstrand formula and $\rho(\mathcal{H}_{+})=P_{+}$ are applied to derive the last equality. Again, since $\big[P_{+},\Lambda_2 \big]\tilde{\Lambda}_1 S, \tilde{\Lambda}_1 S \big[P_{+},\Lambda_2 \big]$ are trace-class (since $\tilde{\Lambda}_1$ is localized in the $x_1$-direction), the cyclicity applies
\begin{equation*}
\begin{aligned}
Tr_{\mathcal{D}_{+}}(\Sigma_{+}^{\Lambda_2,\tilde{\Lambda}_1 S})
&=i\pi Tr_{\mathcal{D}_{+}} \Big( \sum_{P\in\{P_{+},P_{+}^{\perp}\}}P^2
\big(\big[P_{+},\Lambda_2 \big]\tilde{\Lambda}_1 S +\tilde{\Lambda}_1 S\big[P_{+},\Lambda_2 \big] \big) \Big) \\
&=i\pi Tr_{\mathcal{D}_{+}} \Big( \big[P_{+},\Lambda_2 \big]\tilde{\Lambda}_1 S +\tilde{\Lambda}_1 S\big[P_{+},\Lambda_2 \big] \Big)
=2i\pi Tr_{\mathcal{D}_{+}} \Big( \big[P_{+},\Lambda_2 \big]\tilde{\Lambda}_1 S \Big) .
\end{aligned}
\end{equation*}
The last trace vanishes as 
\begin{equation*}
Tr_{\mathcal{D}_{+}} \Big( \big[P_{+},\Lambda_2 \big]\tilde{\Lambda}_1 S \Big) = \int_{\text{supp }(\tilde{\Lambda}_1)}\text{tr}\big( P_{+}(\bm{x},\bm{x})\cdot \frac{1}{2}\sigma_{z} \big)\big(\Lambda_2(x_2)-\Lambda_2(x_2) \big)\tilde{\Lambda}_1(\bm{x}) d\bm{x} =0.
\end{equation*}
This concludes the proof of \eqref{eq_spin_conduct_justified_proof_4}.

{\color{blue}Step 4:} We conclude the proof by proving that the spin conductance \eqref{eq_spin_conduc_1} is independent of the density function $\rho$. For two different density functions $\rho_1$ and $\rho_2$ satisfying \eqref{eq_density_function}, let $\rho_{\delta}=\rho_1-\rho_1$ and let $\tilde{\rho}_{\delta}$ be an almost-analytical extension of $\rho_{\delta}$. Denote 
$$
A(z)=-R_{+}(z) \big[\mathcal{H}_{+},\Lambda_2\big] (z) \big[R_{+}(z),\Lambda_1 S\big]  +  \big[R_{+}(z),\Lambda_1 S\big]  \big[\mathcal{H}_{+},\Lambda_2\big] R_{+}(z).$$
The key point is that $A(z)$ is analytic within $\text{supp}(\tilde{\rho}_{\delta})$, a complex neighborhood of the spectral gap. By Stokes' theorem,
\begin{equation*}
\begin{aligned}
&\int_{\mathbb{C}}\frac{\partial \tilde{\rho_{\delta}}}{\partial \overline{z}}\Big[
R_{+}(z) \big[\mathcal{H}_{+},\Lambda_2\big] R_{+}(z) \big[\mathcal{H}_{+},\Lambda_1 S\big] R_{+}(z) \\
&\quad\quad\quad\quad\quad\quad\quad - R_{+}(z) \big[\mathcal{H}_{+},\Lambda_1 S\big] R_{+}(z) \big[\mathcal{H}_{+},\Lambda_2\big] R_{+}(z) \Big]dm(z) \\
&=\int_{\mathbb{C}}\frac{\partial \tilde{\rho_{\delta}}}{\partial \overline{z}}A(z)dm(z) 
=\oint_{\partial \text{supp}(\tilde{\rho}_{\delta})} \tilde{\rho_{\delta}}(z)A(z) dz =0.
\end{aligned}
\end{equation*}
This completes the proof.
\end{proof}

\section{Reduction to Kubo formula: proof of Proposition \ref{prop_kubo_formula}}

In this section, we prove that when the Hamiltonian $\mathcal{H}_{\pm}$ commutes with the spin operator, the bulk spin conductance in Definition \ref{def_spin_conducatance} recovers the well-known Kubo formula for spin conductance.

\begin{proof}[Proof of Proposition \ref{prop_kubo_formula}]
When $[\mathcal{H}_{\pm},S]=0$, the Leibniz rule gives
\begin{equation*}
[\mathcal{H}_{\pm},\Lambda_1 S]=\Lambda_1[\mathcal{H}_{\pm}, S]+[\mathcal{H}_{\pm},\Lambda_1 ]S=[\mathcal{H}_{\pm},\Lambda_1 ]S.
\end{equation*}
Hence
\begin{equation*}
\begin{aligned}
\Sigma_{\pm}^{\Lambda_2,\Lambda_1 S}
:=\int_{\mathbb{C}}dm(z)\frac{\partial \tilde{\rho}}{\partial \overline{z}}\Big[
&R_{\pm}(z) \big[\mathcal{H}_{\pm},\Lambda_2\big] R_{\pm}(z) \big[\mathcal{H}_{\pm},\Lambda_1 \big]S R_{\pm}(z) \\
&- R_{\pm}(z) \big[\mathcal{H}_{\pm},\Lambda_1 \big]S R_{\pm}(z) \big[\mathcal{H}_{\pm},\Lambda_2\big] R_{\pm}(z) \Big] .
\end{aligned}
\end{equation*}
It follows that $\Sigma_{\pm}^{\Lambda_2,\Lambda_1 S}$ is trace-class, as the commutators with $\Lambda_i$ for $i=1,2$ yield localization in the $x_i$-directions. Thus, the principal-value trace equals the conventional trace by Proposition \ref{prop_pv_usual_trace}:
\begin{equation*}
Tr_{ka_1^{\pm},\mathcal{D}_{\pm}}^{pv,1}(\Sigma_{\pm}^{\Lambda_2,\Lambda_1 S})
=Tr_{\mathcal{D}_{\pm}}(\Sigma_{\pm}^{\Lambda_2,\Lambda_1 S}) .
\end{equation*}
For the trace on the right side, the cyclicity gives
\begin{equation*}
\begin{aligned}
&Tr_{\mathcal{D}_{\pm}}(\Sigma_{\pm}^{\Lambda_2,\Lambda_1 S}) \\
&=Tr_{\mathcal{D}_{\pm}}(P_{\pm}\Sigma_{\pm}^{\Lambda_2,\Lambda_1 S }P_{\pm} + P_{\pm}^{\perp}\Sigma_{\pm}^{\Lambda_2,\Lambda_1 S }P_{\pm}^{\perp}) \\
&=-\sum_{P\in\{P_{\pm},P_{\pm}^{\perp}\}}Tr_{\mathcal{D}_{\pm}}\Big(P \Big( \int_{\mathbb{C}}\frac{\partial \tilde{\rho}}{\partial \overline{z}}\Big( R_{\pm}(z) \big[\mathcal{H}_{\pm},\Lambda_2\big] R_{\pm}(z) \Lambda_1 S + \Lambda_1 S R_{\pm}(z) \big[\mathcal{H}_{\pm},\Lambda_2\big] R_{\pm}(z) \Big)dm(z)
\Big)  P\Big) \\
&\overset{(i)}{=}\sum_{P\in\{P_{\pm},P_{\pm}^{\perp}\}}Tr_{\mathcal{D}_{\pm}}\Big(P \Big( \int_{\mathbb{C}}\frac{\partial \tilde{\rho}}{\partial \overline{z}}\Big( \big[R_{\pm}(z),\Lambda_2\big] \Lambda_1 S + \Lambda_1 S \big[R_{\pm}(z),\Lambda_2\big] \Big)dm(z)
\Big)  P\Big) \\
&\overset{(ii)}{=}i\pi\sum_{P\in\{P_{\pm},P_{\pm}^{\perp}\}} Tr_{\mathcal{D}_{\pm}}\Big(P \Big(
\big[P_{\pm},\Lambda_2 \big]\Lambda_1 S+\Lambda_1 S\big[P_{\pm},\Lambda_2 \big]
\Big)  P\Big),
\end{aligned}
\end{equation*}
where the equality $(i)$ follows from the identity $R_{\pm}(z) \big[\mathcal{H}_{\pm},\Lambda_2\big] R_{\pm}(z)=-\big[R_{\pm}(z),\Lambda_2\big]$, and $(ii)$ from the Hellfer-Sjöstrand formula. One proceeds by repeating the algebraic manipulations as in Step 3 of Proposition \ref{prop_vanish_bulk_torque_correlation}, making use of the cyclicity and orthogonality properties of the projections:
\begin{equation*}
\begin{aligned}
&Tr_{\mathcal{D}_{\pm}}(\Sigma_{\pm}^{\Lambda_2,\Lambda_1 S}) \\
&=i\pi\sum_{P\in\{P_{\pm},P_{\pm}^{\perp}\}} Tr_{\mathcal{D}_{\pm}}\Big(P \Big(
\big[P_{\pm},\Lambda_2 \big]\Lambda_1 S+\Lambda_1 S\big[P_{\pm},\Lambda_2 \big]
\Big)  P\Big) \\
&=i\pi Tr_{\mathcal{D}_{\pm}}\Big(P_{\pm} \Big(
\Lambda_2 P_{\pm}^{\perp} \Lambda_1 S-\Lambda_1 SP_{\pm}^{\perp}\Lambda_2
\Big)  P_{\pm}\Big)
+i\pi Tr_{\mathcal{D}_{\pm}}\Big(P_{\pm}^{\perp} \Big(
-\Lambda_2 P_{\pm} \Lambda_1 S+\Lambda_1 S P_{\pm}\Lambda_2
\Big)  P_{\pm}^{\perp}\Big) \\
&=i\pi Tr_{\mathcal{D}_{\pm}}\Big(P_{\pm} \Big(
-\big[ P_{\pm}^{\perp},\Lambda_2\big] \big[ P_{\pm}^{\perp},\Lambda_1 S\big]+\big[ P_{\pm}^{\perp},\Lambda_1 S\big]\big[ P_{\pm}^{\perp},\Lambda_2\big]
\Big)  P_{\pm}\Big) \\
&\quad +i\pi Tr_{\mathcal{D}_{\pm}}\Big(
\big[ P_{\pm}^{\perp},\Lambda_2\big]P_{\pm}\big[ P_{\pm}^{\perp},\Lambda_1 S\big] - \big[ P_{\pm}^{\perp},\Lambda_1 S\big]P_{\pm}\big[ P_{\pm}^{\perp},\Lambda_2\big] \Big) \\
&=i\pi Tr_{\mathcal{D}_{\pm}}\Big( -P_{\pm}\big[ P_{\pm},\Lambda_2\big] \big[ P_{\pm},\Lambda_1 S\big]P_{\pm}+P_{\pm}\big[ P_{\pm},\Lambda_1 S\big]\big[ P_{\pm},\Lambda_2\big]P_{\pm} \Big) \\
&\quad +i\pi Tr_{\mathcal{D}_{\pm}}\Big(
P_{\pm}\big[ P_{\pm},\Lambda_1 S\big] \big[ P_{\pm},\Lambda_2 \big]P_{\pm} - P_{\pm}\big[ P_{\pm},\Lambda_2\big] \big[ P_{\pm},\Lambda_1 S\big]P_{\pm} \Big) \\
&=2i\pi Tr_{\mathcal{D}_{\pm}}\Big(
P_{\pm}\Big[\big[ P_{\pm},\Lambda_1 S\big], \big[ P_{\pm},\Lambda_2 \big]\Big]P_{\pm}\Big) .
\end{aligned}
\end{equation*}
This concludes the proof of $\sigma_{\pm}^{\Lambda_2}=iTr_{\mathcal{D}_{\pm}}\Big(P_{\pm}\Big[ [P_{\pm},S\Lambda_1],[P_{\pm},\Lambda_2] \Big] \Big)$, with the right side being exactly the Kubo formula for the spin conductance \cite{elgart2005shortrange+functional,Avron1994charge,Marcelli2019spin_conductivity,marcelli2021new,Giovanna22charge_to_spin}. Since the spin operator commutes with $\mathcal{H}_{\pm}$, the ground state projection $P_{\pm}$ is decomposed according to the eigenspace of $S$. Denoting the component of $P_{\pm}$ lying in the spin-up/spin-down spaces (the $\frac{1}{2}/\frac{-1}{2}$ eigenspace of $S$) as $P_{\pm}^{\uparrow/\downarrow}$, respectively, we see
\begin{equation*}
\begin{aligned}
&iTr_{\mathcal{D}_{\pm}}\Big(P_{\pm}\Big[ [P_{\pm},S\Lambda_1],[P_{\pm},\Lambda_2] \Big] \Big) \\
&=\frac{i}{2}Tr_{\mathcal{D}_{\pm}}\Big(P_{\pm}^{\uparrow}\Big[ [P_{\pm}^{\uparrow},\Lambda_1],[P_{\pm}^{\uparrow},\Lambda_2] \Big] \Big) - \frac{i}{2}Tr_{\mathcal{D}_{\pm}}\Big(P_{\pm}^{\downarrow}\Big[ [P_{\pm}^{\downarrow},\Lambda_1],[P_{\pm}^{\downarrow},\Lambda_2] \Big] \Big) \\
&=\frac{1}{2} \mathcal{C}(P_{\pm}^{\uparrow})-\frac{1}{2} \mathcal{C}(P_{\pm}^{\downarrow}) .
\end{aligned}
\end{equation*}
This concludes the proof \eqref{eq_kubo_formula}. When $\mathcal{H}_{\pm}$ is TR symmetric, $\mathcal{C}(P_{\pm}^{\uparrow})=-\mathcal{C}(P_{\pm}^{\downarrow})$ \cite{avila2013shortrange+transfer}, and hence $\sigma_{\pm}^{\Lambda_2}=\mathcal{C}(P_{\pm}^{\uparrow})$. Therefore, $\sigma_{\pm}^{\Lambda_2} \text{ mod 2}$ recovers the well-known Fu-Kane-Mele $\mathbb{Z}_2$ index which equals exactly the parity of spin-up Chern number \cite{Vanderbilt2019Berry_phase}.
\end{proof}

\section{Interface spin-torque conductance: proof of Proposition \ref{prop_interface_spin_torque_justified}}

In this section, we prove Proposition \ref{prop_interface_spin_torque_justified}. The key point is, again, the vanishing mesoscopic average of spin-torque response in the bulk (Proposition \ref{prop_vanish_bulk_torque_correlation}).

\begin{proof}[Proof of Proposition \ref{prop_interface_spin_torque_justified}]

{\color{blue}Step 1:}  By the interface structure condition \eqref{eq_interface_structure}, for suffciently enough $N\in\mathbb{N}$, there exist $N_{+},N_{-}\in\mathbb{N}$ such that $N_{\pm}<N$ and $N_{\pm}a_{1}^{\pm}>L$. We write
\begin{equation} \label{eq_interface_spin_torque_justified_proof_1}
\begin{aligned}
\sum_{n=-N}^{N-1}Tr_{\mathcal{D}_{e}}(\mathbbm{1}_{\Omega_{n,a_1^{com}}\cap \mathcal{D}_{e}}\Sigma_{e}^{\Lambda_2,S})
&=\sum_{n=-N_{-}}^{N_{+}-1}Tr_{\mathcal{D}_{e}}(\mathbbm{1}_{\Omega_{n,a_1^{com}}\cap \mathcal{D}_{e}}\Sigma_{e}^{\Lambda_2,S}) \\
&\quad + \sum_{n=N_{+}}^{N-1}Tr_{\mathcal{D}_{e}}(\mathbbm{1}_{\Omega_{n,a_1^{com}}\cap \mathcal{D}_{e}}\Sigma_{e}^{\Lambda_2,S}) \\
&\quad + \sum_{n=-N}^{-N_{-}-1}Tr_{\mathcal{D}_{e}}(\mathbbm{1}_{\Omega_{n,a_1^{com}}\cap \mathcal{D}_{e}}\Sigma_{e}^{\Lambda_2,S}).
\end{aligned}
\end{equation}
Since the indicator $\mathbbm{1}_{\Omega_{n,a_1^{com}}\cap \mathcal{D}}$ is localized in the $x_1$-direction and $\Sigma_{e}^{\Lambda_2,S}$ is localized along $x_2$ due to the commutator $[\mathcal{H}_e,\Lambda_2]$ (a rigorous justification follows similar lines as in Step 1 of Proposition \ref{prop_vanish_bulk_torque_correlation}), 
the operators $\mathbbm{1}_{\Omega_{n,a_1^{com}}\cap \mathcal{D}_{e}}\Sigma_{e}^{\Lambda_2,S}$ are trace-class. It follows that  
the first sum on the right side of \eqref{eq_interface_spin_torque_justified_proof_1} is finite and and is independent of $N$:
\begin{equation} \label{eq_interface_spin_torque_justified_proof_2}
\sum_{n=-N_{-}}^{N_{+}-1}Tr_{\mathcal{D}_{e}}(\mathbbm{1}_{\Omega_{n,a_1^{com}}\cap \mathcal{D}_{e}}\Sigma_{e}^{\Lambda_2,S}) < \infty.
\end{equation}

{\color{blue}Step 2:} We prove that 
\begin{equation} \label{eq_interface_spin_torque_justified_proof_3}
\sum_{n=N_{+}}^{\infty}\Big| Tr_{\mathcal{D}_{e}}(\mathbbm{1}_{\Omega_{n,a_1^{com}}\cap \mathcal{D}_{e}}\Sigma_{e}^{\Lambda_2,S}) \Big| < \infty, 
\end{equation}
and in a similar manner
\begin{equation*}
\sum_{n=-\infty}^{-N_{-}-1}\Big| Tr_{\mathcal{D}_{e}}(\mathbbm{1}_{\Omega_{n,a_1^{com}}\cap \mathcal{D}_{e}}\Sigma_{e}^{\Lambda_2,S}) \Big|<\infty .
\end{equation*}
Then, together with \eqref{eq_interface_spin_torque_justified_proof_1} and \eqref{eq_interface_spin_torque_justified_proof_2}, we can conclude that \eqref{eq_interface_torque_1} is well-defined and hence complete the proof.

The key to the proof of \eqref{eq_interface_spin_torque_justified_proof_3} is that the terms in the summation decay exponentially as $n\to\infty$, since the spin-torque response vanishes mesoscopically in the bulk $\mathcal{D}_{+}$, as shown below. Indeed, by Proposition \ref{prop_vanish_bulk_torque_correlation}, for $n\geq N_{+}$,
\begin{equation*}
Tr_{\mathcal{D}_{e}}(\mathbbm{1}_{\Omega_{n,a_1^{com}}\cap \mathcal{D}_{e}}\Sigma_{+}^{\Lambda_2,S})
=Tr_{\mathcal{D}_{+}}(\mathbbm{1}_{\Omega_{n,a_1^{com}}\cap \mathcal{D}_{+}}\Sigma_{+}^{\Lambda_2,S})
=0.
\end{equation*}
Hence for $n\geq N_{+}$,
\begin{equation} \label{eq_interface_spin_torque_justified_proof_4}
Tr_{\mathcal{D}_{e}}(\mathbbm{1}_{\Omega_{n,a_1^{com}}\cap \mathcal{D}_{e}}\Sigma_{e}^{\Lambda_2,S})
=Tr_{\mathcal{D}_{+}}\Big(\mathbbm{1}_{\Omega_{n,a_1^{com}}\cap \mathcal{D}_{+}}\big(\Sigma_{e}^{\Lambda_2,S}-\Sigma_{+}^{\Lambda_2,S}\big)\Big) .
\end{equation}
By definition \eqref{eq_interface_torque_2} and the resolvent identity $R_{e}(z)-R_{+}(z)=R_{e}(z)(\mathcal{H}_{+}-\mathcal{H}_{e})R_{+}(z)$, we see that
$$
\Sigma_{e}^{\Lambda_2,S}-\Sigma_{+}^{\Lambda_2,S}=\int_{\mathbb{C}}\frac{\partial \tilde{\rho}}{\partial \overline{z}}A(z)dm(z)
$$ 
where $A(z)$ is a finite product of operators from the set
$$
\Big\{\mathcal{H}_{+}-\mathcal{H}_{e},[\mathcal{H}_{+}-\mathcal{H}_{e},\Lambda_2],[\mathcal{H}_{+}-\mathcal{H}_{e},S],R_{*}(z),[\mathcal{H}_{*},\Lambda_2],[\mathcal{H}_{*},S] :\, *\in\{e,+\} \Big\}.
$$
Note that $A(z)$ contains at least one operator involving $\mathcal{H}_{+}-\mathcal{H}_{e}$ and at least one commutator with $\Lambda_2$. This ensures that $A(z)(\bm{x},\bm{y})$ decays as $|x_2|,|y_2|\to +\infty$ and $x_1,y_1\to +\infty$. More precisely, note the following estimates by the Lemmas in Section 2.1:
\begin{equation*}
\left\{
\begin{aligned}
&|(\mathcal{H}_{+}-\mathcal{H}_{e})(\bm{x},\bm{y})|\leq C_1e^{-D_1(\|\bm{x}-\bm{y}\|_{1}+\text{dist}(\bm{x},\Omega_{-})+\text{dist}(\bm{y},\Omega_{-}))}, \\
&|[\mathcal{H}_{+}-\mathcal{H}_{e},\Lambda_2](\bm{x},\bm{y})|
=|(\mathcal{H}_{+}-\mathcal{H}_{e})(\bm{x},\bm{y})||\Lambda_2(\bm{x})-\Lambda_2(\bm{y})|
\leq C_2e^{-D_2(\|\bm{x}-\bm{y}\|_{1}+\text{dist}(\bm{x},\Omega_{-})+\text{dist}(\bm{y},\Omega_{-})+|x_2|+|y_2|)}, \\
&|[\mathcal{H}_{+}-\mathcal{H}_{e},S](\bm{x},\bm{y})|
=|(\mathcal{H}_{+}-\mathcal{H}_{e})(\bm{x},\bm{y})||S(\bm{x})-S(\bm{y})|
\leq C_3 e^{-D_3(\|\bm{x}-\bm{y}\|_{1}+\text{dist}(\bm{x},\Omega_{-})+\text{dist}(\bm{y},\Omega_{-}))}, \\
&|R_{*}(z)(\bm{x},\bm{y})|\leq \frac{C_4}{|\text{Im }z|}e^{-D_4|\text{Im }z|\|\bm{x}-\bm{y}\|_{1}}, \\
&|[\mathcal{H}_{*},S](\bm{x},\bm{y})|\leq C_5e^{-D_5\|\bm{x}-\bm{y}\|_{1}}, \\
&|[\mathcal{H}_{*},\Lambda_2](\bm{x},\bm{y})| \leq C_6 e^{-D_3(\|\bm{x}-\bm{y}\|_{1}+|x_2|+|y_2|)},
\end{aligned}
\right.
\end{equation*}
where $*\in\{+,e\}$, $C_i,D_i$ depend only on $\Lambda_2$. By Lemma \ref{lem_product_localized}, the following estimate of $A(z)(\bm{x},\bm{y})$ holds for all $|\text{Im }z|\leq 1$
\begin{equation} \label{eq_interface_spin_torque_justified_proof_5}
\big| A(z)(\bm{x},\bm{y}) \big| \leq \frac{C}{|\text{Im }z|^{p}}e^{-\frac{D|\text{Im }z|}{4}(\|\bm{x}-\bm{y}\|_{1}+\text{dist}(\bm{x},\Omega_{-})+\text{dist}(\bm{y},\Omega_{-})+|x_2|+|y_2|)}
\end{equation}
for some $p\in\mathbb{N}$, $D=\min_{1\leq i\leq 6} D_i$ and $C>0$ depending only on $\Lambda_2$. This implies 
$$
\|\mathbbm{1}_{\Omega_{n,a_1^{com}}\cap \mathcal{D}_{+}}A(z)\|_{\mathscr{T}_{\mathcal{D}_{+}}}\leq \frac{C}{|\text{Im }z|^{p+2}}
$$
by Lemma \ref{lem_trace_norm_kernel} and \ref{lem_product_localized} (analogous to \eqref{eq_vanish_bulk_torque_correlation_proof_5} in Section 3). Hence, by the almost-analyticity of $\tilde{\rho}$
\begin{equation*}
\Big\|\mathbbm{1}_{\Omega_{n,a_1^{com}}\cap \mathcal{D}_{+}}\big(\Sigma_{e}^{\Lambda_2,S}-\Sigma_{+}^{\Lambda_2,S}\big)\mathbbm{1}_{\Omega_{n,a_1^{com}}\cap \mathcal{D}_{+}}\Big\|_{\mathscr{T}_{\mathcal{D}_{+}}}
\leq \int_{\mathbb{C}}\Big|\frac{\partial \tilde{\rho}}{\partial \overline{z}}\Big| \|\mathbbm{1}_{\Omega_{n,a_1^{com}}\cap \mathcal{D}_{+}}A(z)\mathbbm{1}_{\Omega_{n,a_1^{com}}\cap \mathcal{D}_{+}}\|_{\mathscr{T}_{\mathcal{D}_{+}}} <\infty .
\end{equation*}
This implies we can interchange the trace and complex integral when estimating \eqref{eq_interface_spin_torque_justified_proof_4}
\begin{equation*}
\begin{aligned}
&\Big|Tr_{\mathcal{D}_{+}}\Big(\mathbbm{1}_{\Omega_{n,a_1^{com}}\cap \mathcal{D}_{+}}\big(\Sigma_{e}^{\Lambda_2,S}-\Sigma_{+}^{\Lambda_2,S}\big)\mathbbm{1}_{\Omega_{n,a_1^{com}}\cap \mathcal{D}_{+}}\Big)\Big| \\
&=\Big|\int_{\Omega_{n,a_1^{com}}\cap \mathcal{D}_{+}} d\bm{x}
\int_{\mathbb{C}} \frac{\partial \tilde{\rho}}{\partial \overline{z}}A(z)(\bm{x},\bm{x})dm(z) \Big| 
=\Big|\int_{\mathbb{C}} \frac{\partial \tilde{\rho}}{\partial \overline{z}} dm(z)
\int_{\Omega_{n,a_1^{com}}\cap \mathcal{D}_{+}} 
A(z)(\bm{x},\bm{x}) d\bm{x} \Big| \\
&\leq \int_{\mathbb{C}} \Big|\frac{\partial \tilde{\rho}}{\partial \overline{z}} \Big| dm(z)
\int_{\Omega_{n,a_1^{com}}\cap \mathcal{D}_{+}} 
\big|A(z)(\bm{x},\bm{x})\big| d\bm{x} .
\end{aligned}
\end{equation*}
Applying \eqref{eq_interface_spin_torque_justified_proof_5}, we see
\begin{equation*}
\begin{aligned}
&\Big|Tr_{\mathcal{D}_{+}}\Big(\mathbbm{1}_{\Omega_{n,a_1^{com}}\cap \mathcal{D}_{+}}\big(\Sigma_{e}^{\Lambda_2,S}-\Sigma_{+}^{\Lambda_2,S}\big)\mathbbm{1}_{\Omega_{n,a_1^{com}}\cap \mathcal{D}_{+}}\Big)\Big| \\
&\leq C\int_{\mathbb{C}} \Big|\frac{\partial \tilde{\rho}}{\partial \overline{z}} \Big|\frac{1}{|\text{Im }z|^{p}} dm(z)
\int_{\Omega_{n,a_1^{com}}\cap \mathcal{D}_{+}} e^{-\frac{D|\text{Im }z|}{2}(\text{dist}(\bm{x},\Omega_{-})+|x_2|)} d\bm{x} \\
&\leq C\int_{\mathbb{C}} \Big|\frac{\partial \tilde{\rho}}{\partial \overline{z}} \Big|\frac{1}{|\text{Im }z|^{p+1}}e^{-\frac{D|\text{Im }z|}{2}\text{dist}(\Omega_{n,a_1^{com}}\cap \mathcal{D}_{+},\Omega_{-})} dm(z) \\
&\leq C\int_{\mathbb{C}} \Big|\frac{\partial \tilde{\rho}}{\partial \overline{z}} \Big|\frac{1}{|\text{Im }z|^{p+1}} e^{-nD|\text{Im }z|}  dm(z) .
\end{aligned}
\end{equation*}
Here the constants $C,D$ are modified accordingly in each step but remain independent of $n$ and $\text{Im }z$. Sum over $n$ and using \eqref{eq_interface_spin_torque_justified_proof_4} and the almost-analyticity of $\tilde{\rho}$, we conclude that
\begin{equation*}
\begin{aligned}
\sum_{n=N_{+}}^{\infty}\Big| Tr_{\mathcal{D}_{e}}(\mathbbm{1}_{\Omega_{n,a_1^{com}}\cap \mathcal{D}_{e}}\Sigma_{e}^{\Lambda_2,S}\mathbbm{1}_{\Omega_{n,a_1^{com}}\cap \mathcal{D}_{e}}) \Big|
&\leq C\int_{\mathbb{C}} \Big|\frac{\partial \tilde{\rho}}{\partial \overline{z}} \Big|\frac{1}{|\text{Im }z|^{p+1}} \Big(\sum_{n=N_{+}}^{\infty}e^{-nD|\text{Im }z|}\Big)  dm(z) \\
&\leq C\int_{\mathbb{C}} \Big|\frac{\partial \tilde{\rho}}{\partial \overline{z}} \Big|\frac{1}{|\text{Im }z|^{p+1}(1-e^{-D|\text{Im z}|})}  dm(z) \\
&\leq C\int_{-1}^{1} \frac{|\text{Im }z|^2}{1-e^{-D|\text{Im z}|}}  d|\text{Im }z| <\infty .
\end{aligned}
\end{equation*}
Hence \eqref{eq_interface_spin_torque_justified_proof_3} is proved.
\end{proof}

\section{Bulk-interface correspondence: proof of Theorem \ref{thm_bic}}

Once the bulk and interface spin conductivities are defined, the BIC \eqref{eq_bic} follows from the physical principal: the conservation law of spin transport. Before showing the proof, we illustrate the physical idea as follows. Recall the thought experiment in Figure \ref{fig_thought experiment}. For a large strip $\Omega_{R}=[-R,R]\times \mathbb{R}$ centered at the origin ($R\gg L$), the outward spin current flowing across its right boundary is characterized by 
$$
[\mathcal{H}_{e},S\mathbbm{1}_{\Omega_{R}^{c}\cap \Omega_{+}}]\simeq [\mathcal{H}_{+},S\mathbbm{1}_{\Omega_{R}^{c}\cap \Omega_{+}}] .
$$
As seen from Definition \ref{def_spin_conducatance}, this spin flow is described exactly by the bulk spin conductivity $\sigma_{+}^{\Lambda_2}$ (by setting $\Lambda_1=\mathbbm{1}_{\Omega_{R}^{c}\cap \Omega_{+}}$ in \eqref{eq_spin_conduc_2}). Similarly, the inward spin current flowing across the left boundary is given by
$$
[\mathcal{H}_{e},S\mathbbm{1}_{\Omega_{R}^{c}\cap \Omega_{-}}]
\simeq [\mathcal{H}_{-},S\mathbbm{1}_{\Omega_{R}^{c}\cap \Omega_{-}}]
\overset{\Lambda_1=1-\mathbbm{1}_{\Omega_{R}^{c}\cap \Omega_{-}}}{=} [\mathcal{H}_{-},S]-[\mathcal{H}_{-},\Lambda_2] \sim 0-\sigma_{-}^{\Lambda_2},
$$
where the fact that the response of the spin-torque $[\mathcal{H}_{-},S]$ vanishes (mesoscopically) in the left bulk is used in the last step. The longitudinal current is characterized by $[\mathcal{H}_{e},S\mathbbm{1}_{\Omega_{R}}]$, while the spin-torque response inside the box is described by $[\mathcal{H}_{e},S]\mathbbm{1}_{\Omega_{R}}\simeq [\mathcal{H}_{e},S]$. In the operator language, the conservation law is simply the following identity
\begin{equation*}
\underbrace{[\mathcal{H}_{e},S]}_{\sim \sigma^{torque}_{e}}=[\mathcal{H}_{e},S(\mathbbm{1}_{\Omega_{R}^{c}\cap \Omega_{-}}+\mathbbm{1}_{\Omega_{R}^{c}\cap \Omega_{-}}+\mathbbm{1}_{\Omega_{R}})]
\simeq \underbrace{[\mathcal{H}_{+},S\mathbbm{1}_{\Omega_{R}^{c}\cap \Omega_{+}}]}_{\sim \sigma_{+}}
- \underbrace{[\mathcal{H}_{-},S(1-\mathbbm{1}_{\Omega_{R}^{c}\cap \Omega_{+}})]}_{\sim \sigma_{-}}
+\underbrace{[\mathcal{H}_{e},S\mathbbm{1}_{\Omega_{R}}]}_{\sim \sigma_{e}^{drift}}.
\end{equation*}
This informal argument is made rigorous in the sequel.

\begin{proof}[Proof of Theorem \ref{thm_bic}]

{\color{blue}Step 1:} As outlined above, we start from the interface spin-torque conductance $\sigma_{e}^{torque,\Lambda_2,\rho}$. Fix $R>L$. By its definition \eqref{eq_interface_torque_1}, we write
\begin{equation} \label{eq_bic_proof_1}
2\pi\sigma_{e}^{torque,\Lambda_2,\rho}= Tr_{a_1^{com},\mathcal{D}_{e}}^{pv,1}(\Sigma_{e}^{\Lambda_2,S\mathbbm{1}_{\Omega_{R}^{c}\cap \Omega_{+}\cap \mathcal{D}_{e}}})
+Tr_{a_1^{com},\mathcal{D}_{e}}^{pv,1}(\Sigma_{e}^{\Lambda_2,S\mathbbm{1}_{\Omega_{R}^{c}\cap \Omega_{-}\cap \mathcal{D}_{e}}}) +Tr_{a_1^{com},\mathcal{D}_{e}}^{pv,1}(\Sigma_{e}^{\Lambda_2,S\mathbbm{1}_{\Omega_{R}\cap \mathcal{D}_{e}}})
\end{equation}
with
\begin{equation*}
\begin{aligned}
\Sigma_{e}^{\Lambda_2, S\mathbbm{1}_{U}}
&:=\int_{\mathbb{C}}dm(z)\frac{\partial \tilde{\rho}}{\partial \overline{z}}\Big[
R_{e}(z) \big[\mathcal{H}_{e},\Lambda_2\big] R_{e}(z) \big[\mathcal{H}_{e}, S\mathbbm{1}_{U}\big] R_{e}(z) \\
&\quad\quad\quad\quad\quad\quad\quad- R_{e}(z) \big[\mathcal{H}_{e}, S\mathbbm{1}_{U}\big] R_{e}(z) \big[\mathcal{H}_{e},\Lambda_2\big] R_{e}(z)
\Big] \\
&=:\int_{\mathbb{C}}\frac{\partial \tilde{\rho}}{\partial \overline{z}} \Sigma_{e}^{\Lambda_2, S\mathbbm{1}_{U}}(z) dm(z)
\end{aligned}
\end{equation*}
for $U\in\{\Omega_{R}\cap \mathcal{D}_{e},\Omega_{R}^{c}\cap\Omega_{+}\cap \mathcal{D}_{e},\Omega_{R}^{c}\cap\Omega_{-}\cap \mathcal{D}_{e} \}$. We show that the three terms in \eqref{eq_bic_proof_1} converge to $\sigma_{+}^{\Lambda_2}$, $-\sigma_{-}^{\Lambda_2}$, and $-\sigma_{e}^{drift,\Lambda_2/\rho}$ as $R\to\infty$ in Step 2, 3 and 4, respectively.

{\color{blue}Step 2:} We decompose $Tr_{a_1^{com},\mathcal{D}_{e}}^{pv,1}(\Sigma_{e}^{\Lambda_2,S\mathbbm{1}_{\Omega_{R}^{c}\cap \Omega_{+}\cap \mathcal{D}_{e}}})$ as
\begin{equation*}
\begin{aligned}
Tr_{a_1^{com},\mathcal{D}_{e}}^{pv,1}(\Sigma_{e}^{\Lambda_2,S\mathbbm{1}_{\Omega_{R}^{c}\cap \Omega_{+}\cap \mathcal{D}_{e}}})
&=Tr_{a_1^{com},\mathcal{D}_{e}}^{pv,1}(\mathbbm{1}_{\{x_1\leq L\}}\Sigma_{e}^{\Lambda_2,S\mathbbm{1}_{\Omega_{R}^{c}\cap \Omega_{+}\cap \mathcal{D}_{e}}})
+Tr_{a_1^{com},\mathcal{D}_{e}}^{pv,1}(\mathbbm{1}_{\{x_1>L\}}\Sigma_{e}^{\Lambda_2,S\mathbbm{1}_{\Omega_{R}^{c}\cap \Omega_{+}\cap \mathcal{D}_{e}}}) \\
&=Tr_{a_1^{com},\mathcal{D}_{e}}^{pv,1}(\mathbbm{1}_{\{x_1\leq L\}}\Sigma_{e}^{\Lambda_2,S\mathbbm{1}_{\Omega_{R}^{c}\cap \Omega_{+}\cap \mathcal{D}_{e}}})
+Tr_{a_1^{com},\mathcal{D}_{+}}^{pv,1}(\mathbbm{1}_{\{x_1>L\}}\Sigma_{e}^{\Lambda_2,S\mathbbm{1}_{\Omega_{R}^{c}\cap \Omega_{+}\cap \mathcal{D}_{e}}}),
\end{aligned}
\end{equation*}
where the last equality follows from \eqref{eq_interface_structure}. We prove the following two equalities separately, in two steps that follow:
\begin{equation} \label{eq_bic_proof_2}
\lim_{R\to\infty} Tr_{a_1^{com},\mathcal{D}_{e}}^{pv,1}(\mathbbm{1}_{\{x_1\leq L\}}\Sigma_{e}^{\Lambda_2,S\mathbbm{1}_{\Omega_{R}^{c}\cap \Omega_{+}\cap \mathcal{D}_{e}}})
=0,
\end{equation}
and
\begin{equation} \label{eq_bic_proof_3}
\lim_{R\to\infty} Tr_{a_1^{com},\mathcal{D}_{+}}^{pv,1}(\mathbbm{1}_{\{x_1> L\}}\Sigma_{e}^{\Lambda_2,S\mathbbm{1}_{\Omega_{R}^{c}\cap \Omega_{+}\cap \mathcal{D}_{e}}})
=2\pi\sigma_{+}^{\Lambda_2}.
\end{equation}
Then we conclude that
\begin{equation*}
Tr_{a_1^{com},\mathcal{D}_{e}}^{pv,1}(\Sigma_{e}^{\Lambda_2,S\mathbbm{1}_{\Omega_{R}^{c}\cap \Omega_{+}\cap \mathcal{D}_{e}}})=2\pi\sigma_{+}^{\Lambda_2} .
\end{equation*}

{\color{blue}Step 2.1:} We prove \eqref{eq_bic_proof_2}. The physical intuition for \eqref{eq_bic_proof_2} is illustrated as follows: since $\Sigma_{e}^{\Lambda_2,S\mathbbm{1}_{\Omega_{R}^{c}\cap \Omega_{+}\cap \mathcal{D}_{e}}}$ measures the spin current across the boundary $\partial (\Omega_{R}^{c}\cap \Omega_{+})$ of the right half plane $\Omega_{R}^{c}\cap \Omega_{+}$, its expectation in the left half plane $\{x_1\leq L\}$ necessarily vanishes as $R\to\infty$ because $\text{dist}(\Omega_{R}^{c}\cap \Omega_{+},\{x_1\leq L\})=R-L\to\infty$.

First we show the $\Sigma_{e}^{\Lambda_2,S\mathbbm{1}_{\Omega_{R}^{c}\cap \Omega_{+}\cap \mathcal{D}_{e}}}\mathbbm{1}_{\{x_1\leq L\}}$ is trace-class; hence the principal-value trace in \eqref{eq_bic_proof_2} can be replaced by the conventional trace. This follows from the fact that $[\mathcal{H}_{e},\Lambda_2]$ is localized in the $x_2$-direction, and more importantly, $\mathbbm{1}_{\Omega_{R}^{c}\cap \Omega_{+}}$ (decays as $x_1\to -\infty$) and $\mathbbm{1}_{\{x_1\leq L\}}$ (decays as $x_1\to \infty$) together provide the localization in $x_1$. In fact, note the following estimates (analogous to the ones in the previous sections): 
\begin{equation*} 
\left\{
\begin{aligned}
&|R_{e}(z)(\bm{x},\bm{y})|\leq \frac{C_1}{|\text{Im }z|}e^{-D_1|\text{Im }z|\|\bm{x}-\bm{y}\|_{1}}, \\
&|S(\bm{x},\bm{y})|\leq C_2e^{-D_2\|\bm{x}-\bm{y}\|_{1}}, \\
&|\mathcal{H}_{e}(\bm{x},\bm{y})| \leq C_3e^{-D_3\|\bm{x}-\bm{y}\|_{1}}, \\
&|[\mathcal{H}_{e},\Lambda_2](\bm{x},\bm{y})| \leq C_4e^{-D_4(\|\bm{x}-\bm{y}\|_{1}+|x_2|+|y_2|)}, \\
&|\mathbbm{1}_{\Omega_{R}^{c}\cap \Omega_{+}\cap\mathcal{D}_{e}}(\bm{x},\bm{y})|
\leq C_{5}e^{-D_5(\|\bm{x}-\bm{y}\|_{1}+\text{dist}(\bm{x},\Omega_{R}^{c}\cap \Omega_{+}\cap\mathcal{D}_{e})+\text{dist}(\bm{y},\Omega_{R}^{c}\cap \Omega_{+}\cap\mathcal{D}_{e}))}, \\
&|\mathbbm{1}_{\{x_1\leq L\}}(\bm{x},\bm{y})|
\leq C_{6}e^{-D_6(\|\bm{x}-\bm{y}\|_{1}+\text{dist}(\bm{x},\{x_1\leq L\})+\text{dist}(\bm{y},\{x_1\leq L\})}. 
\end{aligned}
\right.
\end{equation*}
By Lemma \ref{lem_product_localized}, there exists $C,D,p>0$ depending only on $\Lambda_2$ such that
\begin{equation*}
\begin{aligned}
&\Big|\Big(\mathbbm{1}_{\{x_1\leq L\}}\Sigma_{e}^{\Lambda_2,S\mathbbm{1}_{\Omega_{R}^{c}\cap \Omega_{+}\cap\mathcal{D}_{e}}}(z)\Big)(\bm{x},\bm{y})\Big| \\
&\leq \frac{C}{|\text{Im }z|^{p}}e^{-2D|\text{Im }z|\big(\|\bm{x}-\bm{y}\|_{1}+|x_2|+|y_2|+\text{dist}(\bm{x},\Omega_{R}^{c}\cap \Omega_{+}\cap\mathcal{D}_{e})+\text{dist}(\bm{y},\Omega_{R}^{c}\cap \Omega_{+}\cap\mathcal{D}_{e})+\text{dist}(\bm{x},\{x_1\leq L\})+\text{dist}(\bm{y},\{x_1\leq L\}) \big)}.
\end{aligned}
\end{equation*}
By the triangle inequality
\begin{equation*}
\text{dist}(\bm{x},\Omega_{R}^{c}\cap \Omega_{+}\cap\mathcal{D}_{e})+\text{dist}(\bm{x},\{x_1\leq L\})\geq \text{dist}(\{x_1\leq L\},\Omega_{R}^{c}\cap \Omega_{+})=R-L,
\end{equation*}
\begin{equation*}
\text{dist}(\bm{x},\Omega_{R}^{c}\cap \Omega_{+}\cap\mathcal{D}_{e})+\text{dist}(\bm{x},\{x_1\leq L\})
\geq \text{dist}(\bm{x},\Omega_{+})+\text{dist}(\bm{x},\Omega_{-})-L\geq |x_1|-L.
\end{equation*}
Therefore,
\begin{equation} \label{eq_bic_proof_4}
\Big|\Big(\mathbbm{1}_{\{x_1\leq L\}}\Sigma_{e}^{\Lambda_2,S\mathbbm{1}_{\Omega_{R}^{c}\cap \Omega_{+}\cap\mathcal{D}_{e} }}(z)\Big)(\bm{x},\bm{y})\Big|
\leq \frac{Ce^{-2D|\text{Im }z|R}}{|\text{Im }z|^{p}}e^{-D|\text{Im }z|\big(\|\bm{x}-\bm{y}\|_{1}+|x_2|+|y_2|+|x_1|+|y_1| \big)},
\end{equation}
where $C,D$ depend only on $\Lambda_2$ and $L$. This, argued analogously to Step 2 of Proposition \ref{prop_interface_spin_torque_justified}, implies 
that 
$$
\mathbbm{1}_{\{x_1\leq L\}}\Sigma_{e}^{\Lambda_2, S\mathbbm{1}_{\Omega_{R}^{c}\cap \Omega_{+}\cap\mathcal{D}_{e}}}=\int_{\mathbb{C}}\frac{\partial \tilde{\rho}}{\partial \overline{z}}\big(\mathbbm{1}_{\{x_1\leq L\}} \Sigma_{e}^{\Lambda_2, S\mathbbm{1}_{\Omega_{R}^{c}\cap \Omega_{+}\cap\mathcal{D}_{e}}}(z)\big) dm(z)
$$ 
is trace-class, and we can thus interchange the complex integral and trace to obtain
\begin{equation*}
\begin{aligned}
Tr_{a_1^{com},\mathcal{D}_{e}}^{pv,1}(\mathbbm{1}_{\{x_1\leq L\}}\Sigma_{e}^{\Lambda_2,S\mathbbm{1}_{\Omega_{R}^{c}\cap \Omega_{+}\cap \mathcal{D}_{e}}})
&=Tr_{\mathcal{D}_e}(\mathbbm{1}_{\{x_1\leq L\}}\Sigma_{e}^{\Lambda_2, S\mathbbm{1}_{\Omega_{R}^{c}\cap \Omega_{+}\cap\mathcal{D}_{e}}}) \\
&=\int_{\mathbb{C}}\frac{\partial \tilde{\rho}}{\partial \overline{z}} 
Tr_{\mathcal{D}_{e}}\big(\mathbbm{1}_{\{x_1\leq L\}} \Sigma_{e}^{\Lambda_2, S\mathbbm{1}_{\Omega_{R}^{c}\cap \Omega_{+}\cap\mathcal{D}_{e}}}(z)\big)
dm(z).  
\end{aligned}
\end{equation*}
By \eqref{eq_bic_proof_4}, we have
\begin{equation*}
\begin{aligned}
\Big| Tr_{a_1^{com},\mathcal{D}_{e}}^{pv,1}(\mathbbm{1}_{\{x_1\leq L\}}\Sigma_{e}^{\Lambda_2,S\mathbbm{1}_{\Omega_{R}^{c}\cap \Omega_{+}\cap \mathcal{D}_{e}}}) \Big|
&=C\int_{\mathbb{C}}\Big|\frac{\partial \tilde{\rho}}{\partial \overline{z}} \Big|\frac{e^{-2D|\text{Im }z|R}}{|\text{Im }z|^{p}}
dm(z). 
\end{aligned}
\end{equation*}
Then \eqref{eq_bic_proof_2} follows by the almost-analyticity \eqref{eq_almost_analytic} and the dominated convergence theorem.

{\color{blue}Step 2.2:} We prove \eqref{eq_bic_proof_3}. This is achieved by summing the following three identities:
\begin{equation} \label{eq_bic_proof_5}
\lim_{R\to\infty} Tr_{a_1^{com},\mathcal{D}_{+}}^{pv,1}\Big(\mathbbm{1}_{\{x_1> L\}}\big(\Sigma_{e}^{\Lambda_2,S\mathbbm{1}_{\Omega_{R}^{c}\cap \Omega_{+}\cap \mathcal{D}_{e}}}-\Sigma_{+}^{\Lambda_2,S\mathbbm{1}_{\Omega_{R}^{c}\cap \Omega_{+}\cap \mathcal{D}_{e}}} \big) \Big)
=0.
\end{equation}
\begin{equation} \label{eq_bic_proof_6}
-\lim_{R\to\infty} Tr_{a_1^{com},\mathcal{D}_{+}}^{pv,1}(\mathbbm{1}_{\{x_1\leq L\}}\Sigma_{+}^{\Lambda_2,S\mathbbm{1}_{\Omega_{R}^{c}\cap \Omega_{+}\cap \mathcal{D}_{e}}})
=0.
\end{equation}
\begin{equation} \label{eq_bic_proof_7}
Tr_{a_1^{com},\mathcal{D}_{+}}^{pv,1}(\Sigma_{+}^{\Lambda_2,S\mathbbm{1}_{\Omega_{R}^{c}\cap \Omega_{+}\cap \mathcal{D}_{e}}})
=2\pi\sigma_{+}^{\Lambda_2} \quad (\forall R>L).
\end{equation}
The identity \eqref{eq_bic_proof_6} follows from similar lines as in Step 2.1, and \eqref{eq_bic_proof_7} follows directly from Proposition \ref{prop_spin_conduct_justified}. It remains to prove \eqref{eq_bic_proof_5}. The proof is also similar to Step 2.1, with the only difference being that the decay in the $x_1$-direction is now provided by $\mathcal{H}_{e}-\mathcal{H}_{+}$ (Lemma \ref{lem_He-Hpm}), rather than by the indicator function $\mathbbm{1}_{\{x_1\leq L\}}$, as we demonstrate below.

Note that, similar to Step 2 of Proposition \ref{prop_interface_spin_torque_justified}, the resolvent identity yields that the difference 
$$
\Sigma_{e}^{\Lambda_2,S\mathbbm{1}_{\Omega_{R}^{c}\cap \Omega_{+}\cap \mathcal{D}_{e}}}(z)-\Sigma_{+}^{\Lambda_2,S\mathbbm{1}_{\Omega_{R}^{c}\cap \Omega_{+}\cap \mathcal{D}_{e}}}(z)
$$ 
can be expressed as a finite product of operators from the set
$$
\Big\{\mathcal{H}_{+}-\mathcal{H}_{e},[\mathcal{H}_{+}-\mathcal{H}_{e},\Lambda_2],\mathbbm{1}_{\Omega_{R}^{c}\cap \Omega_{+}\cap \mathcal{D}_{e}},R_{*}(z),[\mathcal{H}_{*},\Lambda_2] \quad (*\in\{e,+\}) \Big\},
$$
with the operator $\mathbbm{1}_{\Omega_{R}^{c}\cap \Omega_{+}\cap\mathcal{D}_{e}}$ and at lease another one involving $\mathcal{H}_{+}-\mathcal{H}_{e}$. These two operators provide the localization in $x_1$ by Lemma \ref{lem_He-Hpm}:
\begin{equation*}
\left\{
\begin{aligned}
&|(\mathcal{H}_{+}-\mathcal{H}_{e})(\bm{x},\bm{y})|\leq C_1e^{-D_1(\|\bm{x}-\bm{y}\|_{1}+\text{dist}(\bm{x},\Omega_{-})+\text{dist}(\bm{y},\Omega_{-}))}, \\
&|\mathbbm{1}_{\Omega_{R}^{c}\cap \Omega_{+}\cap\mathcal{D}_{e}}(\bm{x},\bm{y})|
\leq C_{2}e^{-D_2(\|\bm{x}-\bm{y}\|_{1}+\text{dist}(\bm{x},\Omega_{R}^{c}\cap \Omega_{+}\cap\mathcal{D}_{e})+\text{dist}(\bm{y},\Omega_{R}^{c}\cap \Omega_{+}\cap\mathcal{D}_{e}))}.
\end{aligned}
\right.
\end{equation*}
 Hence, by arguing in the same way as in Step 2.1, we derive the following bound: 
\begin{equation*}
\begin{aligned}
&\Big|\Big(\mathbbm{1}_{\{x_1> L\}}\big(\Sigma_{e}^{\Lambda_2,S\mathbbm{1}_{\Omega_{R}^{c}\cap \Omega_{+}\cap \mathcal{D}_{e}}}(z)-\Sigma_{+}^{\Lambda_2,S\mathbbm{1}_{\Omega_{R}^{c}\cap \Omega_{+}\cap \mathcal{D}_{e}}}(z)\big)\Big)(\bm{x},\bm{y}) \Big| \\
&\leq \Big|\Big(\Sigma_{e}^{\Lambda_2,S\mathbbm{1}_{\Omega_{R}^{c}\cap \Omega_{+}\cap \mathcal{D}_{e}}}(z)-\Sigma_{+}^{\Lambda_2,S\mathbbm{1}_{\Omega_{R}^{c}\cap \Omega_{+}\cap \mathcal{D}_{e}}}(z)\Big)(\bm{x},\bm{y}) \Big| \\
&\leq \frac{C}{|\text{Im }z|^{p}}e^{-2D|\text{Im }z|\big(\|\bm{x}-\bm{y}\|_{1}+|x_2|+|y_2|+\text{dist}(\bm{x},\Omega_{-})+\text{dist}(\bm{y},\Omega_{-})+\text{dist}(\bm{x},\Omega_{R}^{c}\cap \Omega_{+}\cap\mathcal{D}_{e})+\text{dist}(\bm{y},\Omega_{R}^{c}\cap \Omega_{+}\cap\mathcal{D}_{e}) \big)} \\
&\leq \frac{Ce^{-D|\text{Im }z|R}}{|\text{Im }z|^{p}}e^{-D|\text{Im }z|\big(\|\bm{x}-\bm{y}\|_{1}+|x_2|+|y_2|+|x_1|+|y_1| \big)}
\end{aligned}
\end{equation*}
for some $C,D,p>0$ that are independent of $\text{Im }z$ and $R$. With this estimate, one can prove \eqref{eq_bic_proof_5} following similar calculations as in Step 2.1. The details are skipped.

{\color{blue}Step 3:} In this step we prove
\begin{equation} \label{eq_bic_proof_8}
\lim_{R\to\infty} Tr_{a_1^{com},\mathcal{D}_{e}}^{pv,1}(\Sigma_{e}^{\Lambda_2,S\mathbbm{1}_{\Omega_{R}^{c}\cap \Omega_{-}\cap \mathcal{D}_{e}}})=-2\pi \sigma_{-}^{\Lambda_2}.
\end{equation}
The proof resembles that of Step 2, with additional effort on dealing with the spin-torque response in the left bulk region $\mathcal{D}_{-}$, as indicated at the beginning of this section. On the technical side, this involves tuning an additional parameter, specifically, $N>0$, as detailed in the paragraphs below. Denote $\Lambda_1^{R}=1-\mathbbm{1}_{\Omega_{R}^{c}\cap \Omega_{-}\cap \mathcal{D}_{e}}$. Choosing $N>0$ sufficiently large so that $Na_{1}^{-}>L$, we have the following decomposition: 
\begin{equation*}
\begin{aligned}
Tr_{a_1^{com},\mathcal{D}_{e}}^{pv,1}(\Sigma_{e}^{\Lambda_2,S\mathbbm{1}_{\Omega_{R}^{c}\cap \Omega_{-}\cap \mathcal{D}_{e}}})
&=Tr_{a_1^{com},\mathcal{D}_{e}}^{pv,1}(\mathbbm{1}_{\{x_1>-Na_{1}^{-}\}}\Sigma_{e}^{\Lambda_2,S\mathbbm{1}_{\Omega_{R}^{c}\cap \Omega_{-}\cap \mathcal{D}_{e}}} ) \\
&\quad +Tr_{a_1^{com},\mathcal{D}_{-}}^{pv,1}(\mathbbm{1}_{\{x_1\leq -Na_{1}^{-}\}}\Sigma_{e}^{\Lambda_2,S\mathbbm{1}_{\Omega_{R}^{c}\cap \Omega_{-}\cap \mathcal{D}_{e}}} ) \\
&=Tr_{a_1^{com},\mathcal{D}_{e}}^{pv,1}(\mathbbm{1}_{\{x_1>-Na_{1}^{-}\}}\Sigma_{e}^{\Lambda_2,S\mathbbm{1}_{\Omega_{R}^{c}\cap \Omega_{-}\cap \mathcal{D}_{e}}} ) \\
&\quad - Tr_{a_1^{com},\mathcal{D}_{-}}^{pv,1}(\mathbbm{1}_{\{x_1\leq -Na_{1}^{-}\}}\Sigma_{e}^{\Lambda_2,S\Lambda_1^{R}} ) 
+ Tr_{a_1^{com},\mathcal{D}_{-}}^{pv,1}(\mathbbm{1}_{\{x_1\leq -Na_{1}^{-}\}}\Sigma_{e}^{\Lambda_2,S} ) \\
&=Tr_{a_1^{com},\mathcal{D}_{e}}^{pv,1}(\mathbbm{1}_{\{x_1>-Na_{1}^{-}\}}\Sigma_{e}^{\Lambda_2,S\mathbbm{1}_{\Omega_{R}^{c}\cap \Omega_{-}\cap \mathcal{D}_{e}}} ) - Tr_{a_1^{com},\mathcal{D}_{-}}^{pv,1}(\mathbbm{1}_{\{x_1\leq -Na_{1}^{-}\}}\Sigma_{e}^{\Lambda_2,S\Lambda_1^{R}} ) \\ 
&\quad + Tr_{a_1^{com},\mathcal{D}_{-}}^{pv,1}(\mathbbm{1}_{\{x_1\leq -Na_{1}^{-}\}}(\Sigma_{e}^{\Lambda_2,S}-\Sigma_{-}^{\Lambda_2,S}) ) + Tr_{a_1^{com},\mathcal{D}_{-}}^{pv,1}(\mathbbm{1}_{\{x_1\leq -Na_{1}^{-}\}}\Sigma_{-}^{\Lambda_2,S}).
\end{aligned}
\end{equation*}
We claim that the following identities hold:
\begin{equation} \label{eq_bic_proof_9}
\lim_{R\to \infty}Tr_{a_1^{com},\mathcal{D}_{e}}^{pv,1}(\mathbbm{1}_{\{x_1>-Na_{1}^{-}\}}\Sigma_{e}^{\Lambda_2,S\mathbbm{1}_{\Omega_{R}^{c}\cap \Omega_{-}\cap \mathcal{D}_{e}}} )=0 \quad \text{for fixed $N$},
\end{equation}
\begin{equation} \label{eq_bic_proof_10}
\lim_{R\to \infty}Tr_{a_1^{com},\mathcal{D}_{-}}^{pv,1}(\mathbbm{1}_{\{x_1\leq -Na_{1}^{-}\}}\Sigma_{e}^{\Lambda_2,S\Lambda_1^{R}} ) = 2\pi \sigma_{-}^{\Lambda_2} \quad \text{for fixed $N$},
\end{equation}
\begin{equation} \label{eq_bic_proof_11}
\lim_{N\to \infty} Tr_{a_1^{com},\mathcal{D}_{-}}^{pv,1}(\mathbbm{1}_{\{x_1\leq -Na_{1}^{-}\}}(\Sigma_{e}^{\Lambda_2,S}-\Sigma_{-}^{\Lambda_2,S}) ) =0 ,
\end{equation}
\begin{equation} \label{eq_bic_proof_12} Tr_{a_1^{com},\mathcal{D}_{-}}^{pv,1}(\mathbbm{1}_{\{x_1\leq -Na_{1}^{-}\}}\Sigma_{-}^{\Lambda_2,S} ) =0 \quad \text{for any $N$}.
\end{equation}
Then the proof of \eqref{eq_bic_proof_8} follows as
\begin{equation*}
\begin{aligned}
&\lim_{R\to\infty} Tr_{a_1^{com},\mathcal{D}_{e}}^{pv,1}(\Sigma_{e}^{\Lambda_2,S\mathbbm{1}_{\Omega_{R}^{c}\cap \Omega_{-}\cap \mathcal{D}_{e}}}) \\
&=\lim_{R\to\infty}\Big[Tr_{a_1^{com},\mathcal{D}_{e}}^{pv,1}(\mathbbm{1}_{\{x_1>-Na_{1}^{-}\}}\Sigma_{e}^{\Lambda_2,S\mathbbm{1}_{\Omega_{R}^{c}\cap \Omega_{-}\cap \mathcal{D}_{e}}} ) - Tr_{a_1^{com},\mathcal{D}_{-}}^{pv,1}(\mathbbm{1}_{\{x_1\leq -Na_{1}^{-}\}}\Sigma_{e}^{\Lambda_2,S\Lambda_1^{R}} ) \\ 
&\quad + Tr_{a_1^{com},\mathcal{D}_{-}}^{pv,1}(\mathbbm{1}_{\{x_1\leq -Na_{1}^{-}\}}(\Sigma_{e}^{\Lambda_2,S}-\Sigma_{-}^{\Lambda_2,S}) ) + Tr_{a_1^{com},\mathcal{D}_{-}}^{pv,1}(\mathbbm{1}_{\{x_1\leq -Na_{1}^{-}\}}\Sigma_{-}^{\Lambda_2,S} )\Big] \\
&=-2\pi \sigma_{-}^{\Lambda_2}+Tr_{a_1^{com},\mathcal{D}_{-}}^{pv,1}(\mathbbm{1}_{\{x_1\leq -Na_{1}^{-}\}}(\Sigma_{e}^{\Lambda_2,S}-\Sigma_{-}^{\Lambda_2,S}) ) + Tr_{a_1^{com},\mathcal{D}_{-}}^{pv,1}(\mathbbm{1}_{\{x_1\leq -Na_{1}^{-}\}}\Sigma_{-}^{\Lambda_2,S} ) \\
&\overset{N\to\infty }{\to} -2\pi \sigma_{-}^{\Lambda_2} .
\end{aligned}
\end{equation*}
The proofs of \eqref{eq_bic_proof_9} and \eqref{eq_bic_proof_10} proceed exactly as in Steps 2.1 and 2.2, respectively. Moreover, \eqref{eq_bic_proof_12} follows from the definition of principal-value trace and Proposition \ref{prop_vanish_bulk_torque_correlation}:
\begin{equation*}
\begin{aligned}
Tr_{a_1^{com},\mathcal{D}_{-}}^{pv,1}(\mathbbm{1}_{\{x_1\leq -Na_{1}^{-}\}}\Sigma_{-}^{\Lambda_2,S} )
&=\lim_{M\to\infty}\sum_{n=-M}^{M-1}Tr_{\mathcal{D}_{-}}(\mathbbm{1}_{\Omega_{n,a_1^{com}}\cap \mathcal{D}_{-}}\mathbbm{1}_{\{x_1\leq -Na_{1}^{-}\}}\Sigma_{-}^{\Lambda_2,S}) \\
&=\lim_{M\to\infty}\sum_{n=-M}^{-N}Tr_{\mathcal{D}_{-}}(\mathbbm{1}_{\Omega_{n,a_1^{com}}\cap \mathcal{D}_{-}}\Sigma_{-}^{\Lambda_2,S})=0
\end{aligned}
\end{equation*}
as each term in the sum equals zero (vanishing mesoscopic average of spin-torque response in the left bulk). Finally, we point out the identity \eqref{eq_bic_proof_11} is a consequence of the fact that the difference of Hamiltonians $\mathcal{H}_{e}-\mathcal{H}_{-}$ evaluated inside the left half plane $\{x_1\leq -Na_{1}^{-}\}$ vanishes asymptotically as $N\to\infty$. A rigorous justification is similar to the proof of \eqref{eq_bic_proof_5}. We only sketch the main steps as we did in Step 2.2.

By the resolvent identity, $\big(\Sigma_{e}^{\Lambda_2,S}(z)-\Sigma_{-}^{\Lambda_2,S}(z)\big)\mathbbm{1}_{\{x_1\leq -Na_{1}^{-}\}}$ is the product of finitely many operators from 
$$
\Big\{\mathcal{H}_{-}-\mathcal{H}_{e},[\mathcal{H}_{-}-\mathcal{H}_{e},\Lambda_2],\mathbbm{1}_{\{x_1\leq -Na_{1}^{-}\}},R_{*}(z),[\mathcal{H}_{*},\Lambda_2] \quad (*\in\{e,-\}) \Big\}
$$
and necessarily contains $\mathcal{H}_{-}-\mathcal{H}_{e}$ and $\mathbbm{1}_{\{x_1\leq -Na_{1}^{-}\}}$. These two operators provide the localization in $x_1$:
\begin{equation*}
\left\{
\begin{aligned}
&|(\mathcal{H}_{-}-\mathcal{H}_{e})(\bm{x},\bm{y})|\leq C_1e^{-D_1(\|\bm{x}-\bm{y}\|_{1}+\text{dist}(\bm{x},\Omega_{+})+\text{dist}(\bm{y},\Omega_{+}))}, \\
&|\mathbbm{1}_{\{x_1\leq -Na_{1}^{-}\}}(\bm{x},\bm{y})|
\leq C_{2}e^{-D_2(\|\bm{x}-\bm{y}\|_{1}+\text{dist}(\bm{x},\{x_1\leq -Na_{1}^{-}\})+\text{dist}(\bm{y},\{x_1\leq -Na_{1}^{-}\}))}
\end{aligned}
\right.
\end{equation*}
by Lemma \ref{lem_He-Hpm}. Hence, incorporating the other operators in $\mathbbm{1}_{\{x_1\leq -Na_{1}^{-}\}}\big(\Sigma_{e}^{\Lambda_2,S}(z)-\Sigma_{-}^{\Lambda_2,S}(z)\big)$ and arguing in the same way as in Step 2.1, we have the following bound
\begin{equation*}
\begin{aligned}
&\Big|\Big(\mathbbm{1}_{\{x_1\leq -Na_{1}^{-}\}}\big(\Sigma_{e}^{\Lambda_2,S}(z)-\Sigma_{-}^{\Lambda_2,S}(z)\big)\Big)(\bm{x},\bm{y}) \Big| \\
&\leq \frac{C}{|\text{Im }z|^{p}}e^{-2D|\text{Im }z|\big(\|\bm{x}-\bm{y}\|_{1}+|x_2|+|y_2|+\text{dist}(\bm{x},\Omega_{+})+\text{dist}(\bm{y},\Omega_{+})+\text{dist}(\bm{x},\{x_1\leq -Na_{1}^{-}\})+\text{dist}(\bm{y},\{x_1\leq -Na_{1}^{-}\}) \big)} \\
&\leq \frac{Ce^{-D|\text{Im }z|N}}{|\text{Im }z|^{p}}e^{-D|\text{Im }z|\big(\|\bm{x}-\bm{y}\|_{1}+|x_2|+|y_2|+|x_1|+|y_1| \big)}
\end{aligned}
\end{equation*}
for some $C,D,p>0$ that are independent of $\text{Im }z$ and $N$. With this estimate, one can prove \eqref{eq_bic_proof_11} following similar as in calculations in Step 2.1. The details are skipped.

{\color{blue}Step 4:} In this step we prove
\begin{equation} \label{eq_bic_proof_13}
\lim_{R\to\infty} Tr_{a_1^{com},\mathcal{D}_{e}}^{pv,1}(\Sigma_{e}^{\Lambda_2,S\mathbbm{1}_{\Omega_{R}\cap \mathcal{D}_{e}}})=-2\pi\sigma_{e}^{drift,\Lambda_2/\rho}.
\end{equation}
Recall that
\begin{equation*}
\begin{aligned}
\Sigma_{e}^{\Lambda_2, S\mathbbm{1}_{\Omega_{R}\cap \mathcal{D}_{e}}}
&=\int_{\mathbb{C}}dm(z)\frac{\partial \tilde{\rho}}{\partial \overline{z}}\Big[
R_{e}(z) \big[\mathcal{H}_{e},\Lambda_2\big] R_{e}(z) \big[\mathcal{H}_{e}, S\mathbbm{1}_{\Omega_{R}\cap \mathcal{D}_{e}}\big] R_{e}(z) \\
&\quad\quad\quad\quad\quad\quad\quad- R_{e}(z) \big[\mathcal{H}_{e}, S\mathbbm{1}_{\Omega_{R}\cap \mathcal{D}_{e}}\big] R_{e}(z) \big[\mathcal{H}_{e},\Lambda_2\big] R_{e}(z)
\Big] \\
&=\int_{\mathbb{C}}\frac{\partial \tilde{\rho}}{\partial \overline{z}} \Sigma_{e}^{\Lambda_2, S\mathbbm{1}_{\Omega_{R}\cap \mathcal{D}_{e}}}(z) dm(z).
\end{aligned}
\end{equation*}
we will avoid detailed technical arguments and instead present a streamlined outline of the proof, as each step closely parallels arguments already presented in the previous sections. First, we note that both $R_{e}(z) \big[\mathcal{H}_{e},\Lambda_2\big] R_{e}(z) \big[\mathcal{H}_{e}, S\mathbbm{1}_{\Omega_{R}\cap \mathcal{D}_{e}}\big] R_{e}(z)$ and $R_{e}(z) \big[\mathcal{H}_{e},S\mathbbm{1}_{\Omega_{R}\cap \mathcal{D}_{e}}\big] R_{e}(z) \big[\mathcal{H}_{e}, \Lambda_2\big] R_{e}(z)$ are trace-class for $\text{Im }z\neq 0$ because the commutator $\big[\mathcal{H}_{e},\Lambda_2\big]$ is localized in $x_2$ and the indicator function $\mathbbm{1}_{\Omega_{R}\cap \mathcal{D}_{e}}$ is localized in $x_1$. Moreover, the trace-norm of $\Sigma_{e}^{\Lambda_2, S\mathbbm{1}_{\Omega_{R}\cap \mathcal{D}_{e}}}(z)$ diverges at worst with polynomial rate as $|\text{Im }z|\to 0$, analogous to Step 1 of Proposition \ref{prop_vanish_bulk_torque_correlation}. Thus, by the almost-analyticity of $\tilde{\rho}$, $\Sigma_{e}^{\Lambda_2, S\mathbbm{1}_{\Omega_{R}\cap \mathcal{D}_{e}}}$ is trace-class, and we can interchange the trace and complex integral to obtain
\begin{equation} \label{eq_bic_proof_14}
\begin{aligned}
Tr_{a_1^{com},\mathcal{D}_{e}}^{pv,1}(\Sigma_{e}^{\Lambda_2,S\mathbbm{1}_{\Omega_{R}\cap \mathcal{D}_{e}}})
&=Tr_{\mathcal{D}_{e}}(\Sigma_{e}^{\Lambda_2,S\mathbbm{1}_{\Omega_{R}\cap \mathcal{D}_{e}}})
=\int_{\mathbb{C}}\frac{\partial \tilde{\rho}}{\partial \overline{z}} Tr_{\mathcal{D}_{e}}\big(\Sigma_{e}^{\Lambda_2, S\mathbbm{1}_{\Omega_{R}\cap \mathcal{D}_{e}}}(z) \big) dm(z) \\
&=\int_{\mathbb{C}}\frac{\partial \tilde{\rho}}{\partial \overline{z}} Tr_{\mathcal{D}_{e}}\Big(  R_{e}(z) \big[\mathcal{H}_{e},\Lambda_2\big] R_{e}(z) \big[\mathcal{H}_{e}, S\mathbbm{1}_{\Omega_{R}\cap \mathcal{D}_{e}}\big] R_{e}(z) \Big) dm(z) \\
&\quad -\int_{\mathbb{C}}\frac{\partial \tilde{\rho}}{\partial \overline{z}} Tr_{\mathcal{D}_{e}}\Big( R_{e}(z) \big[\mathcal{H}_{e}, S\mathbbm{1}_{\Omega_{R}\cap \mathcal{D}_{e}}\big] R_{e}(z) \big[\mathcal{H}_{e},\Lambda_2\big] R_{e}(z) \Big) dm(z) .
\end{aligned}
\end{equation}
With some algebraic manipulation, one sees
\begin{equation*}
\begin{aligned}
Tr_{\mathcal{D}_{e}}\Big(  R_{e}(z) \big[\mathcal{H}_{e},\Lambda_2\big] R_{e}(z) \big[\mathcal{H}_{e}, S\mathbbm{1}_{\Omega_{R}\cap \mathcal{D}_{e}}\big] R_{e}(z) \Big)
&= -Tr_{\mathcal{D}_{e}}\Big(  R_{e}(z) \big[\mathcal{H}_{e},\Lambda_2\big] \big[R_{e}(z), S\mathbbm{1}_{\Omega_{R}\cap \mathcal{D}_{e}}\big] \Big) \\
&=- Tr_{\mathcal{D}_{e}}\Big(  R_{e}(z) \big[\mathcal{H}_{e},\Lambda_2\big] R_{e}(z) S\mathbbm{1}_{\Omega_{R}\cap \mathcal{D}_{e}} \Big) \\
&\quad + Tr_{\mathcal{D}_{e}}\Big(  R_{e}(z) \big[\mathcal{H}_{e},\Lambda_2\big]  S\mathbbm{1}_{\Omega_{R}\cap \mathcal{D}_{e}} R_{e}(z) \Big) \\
&=Tr_{\mathcal{D}_{e}}\Big( \big[R_{e}(z),\Lambda_2\big] S\mathbbm{1}_{\Omega_{R}\cap \mathcal{D}_{e}} \Big) \\
&\quad + Tr_{\mathcal{D}_{e}}\Big(  R_{e}^2(z) \big[\mathcal{H}_{e},\Lambda_2\big]  S\mathbbm{1}_{\Omega_{R}\cap \mathcal{D}_{e}} \Big), \\
\end{aligned}
\end{equation*}
where the cyclicity is applied to derive the last equality. The first trace on the right side vanishes, following a direct calculation in the position basis (analogous to Step 3 of Section 4; the key is that the spin operator $S$ commutes with position operators). Hence, with a similar manipulation on the second term of \eqref{eq_bic_proof_14}, we obtain
\begin{equation*}
\begin{aligned}
Tr_{a_1^{com},\mathcal{D}_{e}}^{pv,1}(\Sigma_{e}^{\Lambda_2,S\mathbbm{1}_{\Omega_{R}\cap \mathcal{D}_{e}}})
&=\int_{\mathbb{C}}\frac{\partial \tilde{\rho}}{\partial \overline{z}} \Big[Tr_{\mathcal{D}_{e}}\Big(  R_{e}^2(z) \big[\mathcal{H}_{e},\Lambda_2\big]  S\mathbbm{1}_{\Omega_{R}\cap \mathcal{D}_{e}} \Big) 
+ Tr_{\mathcal{D}_{e}}\Big( S\mathbbm{1}_{\Omega_{R}\cap \mathcal{D}_{e}} \big[\mathcal{H}_{e},\Lambda_2\big]  R_{e}^2(z)\Big) \Big]dm(z) \\
&=\int_{\mathbb{C}}\frac{\partial \tilde{\rho}}{\partial \overline{z}} \Big[Tr_{\mathcal{D}_{e}}\Big( \big[\mathcal{H}_{e},\Lambda_2\big]  S\mathbbm{1}_{\Omega_{R}\cap \mathcal{D}_{e}}R_{e}^2(z) \Big) 
+ Tr_{\mathcal{D}_{e}}\Big( S\mathbbm{1}_{\Omega_{R}\cap \mathcal{D}_{e}} \big[\mathcal{H}_{e},\Lambda_2\big]  R_{e}^2(z)\Big) \Big]dm(z) \\
&=\int_{\mathbb{C}}\frac{\partial \tilde{\rho}}{\partial \overline{z}} Tr_{\mathcal{D}_{e}}\Big( \big\{\big[\mathcal{H}_{e},\Lambda_2\big],  S\mathbbm{1}_{\Omega_{R}\cap \mathcal{D}_{e}}\big\}R_{e}^2(z) \Big) dm(z).
\end{aligned}
\end{equation*}
Noting that $\partial_{z}R_{e}(z)=-R_{e}^2(z)$, we have
\begin{equation*}
\begin{aligned}
Tr_{a_1^{com},\mathcal{D}_{e}}^{pv,1}(\Sigma_{e}^{\Lambda_2,S\mathbbm{1}_{\Omega_{R}\cap \mathcal{D}_{e}}})
&=-\int_{\mathbb{C}}\frac{\partial \tilde{\rho}}{\partial \overline{z}} Tr_{\mathcal{D}_{e}}\Big( \big\{\big[\mathcal{H}_{e},\Lambda_2\big] , S\mathbbm{1}_{\Omega_{R}\cap \mathcal{D}_{e}}\big\}\partial_{z} R_{e}(z) \Big) dm(z) \\
&=\int_{\mathbb{C}}\frac{\partial^2 \tilde{\rho}}{\partial z\partial \overline{z}} Tr_{\mathcal{D}_{e}}\Big( \big\{\big[\mathcal{H}_{e},\Lambda_2\big] , S\mathbbm{1}_{\Omega_{R}\cap \mathcal{D}_{e}}\big\} R_{e}(z) \Big) dm(z),
\end{aligned}
\end{equation*}
where the integral by parts with respect to $z$ is applied in the last equality. Hence, by the Hellfer-Sjöstrand formula, we conclude that
\begin{equation*}
Tr_{a_1^{com},\mathcal{D}_{e}}^{pv,1}(\Sigma_{e}^{\Lambda_2,S\mathbbm{1}_{\Omega_{R}\cap \mathcal{D}_{e}}})
=i\pi Tr_{\mathcal{D}_{e}}\Big( \big\{\big[\mathcal{H}_{e},\Lambda_2\big],  S\mathbbm{1}_{\Omega_{R}\cap \mathcal{D}_{e}}\big\} \rho^{\prime}(H_e) \Big) .
\end{equation*}
This, combined with the following identity
\begin{equation*}
\lim_{R\to\infty} Tr_{\mathcal{D}_{e}}\Big( \big\{\big[\mathcal{H}_{e},\Lambda_2\big] , S\mathbbm{1}_{\Omega_{R}^{c}\cap \Omega_{+}\cap \mathcal{D}_{e}} \big\}\rho^{\prime}(H_e) \Big)
=\lim_{R\to\infty} Tr_{\mathcal{D}_{e}}\Big( \big\{\big[\mathcal{H}_{e},\Lambda_2\big] , S\mathbbm{1}_{\Omega_{R}^{c}\cap \Omega_{-}\cap \mathcal{D}_{e}}\big\} \rho^{\prime}(H_e) \Big)
=0 ,
\end{equation*}
completes the proof of \eqref{eq_bic_proof_13}. Physically, the underlying idea behind this identity is that the longitudinal spin-drift transport vanishes when measured sufficiently deep within the bulk, since the bulk medium is insulating. Mathematically, this argument follows similarly to Step 2, by observing that the operator $\rho^{\prime}(H_e)$ can be rewritten as follows which is localized near the interface
$$
\rho^{\prime}(H_e)=\rho^{\prime}(H_e)-\rho^{\prime}(H_\pm),
$$ 
($\rho^{\prime}(H_\pm)=0$ since $\rho^{\prime}$ is supported within the spectral gap) and the indicator 
function $\mathbbm{1}_{\Omega_{R}^{c}\cap \Omega_{\pm}\cap \mathcal{D}_{e}}$ decays to zero as $R\to \infty$.
\end{proof}

\footnotesize
\bibliographystyle{plain}
\bibliography{ref}

\end{document}